\documentclass[11pt]{article}
 
\usepackage[a4paper,left=17mm,top=25mm,right=17mm,bottom=25mm]{geometry}

\usepackage{authblk}
\usepackage[numbers, sort&compress]{natbib}
\usepackage[margin=30pt,font=small,labelfont=bf]{caption}
\usepackage{amsmath,amsopn,amsthm,amssymb}
\usepackage{thmtools,thm-restate}
\usepackage{fontawesome}

\usepackage{graphicx}
\usepackage{array} 
\usepackage{multirow} 
\usepackage{newtxtext}
\usepackage{newtxmath}
\usepackage[mathscr]{euscript}
\usepackage{mathtools}
\usepackage{float}
\usepackage{enumitem}
\setlist[enumerate,1]{label={(\arabic*)}} 
\usepackage{xspace}
\usepackage{scalerel}
\usepackage{algorithm} 
\usepackage{booktabs} 

\usepackage[
 	colorlinks=true,
	urlcolor=black,
	linkcolor=blue,
    citecolor=blue
]{hyperref}

\usepackage{booktabs}
\usepackage{multirow}
\usepackage{graphicx}
\usepackage{array}

\usepackage{algorithmicx}
\usepackage{algorithm} 
\usepackage[noend]{algpseudocode}
\algrenewcommand\algorithmicrequire{\textbf{Input:}}
\algrenewcommand\algorithmicensure{\textbf{Output:}}

\usepackage{tabularx}
\makeatletter
\newcommand{\multiline}[1]{%
  \begin{tabularx}{\dimexpr\linewidth-\ALG@thistlm}[t]{@{}X@{}}
    #1
  \end{tabularx}
}
\makeatother

\newtheorem{theorem}{Theorem}[section]

\newtheorem{lemma}[theorem]{Lemma} 
\newtheorem{corollary}[theorem]{Corollary}
\newtheorem{proposition}[theorem]{Proposition}
\newtheorem{definition}[theorem]{Definition}
\newtheorem{observation}[theorem]{Observation}

\newtheorem{remark}{Remark}
\newtheorem*{problem}{Problem}
\newcommand{\PROBLEM}[1]{\texttt{#1}}

\newcommand{\tc}{\ensuremath{\operatorname{tc}}}
\newcommand{\lca}{\ensuremath{\operatorname{lca}}}
\newcommand{\LCA}{\ensuremath{\operatorname{LCA}}}
\newcommand{\RF}{\texttt{RF}}
\newcommand{\srel}{\blacktriangleleft}
\newcommand{\rel}{\trianglelefteq}
\newcommand{\axiom}[1]{\textnormal{\textbf{(#1)}}}
\newcommand{\pairs}{\mathbb{P}_2}
\newcommand{\support}{\ensuremath{\operatorname{supp}}}
\newcommand{\atsupp}{\ensuremath{\operatorname{at-supp}}}
\newcommand{\tsupp}{\ensuremath{\operatorname{t-supp}}}
\newcommand{\Fcl}{\ensuremath{\operatorname{cl}_F}}
\newcommand{\cl}{\ensuremath{\operatorname{cl}}}
\newcommand{\FR}{F_{|R}}
\newcommand{\RR}{\mathcal{R}}
\newcommand{\FF}{\mathcal{F}}
\newcommand{\XR}{X_{\RR}}
\newcommand{\XRF}{X_{\RR,\FF}}
\newcommand{\FFRR}{\FF_{|\RR}}
\newcommand{\GG}{\mathscr{G}}
\newcommand{\Hasse}{\mathscr{H}}
\newcommand{\restr}{|^{\neq}_{xyz}}
\newcommand{\relrestr}{\mid^{\neq}_{xyz}}
\newcommand{\Rext}{R^{\textup{ext}}}
\newcommand{\anchor}{\text{\normalfont\tiny\faAnchor}}

\DeclareMathOperator{\indeg}{indeg}
\DeclareMathOperator{\outdeg}{outdeg}

\providecommand{\keywords}[1]{\textbf{\textit{Keywords: }} #1}
\providecommand{\MSC}[1]{\textbf{\textit{MSC: }} #1}

\title{Novel Triple-Based Problems for the Construction of Phylogenetic Networks via Least Common Ancestors}

\author[ ]{Patricia A.\ Ebert} 

\author[ ]{Anna Lindeberg}

\author[*]{Marc Hellmuth} 

\affil[ ]{Department of Mathematics, Faculty of Science,
  Stockholm University, SE-10691 Stockholm, Sweden} 

\affil[*]{corresponding author}

\date{\ }

\graphicspath{{./}{Final-Fig-Triples/}}

\begin{document}
\sloppy

\maketitle

\abstract{ 
Evolutionary histories are often represented by rooted phylogenetic networks, whose leaves
correspond to extant taxa and whose internal vertices represent ancestral lineages. Since such
histories must usually be inferred from incomplete data, in particular from genomic sequences of
present-day taxa, one often obtains only local information about relative evolutionary proximity.
For instance, sequence data may suggest that two taxa $x$ and $y$ are more closely related to each
other than either is to a third taxon $z$. This information is classically encoded by a rooted
triple $xy|z$.

In this paper, we study rooted triples in phylogenetic networks under an ancestor-based
interpretation: $xy|z$ is displayed if the unique least common ancestor (LCA) of $x$ and $y$ lies
strictly below the unique LCA of $x$ and $z$, respectively of $y$ and $z$, and the latter two LCAs
coincide. We also introduce anchored triples $\underline{x}y|z$, which retain only the asymmetric
comparison that the LCA of $x$ and $y$ lies below the LCA of $x$ and $z$. This relaxation is natural
in networks, where different pairwise ancestral relationships need not behave as they do in trees.

We consider several variants of consistency problems for ordinary and anchored triples, both with
and without forbidden triples. Somewhat surprisingly, these ancestor-based consistency questions for
triples in phylogenetic networks do not appear to have been addressed before despite their direct
biological interpretation and the fact that such constraints can be inferred naturally from genomic
sequence data. By translating these questions into realization problems for required and forbidden
LCA-constraints, we show that all resulting problems can be solved in polynomial time. Moreover,
whenever a solution exists, a suitable realizing DAG and phylogenetic network can be constructed
within the same time bound.
}

\smallskip
\noindent
\keywords{DAG; phylogenetic super-network; rooted, anchored and forbidden triples; triplets;
LCA-constraints; consistency problems; poset; polynomial-time algorithms}

\smallskip
\noindent
\MSC{92-08, 92D15, 68R10, 06A07}

\section{Introduction}

Rooted triples play a central role in phylogenetics as local constraints describing the relative
evolutionary proximity of three taxa. To be more precise, a rooted triple (also known as a rooted
triplet) $xy|z$ expresses that taxa $x$ and $y$ are more closely related to each other than either
is to $z$. For rooted phylogenetic trees, such local information is particularly powerful: the
classical BUILD algorithm of Aho et al.~\cite{Aho:81} decides in polynomial time whether there
exists a tree that displays a given set of rooted triples and, if so, constructs such a tree.

However, many evolutionary histories are not adequately described by trees. Processes such as
hybridization or horizontal gene transfer create reticulate patterns of ancestry, in which lineages
may merge as well as split. Phylogenetic networks, and more generally directed acyclic graphs
(DAGs), provide a natural framework for representing such non-tree-like evolution \cite{HRS:11}.
Consequently, it is important to understand when local triple information can be combined into a DAG
or network rather than into a tree.

For the discussion below, we require the following basic notation, further detailed in
Section~\ref{sec:prelim}. For two distinct vertices $v$ and $w$ of a DAG or network $G$, we write $w
\prec_G v$ if $w$ is a descendant of $v$ in $G$. A least common ancestor (LCA) of two taxa $x, y$ is
a vertex $v$ of $G$ that is an ancestor of both $x$ and $y$, and such that no proper descendant of
$v$ has this property. We write $\lca_G(xy)$ whenever the LCA of $x$ and $y$ is uniquely determined
in $G$ and say that $\lca_G(xy)$ is well-defined. 

A (rooted) triple $xy|z$ is a binary tree $t$ on three leaves $x,y$, and $z$ and with two internal
vertices such that
\[
\lca_t(xy) \prec_t \lca_t(xz) = \lca_t(yz).
\]
We now specify when a tree, or more generally a DAG, displays the information encoded by a triple.
There are several natural ways to interpret such information in a DAG setting. Here, we consider the
following two notions.

\begin{definition}
Let  $G$ be a DAG with leaf set $X$ and  $x,y,z\in X$. 
\begin{enumerate}
\item[\axiom{T1}]
A triple $xy|z$ is \emph{displayed} by $G$, if the least common ancestors $\lca_G(xy)$, $\lca_G(xz)$,
and $\lca_G(yz)$ are well-defined and
\[
\lca_G(xy) \prec_G \lca_G(xz) = \lca_G(yz).
\]

\item[\axiom{T2}] 
A triple $xy|z$ is \emph{topologically-displayed} by $G$, if there exist distinct vertices $u$ and $v$ 
in $G$ and pairwise internally vertex-disjoint directed paths
\[
v \leadsto u, \qquad
u \leadsto x, \qquad
u \leadsto y, \qquad
\text{ and } \qquad 
v \leadsto z .
\]
\end{enumerate}
\end{definition}

For rooted trees $T$, the two notions \axiom{T1} and \axiom{T2} are equivalent, i.e., $xy|z$ is
displayed by $T$ if and only if $xy|z$ is topologically-displayed by $T$. For DAGs or phylogenetic
networks, however, these two notions are, in general not equivalent. For example, consider the
networks $N_1$, $N_2$, and $N_3$ in Figure~\ref{fig:tiplesVSlca} that topologically-display several
triples but do not display them. In particular, in DAGs $G$ it holds that 
\[
\text{\axiom{T1}} \;\Longrightarrow\; \text{\axiom{T2}},
\]
by putting $u = \lca_G(xy)$ and $v = \lca_G(xz)$ for each triple $xy|z$ displayed by $G$; cf.\
Lemma~\ref{lem:T1=>T2}. However, the converse does not hold in general. Indeed, networks may satisfy
\axiom{T2} even when the relevant least common ancestors are not uniquely defined.

\begin{figure}[t]
    \centering
    \includegraphics[width=0.8\textwidth]{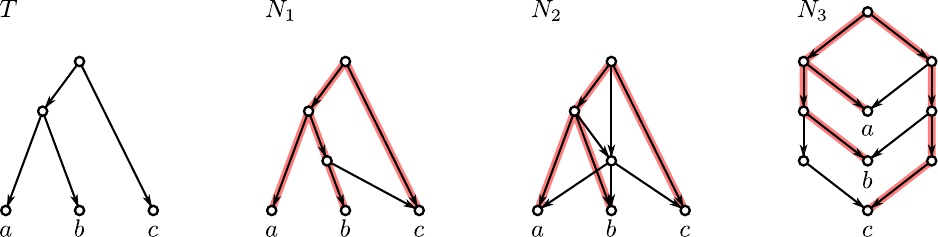}
    \caption{Shown are four networks $T$, $N_1$, $N_2$, and $N_3$ with leaf set $X=\{a,b,c\}$. The
    phylogenetic tree $T$ displays and topologically-displays the triple $ab|c$. All three networks
    $N_1$, $N_2$, and $N_3$ topologically-display the triple $ab|c$ according to \axiom{T2}
    (highlighted by the shaded red arcs) but do not display the triple $ab|c$ according to
    \axiom{T1}. In $N_1$, we have $\lca_{N_1}(bc)\prec_{N_1} \lca_{N_1}(ab)=\lca_{N_1}(ac)$. Hence,
    instead of $ab|c$, $N_1$ displays the triple $bc|a$ according to \axiom{T1}. In $N_2$, we have
    $\lca_{N_2}(ab)=\lca_{N_2}(ac)=\lca_{N_2}(bc)$. Hence, none of the triples $ab|c$, $ac|b$, and
    $bc|a$ is displayed in $N_2$ according to \axiom{T1}. The network $N_3$ topologically-displays
    all of the triples $ab|c$, $ac|b$, and $bc|a$. However, none of the LCAs $\lca_{N_3}(ab)$,
    $\lca_{N_3}(ac)$, and $\lca_{N_3}(bc)$ is well-defined in $N_3$, i.e., $N_3$ does not display
    any of the triples $ab|c$, $ac|b$, and $bc|a$ according to \axiom{T1}. This figure also appears
    as \cite[Fig~13]{LAMSH:25}.
    }
    \label{fig:tiplesVSlca}
\end{figure}

To the best of our knowledge, all algorithmic work on triples and phylogenetic networks is based on
the notion of \emph{topological-display} according to Condition~\axiom{T2}. In general, for every
set $\RR$ of rooted triples, there exists a network that topologically-displays all triples in
$\RR$. A simple example is the network $N_3$ in Figure~\ref{fig:tiplesVSlca}; see also
\cite{Huson:11,JANSSON200660} for further discussion. However, such networks are not very
informative, since they may topologically-display all three possible rooted triples for any given
set of three leaves. Consequently, it is natural to seek more specific and, therefore, more
informative networks topologically-displaying all triples in $\RR$. Unfortunately, once natural
structural restrictions are imposed on the network $N$ that topologically-displays a given set of
triples -- for instance requiring $N$ to be \emph{binary} (restricting the degrees of vertices) or
to have bounded \emph{level}-$k$ (measuring the degree of ``tree-likeness'' of $N$, with level-$0$
corresponding to trees) -- the underlying problems of verifying if there is such a network that
topologically-displays all triples in $\RR$ become NP-hard \cite{JanssonEtAl2006,vanIersel2009}. On
the other hand, if strong restrictions are placed on the set of triples $\RR$, polynomial time
algorithms become possible. In particular, if $\RR$ is \emph{dense}, that is, for every set of three
distinct leaves $x,y,z$ in some ground set $X$, at least one of the triples $xy|z$, $xz|y$, or $yz|x$
belongs to $\RR$, then binary level-$k$ networks with $k\in\{0,1,2\}$ topologically-displaying all
triples in $\RR$ can be constructed in polynomial time, or one can decide that no such network
exists \cite{Aho:81,vanIersel2011,lev2-09,HuberEtAl2011}. For less restricted, non-dense triple sets
$\RR$, a variety of heuristic approaches have been proposed to construct ``simple'' or
``informative'' networks that topologically-display as many triples in $\RR$ as possible
\cite{PoormohammadiEtAl2014,HuberEtAl2011}. Moreover, problems involving \emph{forbidden triples}
have also been studied. In particular, the problem of determining whether there exists a binary tree
that does not topologically-display any forbidden triple is NP-complete \cite{Bryant1997}.
Additionally, He et al. \cite{HHJS:06} showed that it is NP-hard to construct a binary phylogenetic
network that topologically-displays all required triples but none of the forbidden ones. More
generally, given a set of forbidden rooted binary trees, the problem of constructing a binary tree
having no subtree homeomorphic to one of the forbidden trees is NP-hard \cite{NgEtAl2000}. Inferring
phylogenetic networks from triples can also be viewed from the perspective of phylogenetic
supernetworks, where smaller local objects are combined into a single network. Besides triples,
trinets~\cite{vanIersel2022,Semple2021,vanIersel2014,HuberMoulton2013}, 
binets~\cite{HuberIerselMoultonScornavaccaWu2017},
quartets~\cite{GruenewaldEtAl2013}, and general
phylogenetic trees~\cite{Willson2011,HusonEtAl2004,HDKS:04} have been studied as such building
blocks.

In contrast, the LCA-based notion \axiom{T1} has received comparatively less attention so far,
although it is closer to situations in which triples are inferred from pairwise evolutionary
proximity, for instance from genomic sequence data. Such data typically consists of alignments of
homologous sequences, from which one obtains pairwise sequence dissimilarities $d(xy)$ between genes
or genomes $x$ and $y$. Small values of $d(xy)$ indicate a high sequence similarity and are
therefore commonly interpreted as evidence that $x$ and $y$ share a more recent evolutionary history
than more dissimilar pairs. Thus, if the observed dissimilarities satisfy, for example,
\[
d(xy) < d(xz),
\]
then this suggests that $x$ and $y$ are evolutionarily closer to each other than $x$ and $z$. In a
rooted tree-like history, this proximity is naturally expressed in terms of the relative position of
least common ancestors and $d(xy) < d(xz)$ can be interpreted as evidence for the LCA-constraint
\begin{equation}\label{eq:lca-prec}
\lca_N(xy) \prec_N \lca_N(xz)
\end{equation}
which states that $x$ and $y$ have a more recent common ancestor than $x$ and $z$. 
This interpretation directly matches the LCA-based notion \axiom{T1}: a triple $xy|z$ records that
the pair $x,y$ is more closely related than either pair involving $z$, as witnessed by the relative
depths of their least common ancestors. Hence, \axiom{T1} provides a natural way to translate
pairwise proximity information, such as sequence-based dissimilarities, into local constraints on a
DAG or phylogenetic network. This line of reasoning has already proved fruitful in the construction
of phylogenetic trees from best-match relations~\cite{SchallerEtAl2021,Geiss2019,Geiss2020}.

However, the additional equality condition $\lca_N(xz)=\lca_N(yz)$ required by \axiom{T1} is often
too restrictive for networks. This motivates the following relaxation: instead of requiring equality
of the two LCAs, we only require the strict inequality as in Equation~\eqref{eq:lca-prec}. In trees
this relaxation still implies that a triple $xy|z$ is displayed, since 
\[
\lca_T(xy) \prec_T \lca_T(xz)\quad\text{if and only if}\quad
\lca_T(xy) \prec_T \lca_T(yz), 
\]  
and, in particular, $\lca_T(xz) = \lca_T(yz)$. In general DAGs or networks, however, these
(in)equalities are no longer equivalent. For example, in the network $N_2$ in
Figure~\ref{fig:anchordisplay}, it holds that $\lca_{N_2}(xy) \prec_{N_2} \lca_{N_2}(xz)$ but
$\lca_{N_2}(xy) \nprec_{N_2} \lca_{N_2}(yz)$. To capture this asymmetry, we introduce \emph{anchored
triples} $\underline{x}y|z$ that indicate that $x$ is the distinguished leaf.

\begin{definition}
Let $G$ be a DAG with leaf set $X$ and $x,y,z \in X$. An anchored triple $\underline{x}y|z$ is
\emph{\anchor-displayed} by $G$ if the least common ancestors $\lca_G(xy)$ and $\lca_G(xz)$ are
well-defined and $\lca_G(xy) \prec_G \lca_G(xz)$.
\end{definition}

As we shall see later in Lemma~\ref{lem:display-vs-anchordisplay}, a triple $xy|z$ is displayed
according to \axiom{T1} in $G$ if and only if the anchored triples $\underline{x}y|z$ and
$\underline{y}x|z$ are \anchor-displayed in $G$. However, a DAG or network $G$ may $\anchor$-display
only one of the anchored triples $\underline{x}y|z$ and $\underline{y}x|z$. For example, the
networks $N_1$ and $N_2$ in Figure~\ref{fig:anchordisplay} both \anchor-display $\underline{x}y|z$
but not $\underline{y}x|z$.

\begin{figure}[t]
    \centering
    \includegraphics[width=0.8\textwidth]{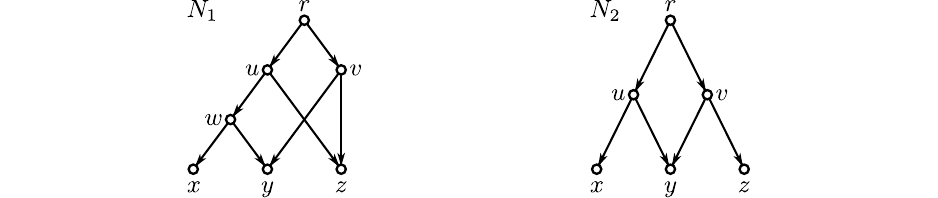}
    \caption{
    Shown are two networks $N_1$ and $N_2$ with leaf set $\{x,y,z\}$. Both \anchor-display
    $\underline{x}y|z$ since $\lca_{N_1}(xy)=w\prec_{N_1}u=\lca_{N_1}(xz)$ and
    $\lca_{N_2}(xy)=u\prec_{N_2}r=\lca_{N_2}(xz)$. In $N_1$, the leaves $y$ and $z$ have two
    distinct least common ancestors, namely $u$ and $v$. In $N_2$, $v=\lca_{N_2}(yz)$ is the unique
    least common ancestor of $y$ and $z$, but $v$ is not an ancestor of $u=\lca_{N_2}(xy)$.
    Therefore, neither $N_1$ nor $N_2$ \anchor-displays $\underline{y}x|z$. In particular, $xy|z$ is
    not displayed by $N_1$ and $N_2$. 
    }
    \label{fig:anchordisplay}
\end{figure}

In this contribution, we consider several fundamental algorithmic ``triple consistency
(\PROBLEM{TC})'' and ``anchored triple consistency (\PROBLEM{ATC})'' problems. To this end, let
$\RR$ be a set of required triples and, possibly, let $\FF$ be a set of forbidden triples, where
triples may either be rooted (``ordinary'') or anchored. Now, the task is to decide whether there
exists a phylogenetic network, respectively a DAG, that displays all required triples and displays
none of the forbidden ones. For ordinary triples, ``display'' is understood in the sense of
\axiom{T1}; for anchored triples, it is understood as $\anchor$-display. This gives the following
four basic decision problems:
\[
\begin{array}{l|l|l}
\text{Problem} & \text{Input} & 
\text{Question: Is there a phylogenetic network (resp., DAG) $G$  that \dots } \\ \hline 
\PROBLEM{TC} & \text{triple set } \RR &
\text{\dots\ displays all triples in }\RR \,?\\[0.2em]
\PROBLEM{ATC} & \text{anchored triple set } \RR &
\text{\dots\ \anchor-displays all anchored triples in }\RR \,? \\[0.2em]
\PROBLEM{TC-F} & \text{triple sets } \RR,\FF &
\text{\dots\ displays all triples in }\RR\text{ and none in }\FF \,?\\[0.2em]
\PROBLEM{ATC-F} & \text{anchored triple sets } \RR,\FF  &
\text{\dots\ \anchor-displays all anchored triples in }\RR\text{ and none in }\FF \,? \\[0.2em]
\end{array}
\]

While every set of rooted triples can be topologically-displayed by some
network~\cite{Huson:11,JANSSON200660}, this is, in general, not the case for the LCA-based notion of
display given by \axiom{T1}. Hence, already the basic consistency problems \PROBLEM{TC} and
\PROBLEM{ATC} are non-trivial.

In addition, we consider two natural strengthened versions of the forbidden-triple problems. In
\PROBLEM{TC-F}, a forbidden triple $xy|z\in\FF$ may fail to be displayed simply because one of the
LCAs $\lca_G(xy)$, $\lca_G(xz)$, or $\lca_G(yz)$ is not well-defined. In the \emph{strong} variant,
denoted \PROBLEM{strong-TC-F}, we additionally require these three LCAs to be well-defined for every
forbidden triple. Analogously, in \PROBLEM{strong-ATC-F}, we require the relevant LCAs to be
well-defined for every forbidden anchored triple. Finally, one may impose this LCA-well-definedness
requirement globally. We say that a DAG $G$ has the \emph{2-lca-property} if $\lca_G(xy)$ is
well-defined for all leaves $x$ and $y$ of $G$. The variants \PROBLEM{LCA-TC-F} and \PROBLEM{LCA-ATC-F} ask
for solutions to \PROBLEM{TC-F}, respectively, \PROBLEM{ATC-F} among phylogenetic networks or DAGs
with the 2-lca-property.

In the following, we show that all of these problems can be solved in polynomial time. To this end,
we build on recent work in which we established results for LCA-based constraints~\cite{EH:26arxiv,
LAMSH:25}. More precisely, let $R$ and $F$ be two binary relations containing pairs of the form
$(ab,cd)$ where $a,b,c,d$ are not necessarily distinct. A pair $(ab,cd) \in R$ specifies
\emph{required} pairwise LCA relationships. In particular, a phylogenetic network or, more
generally, a DAG $G$ realizes $R$ if for all $(ab,cd) \in R$ the least common ancestors $\lca_G(ab)$
and $\lca_G(cd)$ are well-defined and, depending on additional properties of $R$, either
$\lca_G(ab)\prec_G \lca_G(cd)$ or $\lca_G(ab)=\lca_G(cd)$ holds. On the other hand, the relation $F$
represents \emph{forbidden} pairwise LCA relationships: for $(ab,cd)\in F$, a DAG $G$ is required to
satisfy $\lca_G(ab)\not\prec_G\lca_G(cd)$ whenever both LCAs are well-defined. Several other ways of
defining forbidden LCA relationships exist, see \cite{EH:26arxiv}. A pair $(R,F)$ of such
relations is then realized by a DAG $G$ if $G$ realizes both $R$ and $F$. In \cite{LAMSH:25,
EH:26arxiv}, it was shown that it can be decided in polynomial time whether there exists a
phylogenetic network, respectively, DAG realizing $(R,F)$ and, if so, that such a DAG can be
constructed within the same time bound. These results form the foundation for solving the
consistency problems for rooted and anchored triples.

This paper is organized as follows. In Section~\ref{sec:prelim}, we provide the basic definitions
needed throughout this paper. In Section~\ref{sec:LCA-rel}, we start with providing some known
results about pairs $(R,F)$ of required and forbidden LCA-constraints and their realization by DAGs
or networks. In Section~\ref{sec:ATC-F}, respectively, \ref{sec:TC-F} we show that the problems
\PROBLEM{ATC} and \PROBLEM{ATC-F}, respectively,
\PROBLEM{TC} and \PROBLEM{TC-F} are solvable in polynomial time and provide phylogenetic DAGs
and networks that satisfy the corresponding requirements of the problems (if such graphs exist).
In Section~\ref{sec:strong_and_LCA_problem}, we then study further variants in which selected
LCAs, or even all pairwise LCAs, are additionally required to be well-defined. We show that these
strengthened consistency problems are also solvable in polynomial time.

%
%
\section{Preliminaries}
\label{sec:prelim}
\paragraph{Sets and Relations.}

In what follows, $X$ will always be a finite non-empty set. We denote with $\mathcal{P}(X)$ the
power set of $X$. Moreover, we let $\pairs(X)\coloneqq\{\{a,b\} \, : \, a,b\in X\}$ denote the set
system consisting of all 1- and 2-element subsets of $X$. We will often write $ab$, respectively,
$aa$ for elements $\{a,b\}$, respectively, $\{a\}$ in $\pairs(X)$. Thus, $ab=ba$ always holds. 

Given a set $A$, a  subset $R\subseteq A\times A$ is a \emph{binary relation (on $A$)}.

\begin{remark}
As all relations considered in this work are binary, we shall simply refer to them as \emph{relations}.
\end{remark}

Furthermore, we define the \emph{support $\support_R$} of a relation $R$ on $A$ as
\[
\support_R \coloneqq \{p\in A \, : \, \text{ there is some } q\in A \text{ with }  (p,q) \in R \text{ or } (q,p) \in R\},
\]   
that is, the subset of $A$ that contains precisely those $p\in A$ that are in $R$-relation with some
$q\in A$. We often consider relations $R$ on $A =\pairs(X)$ in which case we extend $\support_R$ to
obtain $\support_{R}^+ \coloneqq \support_{R}\ \cup\ \{xx\, : \, x\in X\}$.

We shall sometimes write $a\ R\ b$ instead of $(a,b)\in R$. In addition, we write $p\ R\ \dots\ R\
q$ if there is a \emph{$(p,q)$-chain in $R$}, i.e., there are some $a_0,\dots,a_k$, $k\geq 1$ such
that $a_0 = p\ R\ a_1\ R\ \dots\ R\ a_{k-1} R\ a_k=q$.

Let $R$ be a relation on $A$. Then, $R$ is \emph{asymmetric} if $(p,q) \in R$ implies $(q,p) \notin
R$ for all $p,q\in A$, and it is \emph{anti-symmetric} if $(p,q) \in R$ and $(q,p) \in R$ implies
$p=q$ for all $p,q\in A$. Moreover, $R$ is \emph{transitive}, if $(p,q) \in R$ and $(q,r) \in R$
implies $(p,r) \in R$ for all $p,q,r\in A$. For a subset $B\subseteq A$, we say that $R$ is
\emph{$B$-reflexive} if $(b,b)\in R$ for every $b\in B$. A relation $R$ on $A$ is \emph{reflexive}
if it is $A$-reflexive. If $F$ is a relation on $A$, then $R$ is \emph{$F$-conditional-symmetric} if
$(p,q) \in F\cap R$ implies $(q,p) \in R$. Moreover, a relation $R$ on $\pairs(X)$ is
\emph{cross-consistent} if $ab\in \support_R$, $(ac,xy) \in R$, and $(bd,xy) \in R$ for some $c,d\in
X$ implies that $(ab,xy) \in R$. We emphasize that, by definition, $ac=ca$ and $bd=db$ holds and,
therefore, if $(ac,xy) \in R$ and $(bd,xy) \in R$, and if $R$ is cross-consistent, then for every
$pq \in \{ab,ad,cb,cd\} \cap \support_R$, we have $(pq,xy) \in R$.

A \emph{poset} $(A, \leq)$ is a set $A$ equipped with a partial order $\leq$, i.e., a relation
$\leq$ on $A$ that is reflexive, transitive, and anti-symmetric.

A \emph{closure operator} on a set $S$ is a map $\phi \,:\,\mathcal{P}(S) \to\mathcal{P}(S) $ that
satisfies the following three properties for all $R,R'\in \mathcal{P}(S)$: \emph{Extensivity}: $R
\subseteq \phi(R)$, \emph{Monotonicity}: $R \subseteq R'$ implies $\phi(R) \subseteq \phi(R')$, and
\emph{Idempotency}: $\phi(\phi(R)) = \phi(R)$ \cite{CASPARD2003241,BS:95,SH:18}. We let $\tc(R)$
denote the \emph{transitive closure} of a relation $R$, that is, the inclusion-minimal relation that
is transitive and that contains $R$ (see e.g. \cite[p.39]{matouvsek2009invitation}). It is
straightforward to verify that $\tc$ is indeed a closure operator on $S = A\times A$ for all
relations $R$ on $A$.
We further define for two relations $R$ and $F$ on $\pairs(X)$, the relation 
\[
\FR \coloneqq \{(ab,cd) \in F \, : \, ab,cd \in \support^+_R\}. 
\]

\paragraph{DAGs and Networks.}

A \emph{directed graph $G=(V,E)$} is a pair with non-empty vertex set $V(G)\coloneqq V$ and arc set
$E(G) \coloneqq E \subseteq V\times V$. We put $\outdeg_G(v)\coloneqq\left|\left\{u\in V \colon
(v,u)\in E\right\}\right|$ and $\indeg_G(v)\coloneqq\left|\left\{u\in V \,:\, (u,v)\in
E\right\}\right|$ to denote the \emph{out-degree} and \emph{in-degree} of a vertex $v$,
respectively. A vertex $v$ with $\outdeg_G(v)=0$ is a \emph{leaf} of $G$ and a vertex $v$ with
$\indeg_G(v)=0$ is a \emph{root} of $G$. We write $L(G)$ for the set of leaves of $G$. If $G$ is a
DAG whose leaf set $L(G)$ is $X$, then $G$ is a \emph{DAG on $X$}. A directed graph $G$ is
\emph{phylogenetic} if it does not contain a vertex $v$ such that $\outdeg_G(v)=1$ and
$\indeg_G(v)\leq 1$. We sometimes use $u\to v$ to denote the arc $(u,v)\in E(G)$ and $u\leadsto v$
to denote a directed $uv$-path in $G$. Note that we allow ``trivial'' paths $u\leadsto u$ consisting
just of the single vertex $u$.

Directed graphs $G$ without directed cycles are called \emph{directed acyclic graphs (DAGs)}
\cite{book:digraph}. Let $G$ be a DAG. If $u\to v$ is an arc in $G$, then we call $v$ a \emph{child}
of $u$ and $u$ a \emph{parent} of $v$. We write $v\preceq_G u$ if and only if there is a directed
$uv$-path $u\leadsto v$ in $G$ and call, in this case, $v$ a \emph{descendant} of $u$ and $u$ an
\emph{ancestor} of $v$. If $v\preceq_G u$ and $v\neq u$, we write $v\prec_G u$. 

A \emph{network} is a DAG with a single root. A \emph{(rooted) tree} is a network that does not
contain vertices $v$ with $\indeg_G(v)>1$. A tree is \emph{binary}, if each non-leaf vertex has exactly
two children.

For a poset $({Q},\leq)$, the \emph{Hasse diagram} $\Hasse(Q,\leq)$ is the DAG with vertex set $Q$
and arcs $(A,B)$ if (i) $B\leq A$ and $A\neq B$ and (ii) there is no $C\in Q$ with $B\leq C\leq A$
and $C\neq A,B$. 

\paragraph{Least Common Ancestors and Triples.}
For a given DAG $G$ on $X$ and $x,y \in X$, a vertex $v\in V(G)$ is a \emph{common ancestor of $x$
and $y$} if $v$ is an ancestor of both $x$ and $y$. Moreover, $v$ is a \emph{least common ancestor}
(LCA) of $x$ and $y$ if $v$ is a $\preceq_G$-minimal common ancestor of $x$ and $y$. The set
$\LCA_G(xy)$ comprises all LCAs of $x$ and $y$ in $G$. In a network $N$ on $X$, the unique root is a
common ancestor for all $x,y \in X$ and, therefore, $\LCA_N(xy)\neq\emptyset$. Moreover, we are
interested in DAGs where $|\LCA_G(xy)|=1$ holds for certain $x,y\in X$. For simplicity, we will
write $\lca_G(xy)=v$ in case that $\LCA_G(xy)=\{v\}$ and say that \emph{$\lca_G(xy)$ is
well-defined}. A DAG $G$ on $X$ has the \emph{2-lca-property} if the LCA $\lca_G(xy)$ is
well-defined for all $x,y \in X$. 

For a given DAG $G$ on $X$, we further define the relation $\rel_G$ on $\pairs(X)$ as
\[\rel_G \coloneqq \{(ab,xy) \, : \, \lca_G(ab), \lca_G(xy) \text{ are well-defined and }
\lca_G(ab)\preceq_G \lca_G(xy)\}.\]

For later reference, we provide the following simple but useful result. 

\begin{lemma}\label{lem:xyz-lca(yz)} 
  Let $G$ be a DAG on $X$ such that $\lca_G(xy)$, $\lca_G(xz)$, and $\lca_G(yz)$ are well-defined
  for three leaves $x,y,z\in X$. If $\lca_G(xy)\prec_G\lca_G(xz)$, then
  $\lca_G(yz)\preceq_G\lca_G(xz)$.
\end{lemma}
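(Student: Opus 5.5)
The plan is to show that $w\coloneqq\lca_G(xz)$ is a common ancestor of $y$ and $z$, and then use that $\lca_G(yz)$ is the unique LCA of $y$ and $z$ to conclude $\lca_G(yz)\preceq_G w$.

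First, $w$ is a common ancestor of $x$ and $z$, so $z\preceq_G w$. Second, put $u\coloneqq\lca_G(xy)$. By definition $y\preceq_G u$, and by assumption $u\prec_G w$. Transitivity of $\preceq_G$ (concatenating the directed paths $w\leadsto u$ and $u\leadsto y$) gives $y\preceq_G w$. Hence $w$ is a common ancestor of $y$ and $z$.

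The only step needing care is the last one: passing from ``$w$ is a common ancestor of $y,z$'' to ``$w$ lies above $\lca_G(yz)$''. I will argue it as follows. The set of common ancestors of $y$ and $z$ that are descendants of $w$ is non-empty, since it contains $w$, and it is finite. So it has a $\preceq_G$-minimal element $m$. This $m$ is $\preceq_G$-minimal among all common ancestors of $y$ and $z$. Indeed, any common ancestor $m'\prec_G m$ would satisfy $m'\preceq_G w$ and so lie in the set, contradicting the minimality of $m$. Thus $m\in\LCA_G(yz)$. Since $\lca_G(yz)$ is well-defined, $\LCA_G(yz)=\{\lca_G(yz)\}$, so $m=\lca_G(yz)$, and therefore $\lca_G(yz)\preceq_G w=\lca_G(xz)$.

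The argument uses acyclicity of $G$, so that $\preceq_G$ is a partial order and minimal elements exist in finite sets. It uses only the well-definedness of $\lca_G(yz)$; the other two LCAs are needed only to make the hypothesis meaningful.
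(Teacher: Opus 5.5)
Your proof is correct and follows the paper's argument. Both proofs use $y\preceq_G\lca_G(xy)\prec_G\lca_G(xz)$ and $z\preceq_G\lca_G(xz)$ to see that $\lca_G(xz)$ is a common ancestor of $y$ and $z$, and then conclude from uniqueness of $\lca_G(yz)$. Your minimal-element argument, showing that a unique LCA lies below every common ancestor, simply spells out what the paper calls ``the defining property of LCAs.''
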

\begin{proof}
	Let $G$ be a DAG on $X$ and suppose that $\lca_G(xy)$, $\lca_G(xz)$, and $\lca_G(yz)$ are
	well-defined for three leaves $x,y,z\in X$. Assume that $\lca_G(xy)\prec_G\lca_G(xz)$. This
	together with $y\preceq_G \lca_G(xy)$ and $z\preceq_G \lca_G(xz)$ implies that the vertex
	$\lca_G(xz)$ must be a common ancestor of $y$ and $z$. By the defining property of LCAs, we
	therefore have $\lca_G(yz)\preceq_G\lca_G(xz)$. 
\end{proof}

To recall, a (rooted) triple $t = xy|z$ is a binary tree on three distinct leaves $x$, $y$, and $z$ and with two
non-leaf vertices such that $\lca_t(xy) \prec_t \lca_t(xz) = \lca_t(yz)$. Moreover, an anchored
triple $t=\underline{x}y|z$ is an ordered tuple defined on three distinct elements $x,y,z$. We
consider here sets of rooted and anchored triples and define the set of leaves occurring in those
triple sets as follows. 
\begin{definition}\label{def:XRF}
	For a rooted triple $t=xy|z$ or anchored triple $t=\underline{x}y|z$, we put $L(t)=\{x,y,z\}$. For
	sets $\RR$ and $\FF$ of rooted or anchored triples, we define \[\XR\coloneqq \bigcup_{t\in \RR}
	L(t) \ \text{and}\ \XRF \coloneqq \bigcup_{t\in \RR \cup \FF} L(t).\]
\end{definition}

We now show that \axiom{T1} implies \axiom{T2} as mentioned in the introduction. 
\begin{lemma}\label{lem:T1=>T2}
Let $G$ be a DAG on $X$ and $x,y,z\in X$. If $xy|z$ is displayed by $G$, then $xy|z$ is
topologically-displayed by $G$.
\end{lemma}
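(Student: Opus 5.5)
We take $u\coloneqq\lca_G(xy)$ and $v\coloneqq\lca_G(xz)=\lca_G(yz)$, as suggested in the introduction, and show that \emph{any} choice of directed paths $P_x\colon u\leadsto x$, $P_y\colon u\leadsto y$, $P\colon v\leadsto u$ and $Q\colon v\leadsto z$ is pairwise internally vertex-disjoint. These paths exist because $u$ is an ancestor of $x,y$, $v$ is an ancestor of $z$, and $u\prec_G v$. In particular, $u\neq v$.

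The tool used throughout is a minimality argument, which we record first. Suppose $w$ is a common ancestor of two leaves $a,b$ and $\lca_G(ab)$ is well-defined. Since $G$ is finite and acyclic, $w$ lies above some $\preceq_G$-minimal common ancestor of $a$ and $b$. That minimal element must be the unique LCA, so $\lca_G(ab)\preceq_G w$. Hence no common ancestor of $a,b$ lies strictly below $\lca_G(ab)$. Leaves have out-degree $0$, so they never occur as internal vertices of a path; thus it suffices to consider shared vertices $w$ that differ from the start points of the paths.

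We now check the pairs of paths in turn.
\begin{itemize}
\item $P_x$ and $P_y$: a shared vertex $w\neq u$ would satisfy $w\prec_G u$ and be a common ancestor of $x$ and $y$, contradicting the minimality of $u$.
\item $P$ and $Q$: a shared vertex $w\neq v$ would be an ancestor of $u$, hence of $x$, and also of $z$. Since $w\prec_G v$, this is a common ancestor of $x$ and $z$ strictly below $\lca_G(xz)$, a contradiction.
\item $P$ and $P_x$ (and likewise $P$ and $P_y$): every vertex of $P$ satisfies $u\preceq_G \cdot$, and every vertex of $P_x$ satisfies $\cdot\preceq_G u$. By acyclicity, the only common vertex is $u$, which is an endpoint of both paths.
\item $Q$ and $P_x$ (and, symmetrically, $Q$ and $P_y$, using $v=\lca_G(yz)$): since $v\not\preceq_G u$, the vertex $v$ does not lie on $P_x$. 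So a shared vertex $w$ satisfies $w\preceq_G u\prec_G v$ and is an ancestor of both $x$ and $z$. This again contradicts $v=\lca_G(xz)$. This case also rules out $u$ lying on $Q$.
\end{itemize}

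The only delicate step is the minimality fact itself, namely that the uniqueness of the LCA forces every common ancestor to lie above it. Once this is in place, all cases reduce to the same contradiction. Therefore $u,v$ together with $P,P_x,P_y,Q$ witness \axiom{T2}.
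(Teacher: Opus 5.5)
Your proof is correct and follows essentially the same route as the paper. Both use the same choice $u=\lca_G(xy)$ and $v=\lca_G(xz)=\lca_G(yz)$, use minimality of these LCAs to rule out shared vertices between $P_x$ and $P_y$ and between $Q$ and the paths below $v$, and use acyclicity for $P$ versus $P_x$ and $P_y$; the paper merely concatenates $P$ and $P_x$ into one $vx$-path before comparing it with $Q$, and you only ever need the defining minimality of an LCA, so the uniqueness-based fact you flag as delicate is not actually required.
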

\begin{proof}
	Let $G$ be a DAG on $X$. Suppose that $xy|z$ is displayed by $G$. Hence, $x$, $y$, and $z$ are
	pairwise distinct. Moreover, $u\prec_G v$ for 
    $u\coloneqq\lca_G(xy)$ and $v\coloneqq\lca_G(xz)=\lca_G(yz)$. In particular, $u$ and $v$ are
	distinct. Since $x,y\preceq_G u$, there are directed paths $P_{ux}=u \leadsto x$ and $P_{uy} =u
	\leadsto y$. Since $u=\lca_G(xy)$, there is, by definition of LCAs, no descendant $w\neq u$ of $u$
	that is an ancestor of both $x$ and $y$. This immediately implies that $P_{ux}$ and $P_{uy}$ can
	only intersect in $u$ and are, therefore, internally vertex-disjoint. Moreover, $u,z\preceq_G v$
	implies that there are directed paths $P_{vu}=v \leadsto u$ and $P_{vz} =v \leadsto z$. Note that
	$P_{vu}$ can be extended to a $vx$-path $P_{vx}$ by joining $P_{vu}$ and $P_{ux}$. By the same
	arguments used to show that $P_{ux}$ and $P_{uy}$ are internally vertex-disjoint, $v=\lca_G(xz)$
	implies that $P_{vx}$ and $P_{vz}$ must be internally vertex-disjoint. Since $P_{vx}$ is composed
	of $P_{vu}$ and $P_{ux}$, the paths $P_{vu}$ and $P_{vz}$ as well as $P_{ux}$ and $P_{vz}$ are
	internally vertex-disjoint. By analogous argumentation, $P_{uy}$ and $P_{vz}$ are internally
	vertex-disjoint. Finally, if $P_{vu}$ and $P_{ux}$ are not internally vertex-disjoint, then there
	is a vertex $w\neq u,v,x$ that is located on $P_{vu}$ and $P_{ux}$. In this case, we have
	$u\prec_G w$ and $w\prec_G u$; a contradiction to $G$ being a DAG. Hence, $P_{vu}$ and $P_{ux}$
	are internally vertex-disjoint and, by similar arguments, $P_{vu}$ and $P_{uy}$ are internally
	vertex-disjoint. In summary, $xy|z$ is topologically-displayed by $G$.
\end{proof}

The converse of Lemma~\ref{lem:T1=>T2} is, in general, not satisfied. For example, the network $N_1$
in Figure~\ref{fig:tiplesVSlca} topologically-displays $ab|c$ but does not display $ab|c$, since
$\lca_{N_1}(bc)\prec_G \lca_{N_1}(ab)=\lca_{N_1}(ac)$. In particular, a DAG $G$ can
topologically-display several triples on the same leaf set; for instance, $ab|c$ and $bc|a$ are both
topologically-displayed by $N_1$ in Figure~\ref{fig:tiplesVSlca}. By contrast, any DAG $G$ can display
at most one triple on any fixed three-element leaf set $\{a,b,c\}$ in the LCA sense.

\paragraph{Transformation of DAGs and Networks.}

Throughout this paper, we often transform a DAG into a network and will use the following simple result.

\begin{lemma}[From DAGs to networks, {\cite[Lem~2.2]{EH:26arxiv}}]\label{lem:DAG2Network}
	Let $G$ be a DAG on $X$. Let $N$ be the directed graph obtained from $G$ by either putting
	$N\coloneqq G$ if $G$ is a network or by adding a new vertex $\rho$ to $G$ together with the arcs
	$(\rho,\rho_i)$ for all roots $\rho_1,\dots,\rho_k$, $k\geq 2$ of $G$. Then, $N$ is a network on
	$X$ such that $u\preceq_G v$ if and only if $u\preceq_N v$ for all $u,v\in V(G)$. In particular,
	if $\LCA_G(xy)\neq \emptyset$, then $\LCA_G(xy)=\LCA_N(xy)$. Hence, if $\lca_G(xy)$ is
	well-defined in $G$, then $\lca_G(xy)=\lca_N(xy)$. Moreover, if $G$ is phylogenetic, then $N$ is
	phylogenetic.
\end{lemma}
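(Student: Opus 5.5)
The construction adds at most one new vertex, so I would verify the four claims of the statement one at a time. The first is that $N$ is a network on $X$, i.e., a DAG with a single root and leaf set $X$.

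\emph{Network on $X$.} If $G$ is already a network, there is nothing to show, so assume $G$ has roots $\rho_1,\dots,\rho_k$ with $k\geq 2$.
\begin{itemize}
\item $N$ is acyclic, because $\rho$ has in-degree $0$ and therefore cannot lie on any directed cycle. Hence every directed cycle of $N$ would already be a cycle of $G$.
\item $\rho$ is the unique root. Each $\rho_i$ now has in-degree $1$, the in-degrees of all other vertices of $G$ are unchanged and positive, and $\rho$ has in-degree $0$.
\item The leaf set is still $X$. The only vertex whose out-degree changes is $\rho$, which is new and has out-degree $k\geq 2$, so it is not a leaf. The leaves of $G$ keep out-degree $0$.
\end{itemize}

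\emph{Ancestry is preserved.} I would show $u\preceq_G v \iff u\preceq_N v$ for all $u,v\in V(G)$. The forward direction is immediate because $E(G)\subseteq E(N)$. For the converse, take a directed path in $N$ between two vertices of $G$. It cannot pass through $\rho$: since $\rho$ has in-degree $0$, it could only occur as the start vertex, and the start vertex is in $V(G)$. So every arc on the path is an arc of $G$, and the path is a path in $G$.

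\emph{LCAs agree when $\LCA_G(xy)\neq\emptyset$.} The common ancestors of $x$ and $y$ in $N$ are exactly the common ancestors in $G$, together with possibly $\rho$. Since $\LCA_G(xy)\neq\emptyset$, some $w\in V(G)$ is a common ancestor. Let $\rho_i$ be a root of $G$ with $w\preceq_G\rho_i$. Then $w\prec_N\rho$, so $\rho$ is not $\preceq_N$-minimal among the common ancestors and is not an LCA. Among the remaining candidates, $\preceq_N$ and $\preceq_G$ coincide by the previous step, so the minimal elements are the same and $\LCA_N(xy)=\LCA_G(xy)$. The statement about $\lca$ follows by taking singletons.

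\emph{Phylogeneticity.} Suppose $G$ is phylogenetic. In $N$, only $\rho$ and the vertices $\rho_i$ change degree.
\begin{itemize}
\item $\rho$ has out-degree $k\geq 2$, so it cannot be a vertex with out-degree $1$.
\item Each $\rho_i$ now has in-degree $1$, so I must rule out $\outdeg(\rho_i)=1$. In $G$, the vertex $\rho_i$ had in-degree $0\leq 1$, so phylogeneticity of $G$ already forbids out-degree $1$. Its out-degree is unchanged in $N$.
\end{itemize}
So $N$ is phylogenetic.

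The only step requiring care is the LCA equality. There, the hypothesis $\LCA_G(xy)\neq\emptyset$ is exactly what excludes $\rho$ from being an LCA.
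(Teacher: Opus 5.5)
Your proof is correct: you verify each claim (network on $X$, preservation of $\preceq$, equality of LCA sets when $\LCA_G(xy)\neq\emptyset$, and phylogeneticity) by direct degree and path arguments, and you correctly identify that the hypothesis $\LCA_G(xy)\neq\emptyset$ is exactly what prevents $\rho$ from being an LCA. The paper does not prove this lemma itself but cites it from \cite[Lem~2.2]{EH:26arxiv}; your argument is the standard direct verification, in the same style as the paper's own path-based preservation arguments (e.g., in the proof of Proposition~\ref{prop:strong_ATC_F_equivalent_LCA_ATC_F}).
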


Additionally, we make frequent use of an $xy$-extension based on two distinct leaves $x,y$ of a DAG
$G$. This construction, transforms $G$ into a DAG $G'$ in which $\lca_{G'}(xy)$ is not well-defined.
Let $G$ be a DAG on $X$ and $x$ and $y$ be two distinct leaves. An \emph{$xy$-extension of $G$} is
obtained from $G$ by adding two new vertices $u,v$ to $V(G)$ and the arcs
$\{(u,x),(u,y),(v,x),(v,y)\}$ to $E(G)$. One easily verifies that no cycles are introduced by the
additional vertices and arcs in an $xy$-extension of a DAG. Moreover, by construction, both $u$ and
$v$ are parents of $x$ and $y$, and neither has any further children. Hence, both $u$ and $v$ are
LCAs of $x$ and $y$ in the $xy$-extension. Furthermore, in any $xy$-extension $G'$ of $G$, we have
$a\preceq_G b$ if and only if $a\preceq_{G'} b$ for all $a,b\in V(G)=V(G')\setminus \{u,v\}$.
Moreover, the new vertices $u$ and $v$ have only $x,y$, and respectively $u$ and $v$, as
descendants, and have no ancestors. These observations imply that $\LCA_G(ab)=\LCA_{G'}(ab)$ for all
$a,b\in X$ with ${a,b}\neq\{x,y\}$. We summarize this discussion into
\begin{observation}\label{obs:xy-extension1} Let $G$ be a DAG on $X$ and $x,y \in X$ be distinct
leaves. The $xy$-extension $G'$ of $G$ is a DAG on $X$ and it holds that 
\begin{itemize}
    \item[(i)] $|\LCA_{G'}(xy)|>1$ and, thus, $\lca_{G'}(xy)$ is not well-defined, and
    \item[(ii)] $a \preceq_G b$ if and only if $a \preceq_{G'} b$ for $a,b \in V(G)$, and
    \item[(iii)] $\LCA_G(ab) = \LCA_{G'}(ab)$ for all $a,b \in X$ with $\{a,b\} \neq \{x,y\}$. 
\end{itemize}
\end{observation}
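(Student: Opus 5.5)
To verify the three items of Observation~\ref{obs:xy-extension1}, I would first record the local structure of the new vertices. By construction, $u$ and $v$ have in-degree $0$ and their only children are $x$ and $y$.

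\emph{Acyclicity and leaf set.} A directed cycle in $G'$ would have to pass through a new vertex $w\in\{u,v\}$. But $w$ has no incoming arcs, so this is impossible, and $G'$ is a DAG because $G$ is. Adding arcs out of $u$ and $v$ does not change the out-degree of any vertex of $G$. Moreover, $u$ and $v$ have out-degree $2$. Hence $L(G')=L(G)=X$.

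\emph{Item (ii).} Every arc of $G$ is an arc of $G'$, so $a\preceq_G b$ implies $a\preceq_{G'} b$. Conversely, let $P$ be a directed $ba$-path in $G'$ with $a,b\in V(G)$. Any vertex $w\in\{u,v\}$ on $P$ must be the start of $P$, since it has no in-arcs. But the start of $P$ is $b\in V(G)$. So $P$ uses only vertices and arcs of $G$, and $a\preceq_G b$ follows.

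\emph{Item (i).} Both $u$ and $v$ are common ancestors of $x$ and $y$. The proper descendants of $u$ are exactly $x$ and $y$, and since $x\neq y$ are leaves, neither of them is an ancestor of the other. So no proper descendant of $u$ is a common ancestor of $x$ and $y$, and $u\in\LCA_{G'}(xy)$. The same argument gives $v\in\LCA_{G'}(xy)$. As $u\neq v$, we get $|\LCA_{G'}(xy)|\geq 2$.

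\emph{Item (iii).} Let $\{a,b\}\neq\{x,y\}$. The descendants of $u$ (and of $v$) are exactly $\{u,x,y\}$ (respectively $\{v,x,y\}$). If $u$ or $v$ were a common ancestor of $a$ and $b$, then $a,b\in\{x,y\}$, since $a$ and $b$ are leaves of $G$. That forces either $\{a,b\}=\{x,y\}$, which is excluded, or $a=b\in\{x,y\}$. In the latter case the leaf itself is a common ancestor strictly below $u$ and $v$, so $u,v\notin\LCA_{G'}(aa)$. Thus the common ancestors of $a,b$ lying in $\LCA_{G'}(ab)$ are vertices of $V(G)$. By (ii), the set of common ancestors in $V(G)$ and their order coincide in $G$ and $G'$. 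This gives $\LCA_G(ab)=\LCA_{G'}(ab)$; the minimality comparison may also involve $u$ or $v$, but they are never below a vertex of $G$.

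The only point needing care is this last step: a minimal element must be checked against all vertices of $G'$. This holds because $u$ and $v$ have no ancestors and so lie below no vertex of $G$.
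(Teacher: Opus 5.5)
Your proof is correct and follows essentially the same route as the paper's justification in the paragraph preceding the observation. The new vertices $u,v$ have no in-arcs, so no cycles arise and $\preceq$ on $V(G)$ is unchanged; they are parents of exactly $x$ and $y$, so both lie in $\LCA_{G'}(xy)$; and since they have no ancestors and only $x,y$ as proper descendants, all other LCA sets are unchanged. You additionally make explicit two points the paper leaves implicit: that the leaf set of $G'$ is still $X$, and that the degenerate case $a=b\in\{x,y\}$ in (iii) is harmless.
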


We group together several such extensions as follows.

\begin{definition}\label{def:FRextension}
	For a given pair $(R,F)$ of relations on $\pairs(X)$, the \emph{$\FR$-extension of a DAG $G$ on
	$X$} is obtained from $G$ by stepwise application of $xy$-extensions for all $xy \in \support_F
	\setminus \support^+_R$.   
\end{definition}

%
%

\section{Realization of LCA-constraints} 
\label{sec:LCA-rel}

We provide polynomial-time solutions for the triple-based problems in
Sections~\ref{sec:ATC-F}--\ref{sec:strong_and_LCA_problem}. To this end, we translate required and
forbidden sets of triples into suitable relations encoding LCA-constraints. The resulting relations
serve as input for realization problems, whose solutions yield solutions to the corresponding
triple-based problems. We therefore recall, in condensed form, the relevant definitions and results
on realizable relations and \RF-realizable pairs of relations introduced
in~\cite{LAMSH:25,EH:26arxiv}.

\begin{definition}[(Strict) Realization, {\cite[Def~3, Def~4]{LAMSH:25}}]\label{def:strict-real}
    Let $R$ be a relation on $\pairs(X)$ and $G$ be a DAG on $X$. Then, $R$ \emph{is strictly
    realized by} $G$ if, for all $ab, cd \in \support^+_{R}$, the LCAs $\lca_G(ab)$ and $\lca_G(cd)$
    are well-defined and the following implication holds for $G$ and $R$:
	 \begin{itemize}
      \item[\axiom{I0}] $(ab,cd) \in R$ implies that $\lca_G(ab)\prec_G\lca_G(cd)$.
   \end{itemize} 
\medskip

\noindent
	Moreover, $R$ \emph{is realized by} $G$ if, for all $ab, cd \in \support^+_{R}$, the LCAs
	$\lca_G(ab)$ and $\lca_G(cd)$ are well-defined and the following implications hold for $G$ and
	$R$: 
\begin{itemize}
    \item[\axiom{I1}] $(ab,cd)\in R$ and $(cd,ab) \notin \tc(R)$ implies that $\lca_G(ab) \prec_G \lca_G(cd)$. 
    \item[\axiom{I2}] $(ab,cd)\in R$ and $(cd,ab) \in \tc(R)$ implies that $\lca_G(ab) = \lca_G(cd)$.  
\end{itemize}
In this case, we say that $R$ is \emph{realizable}. 
\end{definition}

Note that a relation $R$ is strictly realized by $G$ if and only if $R$ is realized by $G$ and
$\tc(R)$ is asymmetric, i.e., strict realizability is just a special case of realizability
\cite[Lem~5]{LAMSH:25}.

We further consider pairs $(R,F)$ of relations on $\pairs(X)$, i.e., both $R$ and $F$ are relations
on $\pairs(X)$. In what follows, given such a pair $(R,F)$, the relation $R$ represents ``required
(\texttt{R})'' LCA-constraints, while $F$ represents ``forbidden (\texttt{F})'' LCA-constraints,
which motivates the following definition of realization.

\begin{definition}[{(Strict) \RF-Realization, \cite[Def~4.1, Def~4.20]{EH:26arxiv}}]
Let $(R, F)$ be a pair of relations on $\pairs(X)$ and $G$ be a DAG on $X$. Then, $(R, F)$ \emph{is
\RF-realized} (resp., \emph{strictly \RF-realized}) by $G$, if $R$ is realized (resp., strictly
realized) by $G$ and the following implication holds for $G$ and $F$: 
\begin{itemize}
    \item[\axiom{F}] If $(ab,cd) \in F$ and $\lca_G(ab)$ and $\lca_G(cd)$ are well-defined, 
                     then $\lca_G(ab) \nprec_G \lca_G(cd)$.
\end{itemize}
In this case, we say that $(R,F)$ is \emph{\RF-realizable} (resp., \emph{strictly \RF-realizable)}.
\end{definition}

Central to this theory are the closure operators $\cl$ and $\Fcl$. To be more precise, let $R$ and
$F$ be relations on $\pairs(X)$. We define $\mathfrak{R}_{R}$ as the set of all relations $R'$ on
$\pairs(X)$ that are $\support^+_{R}$-reflexive, transitive, cross-consistent, and satisfy
$R\subseteq R'$. Moreover, $\mathfrak{R}_{R,F} \subseteq \mathfrak{R}_{R}$ is the set that contains
all relations in $\mathfrak{R}_{R}$ that are, in addition, $F$-conditional-symmetric. This, in turn,
gives rise to the following intersections
\[\cl(R) \coloneqq \bigcap_{R' \in \mathfrak{R}_{R}} R' \quad \text{ and } \quad
\Fcl(R) \coloneqq \bigcap_{R' \in \mathfrak{R}_{R,F}} R'.
\]

Although abstractly defined, the closure $\cl(R)$ has a natural semantic interpretation when $R$ is
realizable: it is the smallest relation extending $R$ that is realized by every DAG realizing $R$.
In other words, $\cl(R)$ consists precisely of the LCA-constraints that are forced by realizability
of $R$. Analogously, for pairs $(R,F)$, the closure $\Fcl(R)$ captures the constraints forced by
\RF-realizability of $(R,F)$. 
Readers interested in further intuition and examples are referred to
\cite{LAMSH:25,EH:26arxiv}.

As shown in \cite[Prop~14]{LAMSH:25} and \cite[Prop~4.8]{EH:26arxiv}, $\cl$ and $\Fcl$ are closure
operators, i.e., they satisfy the classical closure axioms \emph{extensivity}, \emph{monotonicity},
and \emph{idempotency}. Moreover, one easily verifies that $\cl(R)=\cl_{\emptyset}(R)\subseteq
\Fcl(R)$. In addition, there is a simple rule set to efficiently compute $\cl(R)$ and $\Fcl(R)$
\cite[Thm~17]{LAMSH:25} and \cite[Thm~4.7]{EH:26arxiv}. We collect these results in

\begin{theorem}[{\cite{LAMSH:25,EH:26arxiv}}]\label{thm:cl-polytime}
For any relations $R$ and $F$ on $\pairs(X)$, $\cl$ and $\Fcl$ are closure operators and $\cl(R)$
and $\Fcl(R)$ can be determined in polynomial time in $|X|$.  
\end{theorem}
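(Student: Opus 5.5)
The plan is to treat $\cl$ and $\Fcl$ together, since $\cl=\cl_{\emptyset}$. Throughout, fix $S\coloneqq\support^+_R$. I would first show that the family $\mathfrak{R}_{R,F}$ is non-empty and closed under arbitrary intersections. That gives $\Fcl(R)\in\mathfrak{R}_{R,F}$, so $\Fcl(R)$ is the unique inclusion-minimal member of the family. Polynomial-time computability should then follow from a monotone fixpoint iteration.

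\emph{Closure properties of the family.} Non-emptiness is witnessed by $S\times S$. This relation contains $R$, is $S$-reflexive and transitive. It is cross-consistent, because whenever $ab\in\support_{S\times S}$ and $(ac,xy)\in S\times S$, both $ab$ and $xy$ lie in $S$. It is $F$-conditional-symmetric, because $S\times S$ is symmetric.

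Closure under intersection is checked axiom by axiom for $R^*=\bigcap_i R_i$ with all $R_i\in\mathfrak{R}_{R,F}$:
\begin{itemize}
\item containment of $R$, $S$-reflexivity and transitivity are preserved by intersections;
\item for $F$-conditional symmetry, $(p,q)\in F\cap R^*$ gives $(p,q)\in F\cap R_i$ for every $i$, hence $(q,p)\in R_i$ for every $i$;
\item for cross-consistency, the only subtle point is the premise $ab\in\support_{R^*}$; since $R^*\subseteq R_i$ we have $\support_{R^*}\subseteq\support_{R_i}$, so each $R_i$ yields $(ab,xy)\in R_i$.
\end{itemize}
Hence $\Fcl(R)\in\mathfrak{R}_{R,F}$ and $\Fcl(R)\subseteq S\times S$. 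In particular, $\support^+_{\Fcl(R)}=S$.

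\emph{Closure-operator axioms.} Extensivity is immediate. For monotonicity, let $R\subseteq R'$. Then $\support^+_R\subseteq\support^+_{R'}$, so every member of $\mathfrak{R}_{R',F}$ also lies in $\mathfrak{R}_{R,F}$, and therefore $\Fcl(R)\subseteq\Fcl(R')$. For idempotency, I will use $\support^+_{\Fcl(R)}=S$. A relation lies in $\mathfrak{R}_{\Fcl(R),F}$ if and only if it lies in $\mathfrak{R}_{R,F}$ and contains $\Fcl(R)$. By minimality of $\Fcl(R)$, every member of $\mathfrak{R}_{R,F}$ already contains $\Fcl(R)$, so the two families coincide. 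This gives $\Fcl(\Fcl(R))=\Fcl(R)$.

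\emph{Polynomial time.} Start with $R_0\coloneqq R\cup\{(p,p): p\in S\}$. Repeatedly add every pair forced by one of three rules:
\begin{itemize}
\item transitivity;
\item cross-consistency, that is, $(ab,xy)$ whenever $ab\in\support_{R_k}$ and $(ac,xy),(bd,xy)\in R_k$;
\item $F$-conditional symmetry.
\end{itemize}
Continue until nothing changes. By induction on $k$, every member of $\mathfrak{R}_{R,F}$ contains $R_k$, since each rule adds only pairs that such a member must contain. The fixpoint itself satisfies all four axioms, so it equals $\Fcl(R)$. All $R_k$ lie in $\pairs(X)\times\pairs(X)$, which has $O(|X|^4)$ elements, so there are at most that many rounds. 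Each round checks polynomially many rule instances, roughly $O(|X|^{12})$ naively, which gives a polynomial bound in $|X|$.

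The main obstacle I expect is the cross-consistency rule. Its premise refers to the support of the current relation, which changes as pairs are added, so monotonicity and idempotency require the support argument above rather than a purely syntactic Moore-family argument. Once $\support^+_{\Fcl(R)}=S$ is established, the remaining steps are routine.
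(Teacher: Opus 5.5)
Your proof is correct: $\support^+_R\times\support^+_R$ shows that $\mathfrak{R}_{R,F}$ is nonempty, the family is closed under intersection, and once you note $\support^+_{\Fcl(R)}=\support^+_R$ the closure axioms follow. Saturating with reflexivity on $\support^+_R$, transitivity, cross-consistency and $F$-conditional symmetry then yields $\Fcl(R)$ in polynomially many rounds; the paper itself gives no proof but imports the result from \cite{LAMSH:25,EH:26arxiv}, and your argument follows essentially the same route as those sources (closure-operator properties of the intersection plus a saturating rule set).
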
 

To characterize realizable relations $R$, respectively, \RF-realizable relations $(R,F)$, we
construct a \emph{canonical DAG} $\GG_R$, respectively, $\GG_{R,F}$ based on the structure of
$\Fcl(R)$.

\begin{definition}[{The quotient poset, \cite[Def~4.14]{EH:26arxiv}}]
	Let $(R,F)$ be a pair of relations on $\pairs(X)$. We define the equivalence relation
	$\sim_{\Fcl(R)}$ on $\support^+_R$ by putting, for all $p,q \in \support^+_R$, \[p \sim_{\Fcl(R)}
	q \iff (p,q) \in \Fcl(R) \text{ and } (q,p) \in \Fcl(R). \] Let $[p]$ denote the equivalence class
	of $\sim_{\Fcl(R)}$ that contains $p\in\support_R^+$, and let $Q$ denote the set of all such
	equivalence classes. Define the partial order $\leq_{\Fcl(R)}$ on $Q$ by putting, for all classes
	$[p]$ and $[q]$ in $Q$, \[[p] \leq_{\Fcl(R)} [q] \iff (p,q) \in \Fcl(R).\] We refer to the poset
	$(Q,\le_{\Fcl(R)})$ as the \emph{quotient poset of $\Fcl(R)$}.
\end{definition}

An analogous quotient poset was defined in~\cite{LAMSH:25} for $R$ and its closure $\cl(R)$. The
definition given here is more general and reduces naturally to the poset $(Q,\leq_{\cl(R)})$ in the
case $F=\emptyset$, since $\cl(R)=\cl_{\emptyset}(R)$. The well-definedness and the facts that
$\sim_{\Fcl(R)}$ is an equivalence relation and that $(Q,\le_{\Fcl(R)})$ is a poset follows from the
standard ``quotient'' construction that turns a preorder into a partial order; see
\cite[Prop~5.2.4]{Schroder:03} for a full proof. The quotient poset $(Q,\le_{\Fcl(R)})$ associated
with $\Fcl(R)$ provides a canonical way to encode the order information imposed by the
LCA-constraints in $R$ and $F$. Its Hasse diagram therefore serves as a natural candidate for a
realizing DAG. This yields the following construction.
\begin{definition}[{Canonical DAG, \cite[Def~4.15]{EH:26arxiv}}]\label{def:canonDAGN}
Let $(R,F)$ be a pair of relations on $\pairs(X)$ and let $(Q, \le_{\Fcl(R)})$ be the quotient poset
of $\Fcl(R)$. The \emph{canonical DAG $\GG_{R,F}$ of $(R,F)$} is defined as the DAG obtained from
the Hasse diagram $\Hasse(Q,\leq_{\Fcl(R)})$ by relabeling each vertex $[aa]\in Q$ with $a$. The
\emph{canonical DAG $\GG_{R}$ of $R$} is defined as $\GG_R\coloneqq \GG_{R,\emptyset}$.
\end{definition}

It is technically possible that $[aa]=[bb]\in Q$ for distinct $a,b\in X$ in the quotient poset
$(Q,\leq_{\Fcl(R)})$ associated with an arbitrary pair of relations $(R,F)$. In this case, the same
vertex would have to be relabeled both by $a$ and by $b$, so the construction of $\GG_{R,F}$ would
not be well-defined without an additional convention. We avoid this ambiguity by using a fixed but
arbitrary linear order on the finite ground set of the relation $R$ to choose a canonical label for
such classes. However, all relations appearing in Section~\ref{sec:ATC-F} and later are
constructed from sets of rooted or anchored triples in such a way that their canonical DAGs are
well-defined without referring to this tie-breaking convention.

\begin{proposition}[\cite{LAMSH:25,EH:26arxiv}]\label{prop:properties_of_canonical_DAG}
Let $(R,F)$ be a pair of relations on $\pairs(X)$. If $R$ is realizable, then $\GG_R$ is a
phylogenetic DAG on $X$. If $(R,F)$ is \RF-realizable, then $\GG_{R,F}$ is a phylogenetic DAG on
$X$.
\end{proposition}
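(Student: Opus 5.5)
The plan is to argue directly from the closure properties of $\Fcl(R)$, together with the semantic fact that realizability forces $\Fcl(R)$ into the order of a realizing DAG. I treat the general case $(R,F)$; the statement for $R$ is then the special case $F=\emptyset$, since $\GG_R=\GG_{R,\emptyset}$ and $\cl(R)=\cl_\emptyset(R)$. Throughout, let $G$ be a DAG that \RF-realizes $(R,F)$.

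\emph{Step 1: $\Fcl(R)$ is contained in the order of $G$.} Let $R_G$ denote the restriction of $\rel_G$ to $\support^+_R$. I check that $R_G\in\mathfrak{R}_{R,F}$.
\begin{itemize}
\item $R_G$ contains $R$ by \axiom{I1} and \axiom{I2}.
\item It is $\support^+_R$-reflexive and transitive, since $\preceq_G$ is a partial order and all LCAs involved are well-defined.
\item It is cross-consistent. If $\lca_G(ac),\lca_G(bd)\preceq_G\lca_G(xy)=:w$, then $w$ is a common ancestor of $a$ and $b$, so $\lca_G(ab)\preceq_G w$.
\item It is $F$-conditional-symmetric. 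If $(p,q)\in F\cap R_G$, then \axiom{F} excludes $\lca_G(p)\prec_G\lca_G(q)$. Hence the two LCAs are equal and $(q,p)\in R_G$.
\end{itemize}
Consequently $\Fcl(R)\subseteq R_G$. Acyclicity of $\GG_{R,F}$ is immediate, because it is a Hasse diagram of a poset.

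\emph{Step 2: the leaves are exactly the classes $[aa]$, and these are pairwise distinct.} First, if $(p,aa)\in\Fcl(R)$, then Step~1 gives $\lca_G(p)\preceq_G a$. Since $a$ is a leaf, $p=aa$: for $b\neq a$, $\lca_G(ab)$ is an ancestor of both $a$ and $b$ and so lies strictly above $a$. Hence each $[aa]=\{aa\}$ is a singleton and minimal, and $[aa]\neq[bb]$ for $a\ne b$. In particular, the relabeling in Definition~\ref{def:canonDAGN} is unambiguous.

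Second, for $ab\in\support_R$ with $a\neq b$, I claim $(aa,ab),(bb,ab)\in\Fcl(R)$. This is the step I expect to need the most care. I would derive it from $\support^+_R$-reflexivity combined with cross-consistency, applied to $(ab,ab)$ and $(ba,ab)$, or else cite it from the rule set of \cite[Thm~4.7]{EH:26arxiv}. Given the claim, $[aa]<[ab]$ strictly, since $[aa]$ is a singleton. So every class $[ab]$ with $a\neq b$ has an out-neighbour in the Hasse diagram, and the leaf set of $\GG_{R,F}$ is exactly $X$.

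\emph{Step 3: $\GG_{R,F}$ is phylogenetic.} By Step~2 the only non-leaf vertices are classes $[ab]$ with $a\ne b$. Suppose such an $[ab]$ had exactly one child $[q]$ in $\Hasse(Q,\le_{\Fcl(R)})$. Then every class strictly below $[ab]$ lies below $[q]$, and in particular $[aa],[bb]\le[q]$. Neither $[aa]$ nor $[bb]$ can equal $[q]$, since otherwise the other leaf would have to lie below a minimal singleton class. Writing $q=cd$, we have $(aa,cd),(bb,cd)\in\Fcl(R)$ and $ab\in\support_R$. Cross-consistency of $\Fcl(R)$ then yields $(ab,cd)\in\Fcl(R)$, i.e.\ $[ab]\le[q]$, contradicting $[q]<[ab]$. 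Thus no vertex has out-degree $1$, regardless of its in-degree, and $\GG_{R,F}$ is a phylogenetic DAG on $X$.
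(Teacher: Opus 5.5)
\textbf{Verdict.} Your proof is correct. The paper itself gives no proof of this proposition; it imports it from \cite{LAMSH:25,EH:26arxiv}, so there is no in-paper argument to match.

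\textbf{Comparison with the cited source.} The paper's own use of the result suggests that the source derives the structure of the canonical DAG from the combinatorial conditions \axiom{X1}/\axiom{Y1} alone. For instance, the Outlook invokes \cite[Prop~28]{LAMSH:25} to get a well-defined $\GG_{R_\RR}$ when $R_\RR$ merely satisfies \axiom{X1} and is not realizable. You instead use a realizing DAG $G$ to obtain $\Fcl(R)\subseteq \rel_G\cap(\support^+_R\times\support^+_R)$. This is the same device the paper uses in the proof of Theorem~\ref{thm:char-W-TCF} to establish \axiom{X1} for $Q_K$. After that you argue purely order-theoretically. Your Steps~2 and~3 use only two ingredients:
\begin{itemize}
\item the property ``$(p,aa)\in\Fcl(R)$ implies $p=aa$'', which is exactly \axiom{Y1};
\item reflexivity on $\support^+_R$, transitivity and cross-consistency of $\Fcl(R)$.
\end{itemize}
So they in fact prove the stronger statement that \axiom{Y1} by itself makes $\GG_{R,F}$ a phylogenetic DAG on $X$. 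Realizability enters only in Step~1, to secure \axiom{Y1}.

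\textbf{Points to tighten.}
\begin{itemize}
\item In Step~2 your justification that $\lca_G(p)\preceq_G a$ forces $p=aa$ only treats $p$ containing $a$. In general, $p=bc$ with $\lca_G(bc)\preceq_G a$ gives $b,c\preceq_G a$, hence $b=c=a$ because $a$ is a leaf.
\item The step you flagged as delicate is immediate. Apply cross-consistency with $(ab,ab)$ in both hypotheses, once with $a$ and once with $b$ as the distinguished letter, using $aa,bb\in\support^+_R$. This yields $(aa,ab),(bb,ab)\in\Fcl(R)$; it is precisely the remark following the definition of cross-consistency.
\item You tacitly use that $\Fcl(R)$ itself lies in $\mathfrak{R}_{R,F}$. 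This holds because $\mathfrak{R}_{R,F}\neq\emptyset$ (Step~1 exhibits a member) and all defining properties are stable under intersection. For cross-consistency this uses $\support_{R_1\cap R_2}\subseteq\support_{R_1}\cap\support_{R_2}$.
\item The case distinction $[q]\neq[aa],[bb]$ in Step~3 is superfluous. Cross-consistency applied to $(aa,q),(bb,q)$ gives $(ab,q)\in\Fcl(R)$ for any representative $q$, which already contradicts $[q]<[ab]$.
\end{itemize}
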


The following results summarize the realizability criteria that will be used throughout the paper.
In the setting without forbidden constraints, realizability of a relation $R$ is characterized by
two conditions, \axiom{X1} and \axiom{X2}. For pairs $(R,F)$ of required and forbidden relations,
analogous conditions \axiom{Y1} and \axiom{Y2} characterize \RF-realizability. In both settings, the
characterizations are constructive and yield polynomial-time recognition and construction
algorithms.

\begin{theorem}[{\cite[Thm~31, Thm~38]{LAMSH:25}}]\label{thm:char}
Let $R$ be a relation on $\pairs(X)$. Then the following statements are equivalent:
\begin{enumerate}
    \item $R$ is realizable.

    \item $R$ satisfies the following two conditions:
    \begin{description}
        \item[$\qquad$\axiom{X1}:]
        For all $a,b,x\in X$, if $ab\neq xx$, then
        $(ab,xx)\notin R$.

        \item[$\qquad$\axiom{X2}:]
        For all $a,b,x,y\in X$, if $(ab,xy)\in R$ and
        $(xy,ab)\notin \tc(R)$, then $(xy,ab)\notin \cl(R)$.
    \end{description}

    \item $R$ is realized by its canonical DAG $\GG_R$.
\end{enumerate}
Moreover, it can be decided in polynomial time in $|X|$ whether $R$ is realizable. If this is the
case, then the canonical DAG $\GG_R$ realizing $R$ can be constructed within the same time bound.
\end{theorem}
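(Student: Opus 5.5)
The plan is to prove the cycle $(3)\Rightarrow(1)\Rightarrow(2)\Rightarrow(3)$. The implication $(3)\Rightarrow(1)$ is immediate from the definition of realizability. The algorithmic part then follows from the proof of $(2)\Rightarrow(3)$:
\begin{itemize}
  \item by Theorem~\ref{thm:cl-polytime}, $\cl(R)$ and $\tc(R)$ are computable in polynomial time, so \axiom{X1} and \axiom{X2} can be checked by scanning all $O(|X|^4)$ pairs;
  \item the quotient poset and its Hasse diagram on $|\support^+_R| = O(|X|^2)$ elements are computable in polynomial time, so $\GG_R$ is obtained within the same bound.
\end{itemize}

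For $(1)\Rightarrow(2)$, let $G$ realize $R$. For \axiom{X1}, suppose $(ab,xx)\in R$. Then $\lca_G(ab)\preceq_G x$, and since $x$ is a leaf this forces $\lca_G(ab)=x$, so that strictness \axiom{I1} is impossible. Thus $x$ is an ancestor of the leaves $a,b$, which forces $a=b=x$, i.e. $ab=xx$.

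For \axiom{X2}, I would show that $\cl(R)\subseteq{\rel_G}$ by checking that the restriction of $\rel_G$ to $\support^+_R$ lies in $\mathfrak{R}_R$:
\begin{itemize}
  \item reflexivity and transitivity are clear;
  \item it contains $R$ by \axiom{I1} and \axiom{I2};
  \item for cross-consistency, if $\lca_G(ac)\preceq_G\lca_G(xy)$ and $\lca_G(bd)\preceq_G\lca_G(xy)$, then $\lca_G(xy)$ is a common ancestor of $a$ and $b$. Since $ab\in\support_R$, $\lca_G(ab)$ is well-defined, so $\lca_G(ab)\preceq_G\lca_G(xy)$.
\end{itemize}
Now if $(ab,xy)\in R$ and $(xy,ab)\notin\tc(R)$, then \axiom{I1} gives $\lca_G(ab)\prec_G\lca_G(xy)$. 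Hence $(xy,ab)\notin{\rel_G}\supseteq\cl(R)$.

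For $(2)\Rightarrow(3)$, which I expect to be the main obstacle, I would establish three properties of $\GG_R$.
\begin{enumerate}
  \item \emph{Leaves.} The leaves of $\GG_R$ are exactly the classes $[xx]$, these classes are pairwise distinct singletons, and nothing lies strictly below them. This requires showing that \axiom{X1} propagates from $R$ to $\cl(R)$, i.e. $(ab,xx)\in\cl(R)$ only for $ab=xx$. I would try to prove this by checking that the relation obtained from $\cl(R)$ by deleting all such pairs still lies in $\mathfrak{R}_R$, using minimality of $\cl(R)$.
  \item \emph{Ancestors.} $(aa,ab)\in\cl(R)$ for every $ab\in\support_R$, so that $[ab]$ is a common ancestor of $a$ and $b$. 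I would derive this from the explicit rule set for $\cl$ in \cite{LAMSH:25}, or via a cross-consistency argument.
  \item \emph{Unique LCA.} If $[q]$ is any common ancestor of $a$ and $b$, then $(aa,q),(bb,q)\in\cl(R)$, and cross-consistency with $c=a$, $d=b$ yields $(ab,q)\in\cl(R)$. Hence $[ab]\le[q]$, so $\lca_{\GG_R}(ab)=[ab]$ is well-defined.
\end{enumerate}

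With these in hand, the realization conditions follow:
\begin{itemize}
  \item \emph{\axiom{I2}.} Since $\tc(R)\subseteq\cl(R)$, if $(ab,cd)\in R$ and $(cd,ab)\in\tc(R)$, then $[ab]=[cd]$.
  \item \emph{\axiom{I1}.} If $(ab,cd)\in R$ and $(cd,ab)\notin\tc(R)$, then \axiom{X2} gives $(cd,ab)\notin\cl(R)$. Hence $[ab]<[cd]$ strictly in the poset, which corresponds to $\lca(ab)\prec\lca(cd)$ in the Hasse diagram.
\end{itemize}
Finally, phylogeneticity of $\GG_R$ is given by Proposition~\ref{prop:properties_of_canonical_DAG}, though it is not needed for realization itself.
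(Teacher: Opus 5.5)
Your argument is correct. The paper itself does not prove this theorem but imports it from \cite{LAMSH:25}, and your route matches the ingredients it takes from there. Necessity goes through $\cl(R)\subseteq\rel_G$ (\cite[Lem~7, Prop~21]{LAMSH:25}, used exactly this way in Lemma~\ref{lem:Qi-subset-relG}). Sufficiency rests on the fact that under \axiom{X1} the canonical DAG $\GG_R$ is a DAG on $X$ with $\lca_{\GG_R}(ab)=[ab]$ for all $ab\in\support^+_R$ (\cite[Prop~28, Cor~32]{LAMSH:25}), with \axiom{X2} supplying the strict inequalities for \axiom{I1}.

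The two steps you left tentative both go through.
\begin{itemize}
\item Deleting all pairs $(ab,xx)$ with $ab\neq xx$ from $\cl(R)$ preserves reflexivity, transitivity and cross-consistency. A surviving pair $(q,xx)$ forces $q=xx$, and surviving pairs $(ac,xx),(bd,xx)$ force $a=b=x$.
\item $(aa,ab)\in\cl(R)$ follows from cross-consistency applied to $aa\in\support^+_R$, with $(ab,ab)$ playing both roles ($c=d=b$). This is also the fact you need to conclude that \emph{only} the classes $[xx]$ are leaves of $\GG_R$.
\end{itemize}
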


The corresponding characterization for pairs $(R,F)$ is obtained by replacing $R$ in \axiom{X1} and
$\cl(R)$ in \axiom{X2} by $\Fcl(R)$. Strict \RF-realizability is characterized by the same
conditions together with asymmetry of $\tc(R)$. The following summarizes \cite[Thm~4.18, Thm~4.19,
Thm~4.22, Thm~4.23]{EH:26arxiv}.

\begin{theorem}[{\cite{EH:26arxiv}}]\label{thm:characterization_AF_realized}
Let $(R,F)$ be a pair of relations on $\pairs(X)$.
\begin{enumerate}
    \item The following statements are equivalent:
    \begin{enumerate}
        \item $(R,F)$ is \RF-realizable.

        \item $(R,F)$ satisfies the following two conditions:
        \begin{description}
            \item[$\qquad$\axiom{Y1}:]
            For all $a,b,x\in X$, if $ab\neq xx$, then
            $(ab,xx)\notin \Fcl(R)$.

            \item[$\qquad$\axiom{Y2}:]
            For all $a,b,x,y\in X$, if $(ab,xy)\in R$ and
            $(xy,ab)\notin \tc(R)$, then $(xy,ab)\notin \Fcl(R)$.
        \end{description}

        \item $(R,F)$ is \RF-realized by the $\FR$-extension of its
        canonical DAG $\GG_{R,F}$.
    \end{enumerate}

    \item The following statements are equivalent:
    \begin{enumerate}
        \item $(R,F)$ is strictly \RF-realizable.

        \item $(R,F)$ satisfies \axiom{Y1} and \axiom{Y2}, and $\tc(R)$ is
        asymmetric.

        \item $(R,F)$ is strictly \RF-realized by the $\FR$-extension of its
        canonical DAG $\GG_{R,F}$.
    \end{enumerate}
\end{enumerate}
Moreover, it can be decided in polynomial time in $|X|$ whether $(R,F)$ is \RF-realizable,
respectively, strictly \RF-realizable. If this is the case, then the $\FR$-extension of the
canonical DAG $\GG_{R,F}$ can be constructed within the same time bound.
\end{theorem}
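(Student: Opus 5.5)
The plan is to prove the cycle (a)$\Rightarrow$(b)$\Rightarrow$(c)$\Rightarrow$(a) of part~1. Part~2 then follows by adding the asymmetry of $\tc(R)$, and the running time follows from Theorem~\ref{thm:cl-polytime}.

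\emph{Step (a)$\Rightarrow$(b).} Fix a DAG $G$ that \RF-realizes $(R,F)$ and consider $R'\coloneqq{\rel_G}$ restricted to $\support^+_R$. I would check that $R'\in\mathfrak{R}_{R,F}$, which gives $\Fcl(R)\subseteq{\rel_G}$.
\begin{itemize}
\item Reflexivity and transitivity are immediate.
\item $R\subseteq R'$ follows from \axiom{I1} and \axiom{I2}.
\item For cross-consistency, $\lca_G(ac),\lca_G(bd)\preceq_G\lca_G(xy)$ makes $\lca_G(xy)$ a common ancestor of $a$ and $b$, so $\lca_G(ab)\preceq_G\lca_G(xy)$.
\item For $F$-conditional symmetry, if $(p,q)\in F\cap R'$ then $\lca_G(p)\preceq_G\lca_G(q)$, and \axiom{F} forbids strictness. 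Hence the two LCAs are equal and $(q,p)\in R'$.
\end{itemize}
Then \axiom{Y1} holds because $\lca_G(ab)\preceq_G x$ for a leaf $x$ forces $\lca_G(ab)=x$, hence $a=b=x$. And \axiom{Y2} holds because, by \axiom{I1}, $(ab,xy)\in R$ with $(xy,ab)\notin\tc(R)$ gives $\lca_G(ab)\prec_G\lca_G(xy)$, so $(xy,ab)\notin{\rel_G}\supseteq\Fcl(R)$.

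\emph{Step (b)$\Rightarrow$(c).} This is the main obstacle. First I would show, using \axiom{Y1}, that each class $[xx]$ is a singleton-type minimal element of the quotient poset. Hence the leaves of $\GG_{R,F}$ are exactly the relabeled vertices $x$, and the relabeling is well-defined. The key lemma is then that for every $ab\in\support^+_R$, the vertex $[ab]$ is the unique LCA of $a$ and $b$ in $\GG_{R,F}$. That $[ab]$ is a common ancestor follows from $(aa,ab),(bb,ab)\in\Fcl(R)$. I expect these to come out of cross-consistency together with $\support^+_R$-reflexivity, and this is the point I would verify first. For uniqueness, any common ancestor $[pq]$ satisfies $(aa,pq),(bb,pq)\in\Fcl(R)$, and cross-consistency applied to $ab\in\support_R$ yields $(ab,pq)\in\Fcl(R)$, so $[ab]\preceq[pq]$.

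With the key lemma in hand, \axiom{I1} and \axiom{I2} follow: $(cd,ab)\in\tc(R)\subseteq\Fcl(R)$ gives equality of classes, while otherwise \axiom{Y2} gives strictness. For \axiom{F}, a pair $(p,q)\in F$ with $p,q\in\support^+_R$ and $\lca(p)\prec\lca(q)$ would put $(p,q)\in F\cap\Fcl(R)$. $F$-conditional symmetry then gives $[p]=[q]$, which is a contradiction. Pairs in $F$ touching $\support_F\setminus\support^+_R$ are handled by the $\FR$-extension. By Observation~\ref{obs:xy-extension1}, these extensions make the corresponding LCAs undefined while leaving all other LCAs and the order $\preceq$ unchanged.

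\emph{Step (c)$\Rightarrow$(a)} is trivial. For part~2, strict realization is realization together with asymmetry of $\tc(R)$, as in \cite[Lem~5]{LAMSH:25}, so the same cycle applies verbatim. Finally, polynomial time follows because $\Fcl(R)$ is computable in polynomial time by Theorem~\ref{thm:cl-polytime}, \axiom{Y1}, \axiom{Y2} and asymmetry are checked by direct enumeration, and the Hasse diagram and the extensions are polynomial-size constructions.
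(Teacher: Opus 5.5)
Your proposal is correct. The paper itself gives no proof of this theorem; it only collects \cite[Thm~4.18, 4.19, 4.22, 4.23]{EH:26arxiv}. What you have written is therefore a self-contained reconstruction, and it follows the standard route the paper relies on elsewhere. Necessity comes from $\Fcl(R)\subseteq\rel_G$ for every \RF-realizing $G$ (compare the argument in Lemma~\ref{lem:Qi-subset-relG}). Sufficiency comes from $\lca_{\GG_{R,F}}(ab)=[ab]$ for all $ab\in\support^+_R$, with \axiom{F} obtained from the $F$-conditional symmetry of $\Fcl(R)$ together with the $\FR$-extension.

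The step you flagged does go through. Applying cross-consistency of $\Fcl(R)$ to $(ab,ab),(ab,ab)\in\Fcl(R)$ gives $(aa,ab),(bb,ab)\in\Fcl(R)$. Combined with \axiom{Y1}, this shows that $[aa]$ lies strictly below $[ab]$ whenever $a\neq b$. That fact, and not merely the minimality of the classes $[xx]$, is what guarantees that the leaf set of $\GG_{R,F}$ is exactly $X$.
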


For further details as well as many illustrative examples, we refer the reader to
\cite{LAMSH:25,EH:26arxiv}. We also note that LCA-relations have recently been studied in related
contexts, including the simplification and the encoding of
phylogenetic networks~\cite{HL:24,HML:26}.

%
%
%

\section{Consistency Problems for Anchored Triples}
\label{sec:ATC-F}

We are now able to tackle triple-based problems, starting with consistency problems related to
anchored triples. Note that, by definition, for every anchored triple $\underline{x}y|z$, the
elements $x$, $y$, and $z$ are pairwise distinct. Recall that an anchored triple $\underline{x}y|z$
is \emph{\anchor-displayed} by a DAG or network $G$ if \begin{center} (i)\ $\lca_G(xy)$ and
$\lca_G(xz)$ are well-defined and (ii)\ $\lca_G(xy)\prec_G \lca_G(xz)$. \end{center} It is
well-known that all LCAs are unique in trees. Hence, a tree $T$ \anchor-displays $\underline{x}y|z$
if and only if Condition~(ii) holds. In general DAGs or networks, however, $\lca_G(xy)$ or
$\lca_G(xz)$ may fail to be well-defined; see for instance $N_1$ in Figure~\ref{fig:anchordisplay}
and the anchored triple $\underline{y}x|z$. We now consider the following two problems. 

\begin{problem}[\PROBLEM{Anchored Triples Consistency (ATC)}]\ \\
  \begin{tabular}{ll}
    \emph{Input:}    & A set $\RR$ of anchored triples. \\
    \emph{Question:} & Is there a phylogenetic network (resp., DAG) on $\XR$ that \anchor-displays
                       all anchored triples in $\RR$?
  \end{tabular}
\end{problem}

\begin{problem}[\PROBLEM{Anchored Triples Consistency with Forbidden Triples (ATC-F)}]\ \\
  \begin{tabular}{ll}
    \emph{Input:}    & Two sets $\RR$ and $\FF$ of anchored triples. \\
    \emph{Question:} & Is there a phylogenetic network (resp., DAG) on $\XRF$ that \anchor-displays
                       all anchored triples in $\RR$ \\ & but none in $\FF$?
  \end{tabular}
\end{problem}

Note that \PROBLEM{ATC} is just a special case of \PROBLEM{ATC-F} obtained by taking
$\FF=\emptyset$. Consequently, a polynomial-time solution for \PROBLEM{ATC-F} immediately yields one
for \PROBLEM{ATC}.

Now consider an anchored triple $\underline{x}y|z$ that is not \anchor-displayed by a DAG $G$. By
definition, this happens precisely when one of the two requirements for \anchor-display fails or,
equivalently, when one of the following mutually exclusive conditions holds:
\begin{enumerate}
    \item[(i')] $\lca_G(xy)$ or $\lca_G(xz)$ is not well-defined or
    \item[(ii')] $\lca_G(xy)$ and $\lca_G(xz)$ are well-defined and $\lca_G(xy)\nprec_G \lca_G(xz)$.
\end{enumerate}

For convenience, we introduce the following terminology.
\begin{definition}\label{def:anchor-display}
Let $(\RR,\FF)$ be a pair of anchored triple sets. A DAG $G$ \emph{\anchor-agrees} with $(\RR, \FF)$
if $G$ \anchor-displays all anchored triples in $\RR$ but none of the ones in $\FF$.  
\end{definition}

Definition~\ref{def:anchor-display} is illustrated in Figure~\ref{fig:anchor_displaying_RF}, which
shows a phylogenetic DAG $G$ and a phylogenetic network $N$ that $\anchor$-agree with a pair
$(\RR,\FF)$ of anchored triple sets. We use this as a running example throughout this section.

\begin{figure}
    \centering
    \includegraphics[width=0.8\textwidth]{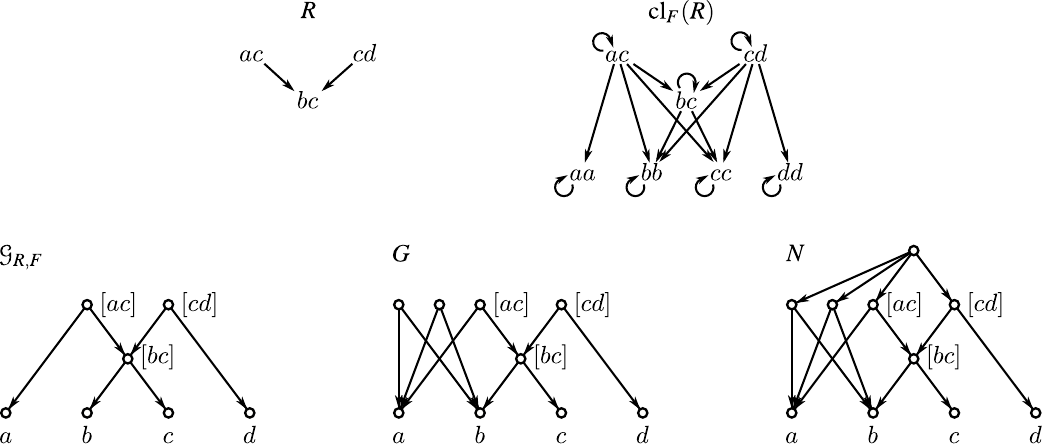}
    \caption{
    Consider the pair $(\RR,\FF)$ of anchored triple sets with $\RR = \{\underline{c}b|a, \underline{c}b|d\}$ and
    $\FF = \{\underline{b}c|a\}$. Then $R = \{(bc,ac),(bc,cd)\}$ and $F = \{(bc,ab)\}$ are the
    relations on $\pairs(\XRF)$ with $\XRF = \{a,b,c,d\}$ as defined in
    Theorem~\ref{thm:weaklyagrees_iff_strictly_real}. For $S \in \{R,\Fcl(R)\}$, an arc $p\to q$ is
    drawn precisely if $(q,p) \in S$. Then, $\GG_{R,F}$ is the canonical DAG of $(R,F)$ and $G$ is
    the $\FR$-extension of $\GG_{R,F}$, obtained by applying an $ab$-extension for the unique
    element in $\support_F\setminus\support_R^+=\{ab\}$ (cf. Definition~\ref{def:FRextension}).
    Since $\lca_{\GG_{R,F}}(bc)=[bc] \prec_{\GG_{R,F}} [ac]=\lca_{\GG_{R,F}}(ab)$, it follows that 
    $\GG_{R,F}$ \anchor-displays the forbidden triple $\underline{b}c|a$. Hence, $\GG_{R,F}$
    does not \anchor-agree with $(\RR,\FF)$. However, in accordance with
    Theorem~\ref{thm:weaklyagrees_iff_strictly_real}, $G$ \anchor-agrees with $(\RR,\FF)$ and
    strictly \RF-realizes $(R,F)$. In particular, $G$ \RF-realizes $(R,F)$ and $(ac,bc),(cd,bc)
    \notin \tc(R) = R$. Moreover, in accordance with Theorem~\ref{thm:weaklyagrees_invariant_DAG_network},
    the phylogenetic network $N$ obtained from $G$ by
    Lemma~\ref{lem:DAG2Network} also \anchor-agrees with $(\RR,\FF)$.}
    \label{fig:anchor_displaying_RF}
\end{figure}

As we shall see now, a pair $(\RR,\FF)$ of anchored triple sets can be translated into a pair
$(R,F)$ of relations in such a way that a DAG $G$ \anchor-agrees with $(\RR,\FF)$ if and only if $G$
strictly \RF-realizes $(R,F)$.

\begin{theorem}\label{thm:weaklyagrees_iff_strictly_real}
Let $(\RR, \FF)$ be a pair of anchored triple sets. Moreover, let $R \coloneqq \{(xy,xz) \, : \,
\underline{x}y|z \in \RR\}$ and $F \coloneqq \{(xy,xz) \, : \, \underline{x}y|z \in \FF\}$ be
relations on $\pairs(\XRF)$. Then, the following statements are equivalent for every DAG $G$ on
$\XRF$. 
\begin{enumerate}
    \item $G$ \anchor-agrees with $(\RR,\FF)$. 
    \item $G$ strictly \RF-realizes $(R,F)$.
    \item $G$ \RF-realizes $(R,F)$ and $(xz,xy) \notin \tc(R)$ for all $\underline{x}y|z \in \RR$.
\end{enumerate}
\end{theorem}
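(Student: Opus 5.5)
I will prove the cycle of implications (1)$\Rightarrow$(2)$\Rightarrow$(3)$\Rightarrow$(1), working directly from the definitions of \anchor-display, strict realization \axiom{I0}, realization \axiom{I1}/\axiom{I2}, and condition \axiom{F}. First I record that $\support^+_R$ consists of the pairs $xy,xz$ for $\underline{x}y|z\in\RR$ together with all singletons $aa$. Singleton LCAs are always well-defined, since $\lca_G(aa)=a$ for a leaf $a$. Hence the well-definedness requirement in (strict) realization of $R$ amounts exactly to: $\lca_G(xy)$ and $\lca_G(xz)$ are well-defined for every $\underline{x}y|z\in\RR$.

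For (1)$\Rightarrow$(2), suppose $G$ \anchor-agrees with $(\RR,\FF)$. For each $(xy,xz)\in R$ there is a triple $\underline{x}y|z\in\RR$; its \anchor-display gives both well-definedness and $\lca_G(xy)\prec_G\lca_G(xz)$, which is \axiom{I0}. Combined with the remark above on singletons, $G$ strictly realizes $R$. For \axiom{F}, take $(xy,xz)\in F$ with both LCAs well-defined. Since $\underline{x}y|z\in\FF$ is not \anchor-displayed, case (ii') must hold, so $\lca_G(xy)\nprec_G\lca_G(xz)$. One subtlety is that a pair $(xy,xz)$ could arise from several triples. However, the triple is recovered from the pair, because the anchor $x$ is the common element of $xy$ and $xz$ (the elements are distinct), so this causes no issue.

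For (2)$\Rightarrow$(3), I use the remark after Definition~\ref{def:strict-real} (\cite[Lem~5]{LAMSH:25}): strict realization is equivalent to realization together with asymmetry of $\tc(R)$. Asymmetry applied to $(xy,xz)\in R\subseteq\tc(R)$ gives $(xz,xy)\notin\tc(R)$. If one prefers to avoid the citation, there is a direct argument. A chain $xz\,R\cdots R\,xy$ would, by \axiom{I0} and transitivity of $\prec_G$, give $\lca_G(xz)\prec_G\lca_G(xy)\prec_G\lca_G(xz)$, contradicting acyclicity.

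For (3)$\Rightarrow$(1), realization gives well-definedness of $\lca_G(xy)$ and $\lca_G(xz)$ for each $\underline{x}y|z\in\RR$. The hypothesis $(xz,xy)\notin\tc(R)$ makes \axiom{I1} applicable, yielding $\lca_G(xy)\prec_G\lca_G(xz)$, so every triple in $\RR$ is \anchor-displayed. For $\underline{x}y|z\in\FF$, if both LCAs are well-defined, then \axiom{F} gives $\lca_G(xy)\nprec_G\lca_G(xz)$, and so the triple is not \anchor-displayed. If they are not both well-defined, the triple fails (i') trivially. The main obstacle is just bookkeeping: checking that the support and singleton conventions in the definition of realization impose no extra constraints beyond those coming from $\RR$. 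The implications themselves are direct unwinding of definitions.
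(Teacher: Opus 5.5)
Your proposal is correct and follows the paper's proof step for step. You use the same cycle (1)$\Rightarrow$(2)$\Rightarrow$(3)$\Rightarrow$(1), and you obtain (2)$\Rightarrow$(3) from ``strict realization equals realization plus asymmetry of $\tc(R)$''. The paper cites the \RF-version, Lemma~4.21 of \cite{EH:26arxiv}; your \cite[Lem~5]{LAMSH:25} suffices because condition \axiom{F} is the same in both notions, and your remarks on singleton LCAs and on recovering the anchor from a pair only make explicit what the paper leaves implicit.
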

\begin{proof}
Let $(\RR,\FF)$ be a pair of anchored triple sets and define $R$ and $F$ as in the statement.
Moreover, let $G$ be a DAG on $\XRF$.

We first show that Statement~(1) implies Statement~(2). To this end, assume that $G$ \anchor-agrees
with $(\RR,\FF)$. We show that $(R,F)$ is strictly \RF-realized by $G$. Let $(xy,xz) \in R$. By
definition of $R$, there is an anchored triple $\underline{x}y|z \in \RR$. Since $G$ \anchor-displays
$\underline{x}y|z$, it follows that $\lca_G(xy)$ and $\lca_G(xz)$ are well-defined and $\lca_G(xy)
\prec_G \lca_G(xz)$. Thus, \axiom{I0} is satisfied for $(xy,xz)$ in $G$. Now let $(xy,xz) \in F$.
Hence, there is some $\underline{x}y|z \in \FF$. Since $G$ \anchor-agrees with $(\RR,\FF)$, either
at least one of the vertices $\lca_G(xy)$ and $\lca_G(xz)$ is not
well-defined or both are well-defined and it holds that $\lca_G(xy) \nprec_G \lca_G(xz)$. In both
cases, \axiom{F} is satisfied for $(xy,xz)$. Therefore, $G$ strictly \RF-realizes $(R,F)$. Hence,
Statement~(1) implies Statement~(2). 

Suppose now that Statement~(2) holds. To show that Statement~(3) holds, assume that $G$ strictly
\RF-realizes $(R,F)$. Lemma~4.21 of \cite{EH:26arxiv} implies that $G$ \RF-realizes $(R,F)$ and
$\tc(R)$ is asymmetric. This together with $(xy,xz) \in R \subseteq \tc(R)$ implies $(xz,xy) \notin
\tc(R)$ for all $\underline{x}y|z \in \RR$. Hence, Statement~(3) holds.

It remains to show that Statement~(3) implies Statement~(1). Thus, assume that $G$ \RF-realizes
$(R,F)$ and that $(xz,xy) \notin \tc(R)$ for all $\underline{x}y|z \in \RR$. We now show that $G$
\anchor-agrees with $(\RR,\FF)$. Suppose that $\underline{x}y|z \in \RR$ and, thus, $(xy,xz) \in R$.
Since $G$ \RF-realizes $(R,F)$, $G$ realizes $R$ and the vertices $\lca_G(xy)$ and $\lca_G(xz)$ are
well-defined. Moreover, by assumption, $(xz,xy) \notin \tc(R)$. Thus, the pre-conditions of
\axiom{I1} are met and it follows that $\lca_G(xy) \prec_G \lca_G(xz)$. Thus, $G$ \anchor-displays
$\underline{x}y|z$. Now suppose $\underline{x}y|z \in \FF$ and, thus, $(xy,xz) \in F$. Since $G$ and
$F$ satisfy \axiom{F}, either at least one of the vertices $\lca_G(xy)$ and $\lca_G(xz)$ is not
well-defined or both are well-defined and $\lca_G(xy) \nprec_G \lca_G(xz)$ is satisfied. In both
cases, $\underline{x}y|z$ is not \anchor-displayed by $G$. Thus, $G$ \anchor-agrees with $(\RR,\FF)$
and Statement~(3) implies Statement~(1). In summary, Statements~(1), (2), and (3) are equivalent. 
\end{proof}

We next show that the term ``\anchor-agree'' does not depend on whether one works with arbitrary
DAGs or with phylogenetic networks.

\begin{theorem}\label{thm:weaklyagrees_invariant_DAG_network}
Let $(\RR, \FF)$ be a pair of anchored triple sets. Then, the following statements are equivalent. 
\begin{enumerate}
    \item There is a DAG on $\XRF$ that \anchor-agrees with $(\RR,\FF)$. 
    \item There is a phylogenetic network on $\XRF$ that \anchor-agrees with $(\RR,\FF)$.
\end{enumerate}
In particular, suppose that there exists a DAG that \anchor-agrees with $(\RR,\FF)$, and let $R
\coloneqq \{(xy,xz) \, : \, \underline{x}y|z \in \RR\}$ and $F \coloneqq \{(xy,xz) \, : \,
\underline{x}y|z \in \FF\}$ be the corresponding relations on $\pairs(\XRF)$. Then the
$\FR$-extension $G$ of the canonical DAG $\GG_{R,F}$ as well as the network $N$ obtained from $G$ by
Lemma~\ref{lem:DAG2Network} are phylogenetic and \anchor-agree with $(\RR,\FF)$. 
\end{theorem}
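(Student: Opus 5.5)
The plan is to route everything through Theorem~\ref{thm:weaklyagrees_iff_strictly_real} and Theorem~\ref{thm:characterization_AF_realized}. The implication (2)$\Rightarrow$(1) is immediate, since a phylogenetic network on $\XRF$ is in particular a DAG on $\XRF$. Hence the real work is (1)$\Rightarrow$(2), and I will prove it by establishing the ``in particular'' statement directly.

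Assume some DAG $G_0$ on $\XRF$ \anchor-agrees with $(\RR,\FF)$, and define $R$ and $F$ as in the statement. By Theorem~\ref{thm:weaklyagrees_iff_strictly_real}, (1)$\Rightarrow$(2), the DAG $G_0$ strictly \RF-realizes $(R,F)$. Thus $(R,F)$ is strictly \RF-realizable. By Theorem~\ref{thm:characterization_AF_realized}(2), the $\FR$-extension $G$ of $\GG_{R,F}$ then strictly \RF-realizes $(R,F)$. Applying Theorem~\ref{thm:weaklyagrees_iff_strictly_real} once more, now (2)$\Rightarrow$(1) for $G$, we get that $G$ \anchor-agrees with $(\RR,\FF)$.

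One small point needs checking here: Theorem~\ref{thm:weaklyagrees_iff_strictly_real} requires a DAG on $\XRF$. I must verify that the leaf set of $G$ is exactly $\XRF$. Since $(R,F)$ is in particular \RF-realizable, Proposition~\ref{prop:properties_of_canonical_DAG} gives that $\GG_{R,F}$ is a phylogenetic DAG on its ground set. Every $x\in\XRF$ occurs in some pair of $R$ or $F$, so the ground set is $\XRF$. An $xy$-extension adds only vertices whose two children are leaves, and by Observation~\ref{obs:xy-extension1} it creates no new leaves, so $L(G)=\XRF$.

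Next I must show that $G$ is phylogenetic. The new vertices $u,v$ of each $xy$-extension have in-degree $0$ and out-degree $2$, and in-degrees of old vertices only increase. Hence no vertex with out-degree $1$ and in-degree at most $1$ is created, and phylogeny of $\GG_{R,F}$ is preserved.

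Finally, let $N$ be obtained from $G$ via Lemma~\ref{lem:DAG2Network}. That lemma says $N$ is a phylogenetic network on $\XRF$. It also says the ancestor order on $V(G)$ is preserved, and that $\LCA_G(ab)=\LCA_N(ab)$ whenever $\LCA_G(ab)\neq\emptyset$. I expect the main subtlety to be when $\LCA_G(ab)=\emptyset$ for some $ab$ relevant to a forbidden triple, since then $N$ could acquire a unique LCA (the new root $\rho$) where $G$ had none. I would handle this by noting that if $\lca_N(xy)=\rho$, then $\rho$ is the maximum of $N$, so $\lca_N(xy)\prec_N\lca_N(xz)$ is impossible. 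If instead $\lca_N(xz)=\rho$ but $\LCA_G(xz)=\emptyset$, I need that $\lca_N(xy)\prec_N\rho$ does not occur for a forbidden $\underline{x}y|z$.

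To handle that case, I would argue that every pair in $\support_F$ has nonempty LCA set in $G$. Pairs in $\support^+_R$ are classes of $\GG_{R,F}$, hence have well-defined LCAs because $G$ realizes $R$. Pairs in $\support_F\setminus\support^+_R$ received an $xy$-extension and so have two LCAs. Consequently, $\LCA_G(p)\neq\emptyset$ for all $p$ in $\support_R^+\cup\support_F$. Then the LCA sets of all pairs occurring in $\RR\cup\FF$ coincide in $G$ and $N$, as does $\preceq$ among them. Therefore $N$ \anchor-displays exactly the same triples of $\RR\cup\FF$ as $G$, and so $N$ \anchor-agrees with $(\RR,\FF)$.
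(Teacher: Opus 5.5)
Your proposal is correct and follows essentially the same route as the paper: strict \RF-realizability via Theorem~\ref{thm:weaklyagrees_iff_strictly_real}, then Theorem~\ref{thm:characterization_AF_realized} applied to the $\FR$-extension $G$ of $\GG_{R,F}$, and finally Lemma~\ref{lem:DAG2Network}, using that every pair in $\support^+_R\cup\support_F$ has a nonempty LCA set in $G$ (well-defined on $\support^+_R$, at least two LCAs after an $xy$-extension otherwise). Your preliminary case analysis involving the new root $\rho$ becomes unnecessary once this nonemptiness is established, and the nonemptiness argument is exactly how the paper concludes.
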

\begin{proof} 
Let $(\RR,\FF)$ be a pair of anchored triple sets and let $R$ and $F$ be defined as in the
statement. We start with showing that Statement~(1) implies Statement~(2). Assume that there is a
DAG that \anchor-agrees with $(\RR,\FF)$. By Theorem~\ref{thm:weaklyagrees_iff_strictly_real},
$(R,F)$ is strictly \RF-realizable. Then, Theorem~\ref{thm:characterization_AF_realized} implies
that the $\FR$-extension $G$ of the canonical DAG $\GG_{R,F}$ strictly \RF-realizes $(R,F)$. 
This together with $R$ and $F$ being relations on 
$\pairs(\XRF)$ implies that the leaf set of $G$ is $\XRF$.  By
Theorem~\ref{thm:weaklyagrees_iff_strictly_real}, $G$ \anchor-agrees with $(\RR,\FF)$. Let $N$ be
the network obtained from $G$ according to Lemma~\ref{lem:DAG2Network}.
 By Lemma~\ref{lem:DAG2Network}, $N$ is a network on $\XRF$. We will
now show that $N$ \anchor-agrees with $(\RR,\FF)$. By Lemma~\ref{lem:DAG2Network}, $u\preceq_N v$ if 
and only if $u \preceq_G v$ for all $u,v \in V(G)$ 
and $\lca_N(xy)=\lca_G(xy)$ for all $x,y\in\XRF$
for which $\lca_G(xy)$ is well-defined. It readily follows that all anchored triples
$\underline{x}y|z\in \RR$ are \anchor-displayed by $N$. Now let $\underline{x}y|z \in \FF$ and,
thus, $(xy,xz) \in F$. There are two cases: either $xy\in \support_R^+$ or $xy\in
\support_F\setminus \support_R^+$. In the first case, $\lca_G(xy)$ is well-defined and, in the
second case an $xy$-extension was applied to obtain $G$ and Observation~\ref{obs:xy-extension1}
implies that $|\LCA_G(xy)|>1$. In either case, $\LCA_G(xy) \neq \emptyset$ and, thus, by
Lemma~\ref{lem:DAG2Network}, $\LCA_G(xy) = \LCA_N(xy)$ holds. Analogously, $\LCA_G(xz) = \LCA_N(xz)$
holds. Recall that Lemma~\ref{lem:DAG2Network} implies that, for all vertices $u,v\in V(G)$, we
obtain $u\preceq_G v$ if and only if $u\preceq_N v$. The latter two arguments together with the fact
that $G$ \anchor-agrees with $(\RR,\FF)$ imply that either at least one of the
vertices $\lca_N(xy)$ and $\lca_N(xz)$ is not well-defined or both vertices are well-defined and it
holds that $\lca_N(xy) \nprec_N \lca_N(xz)$. Consequently, $N$ does not \anchor-display
$\underline{x}y|z$. Hence, $N$ \anchor-agrees with $(\RR,\FF)$. Recall that $(R,F)$ is strictly
\RF-realizable and thus, by Theorem~\ref{thm:characterization_AF_realized}, also \RF-realizable.
Hence, Proposition~\ref{prop:properties_of_canonical_DAG} implies that the canonical DAG $\GG_{R,F}$
is phylogenetic. By construction, the $\FR$-extension $G$ remains phylogenetic and, thus, by
Lemma~\ref{lem:DAG2Network}, $N$ is phylogenetic. Hence, Statement~(1) implies Statement~(2).
Clearly, Statement~(2) trivially implies (1) and the equivalence between these statements follows.

Note that the used arguments, in particular, imply that if there exists a DAG that \anchor-agrees
with $(\RR,\FF)$, then the $\FR$-extension $G$ of the canonical DAG $\GG_{R,F}$ as well as the
network $N$ obtained from $G$ according to Lemma~\ref{lem:DAG2Network} are phylogenetic and
\anchor-agree with $(\RR,\FF)$. 
\end{proof}

\begin{figure}
\centering
\begin{minipage}{.9\textwidth}
\begin{algorithm}[H]
   \small
  \caption{\textsc{Anchored Triples Consistency with Forbidden Triples}}
  \label{alg:ATC-F}
  \begin{algorithmic}[1]
    \Require  A pair $(\RR,\FF)$ of anchored triple sets
    \Ensure  A phylogenetic DAG and a phylogenetic network on $\XRF$ that \anchor-agrees with $(\RR,\FF)$
    if one exists and, otherwise, \texttt{false} is returned
    \State Compute the relation $R \coloneqq \{(xy,xz) \, : \, \underline{x}y|z \in \RR\}$ on $\pairs(\XRF)$
    \State Compute the relation $F \coloneqq \{(xy,xz) \, : \, \underline{x}y|z \in \FF\}$ on $\pairs(\XRF)$
    \If{$(R,F)$ satisfies Condition \axiom{Y1} and \axiom{Y2}, and $\tc(R)$ is asymmetric}   \label{l:Y12}
      \State Compute the canonical DAG $\GG_{R,F}$ as in Definition~\ref{def:canonDAGN}
      \State Compute the $\FR$-extension $G$ of $\GG_{R,F}$ 
      \State Compute the network $N$ obtained from $G$ according to Lemma~\ref{lem:DAG2Network}
	   \State \Return $G$ and  $N$ 
 	\Else \ \Return \texttt{false} 
    \EndIf \label{l:EndIf}
  \end{algorithmic}
\end{algorithm}		
 \end{minipage}
\end{figure}

In Figure~\ref{fig:anchor_displaying_RF}, we provide an example of the canonical DAG $\GG_{R,F}$,
its $\FR$-extension $G$, and the phylogenetic network $N$ obtained from $G$. We are now able to show
that, whenever a DAG or network exists that \anchor-agrees with $(\RR,\FF)$, it can be constructed
in polynomial time.

\begin{theorem}\label{thm:ATCF-problem}
Let $(\RR,\FF)$ be a pair of anchored triple sets. Then, Algorithm~\ref{alg:ATC-F}, applied to $(\RR,\FF)$ decides 
in polynomial time in $|\XRF|$ whether there exists a phylogenetic DAG, equivalently a phylogenetic network, on $\XRF$
that \anchor-agrees with $(\RR,\FF)$ and, in the affirmative case, constructs such a phylogenetic DAG and a 
corresponding phylogenetic network within the same time bound.
\end{theorem}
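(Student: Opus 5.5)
The proof splits into correctness and running time; both follow almost directly from the results already established.

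\emph{Correctness.} Let $R$ and $F$ be the relations computed in the first two lines of Algorithm~\ref{alg:ATC-F}. By Theorem~\ref{thm:weaklyagrees_iff_strictly_real}, a DAG on $\XRF$ \anchor-agrees with $(\RR,\FF)$ if and only if it strictly \RF-realizes $(R,F)$. By Theorem~\ref{thm:weaklyagrees_invariant_DAG_network}, existence of such a DAG is equivalent to existence of such a phylogenetic network. Hence a solution exists if and only if $(R,F)$ is strictly \RF-realizable. By Theorem~\ref{thm:characterization_AF_realized}(2), this holds if and only if $(R,F)$ satisfies \axiom{Y1}, \axiom{Y2}, and $\tc(R)$ is asymmetric, which is exactly the test in Line~\ref{l:Y12}.

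If the test fails, no DAG or phylogenetic network \anchor-agrees with $(\RR,\FF)$, so returning \texttt{false} is correct. If the test succeeds, Theorem~\ref{thm:weaklyagrees_invariant_DAG_network} guarantees that the $\FR$-extension $G$ of $\GG_{R,F}$ and the network $N$ obtained from $G$ by Lemma~\ref{lem:DAG2Network} are phylogenetic, have leaf set $\XRF$, and \anchor-agree with $(\RR,\FF)$. These are precisely the objects the algorithm returns.

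\emph{Running time.} Put $n=|\XRF|$. Since $|\pairs(\XRF)|=O(n^2)$, the relations $R$ and $F$ have $O(n^4)$ elements. Each anchored triple contributes one pair, and there are at most $O(n^3)$ distinct anchored triples. Hence $R$ and $F$ are built in polynomial time.

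The condition check in Line~\ref{l:Y12} is polynomial for three reasons:
\begin{itemize}
\item $\tc(R)$ can be computed by any standard transitive-closure algorithm on $O(n^2)$ elements, and its asymmetry is checked directly.
\item $\Fcl(R)$ is computable in polynomial time by Theorem~\ref{thm:cl-polytime}.
\item Given $\tc(R)$ and $\Fcl(R)$, conditions \axiom{Y1} and \axiom{Y2} are verified by looping over $O(n^4)$ tuples.
\end{itemize}

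The construction steps are also polynomial. The quotient poset of $\Fcl(R)$ has $O(n^2)$ classes, and its Hasse diagram, hence $\GG_{R,F}$, is obtained in polynomial time. The $\FR$-extension adds two vertices and four arcs for each of the $O(n^2)$ elements of $\support_F\setminus\support_R^+$. Lemma~\ref{lem:DAG2Network} adds at most one vertex and $O(n^2)$ arcs. Alternatively, the final assertion of Theorem~\ref{thm:characterization_AF_realized} directly provides the polynomial decision and construction of the $\FR$-extension.

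\emph{Main obstacle.} The only point needing care is the leaf set. The relabeling in $\GG_{R,F}$ must be well-defined, so that distinct $a$ and $b$ do not satisfy $[aa]=[bb]$. The resulting graph must also have leaf set exactly $\XRF$. I would argue that every $x\in\XRF$ lies in some pair of $\support_R^+$ or $\support_F$, and that \axiom{Y1} forces each class $[xx]$ to be a singleton that is minimal in the quotient poset. I would rely on Theorem~\ref{thm:weaklyagrees_invariant_DAG_network}, whose proof already established that $G$ has leaf set $\XRF$. Since the algorithm outputs exactly the objects named in that theorem, this step reduces to citing it.
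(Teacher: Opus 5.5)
Your proposal is correct and follows the paper's proof: correctness comes from chaining Theorem~\ref{thm:weaklyagrees_iff_strictly_real}, Theorem~\ref{thm:characterization_AF_realized}(2), and Theorem~\ref{thm:weaklyagrees_invariant_DAG_network} exactly as you do. The only difference is in the running time of Lines~3--8: the paper simply points to the corresponding algorithm in \cite{EH:26arxiv}, whereas you give an explicit step-by-step count (and also mention that shortcut), which is also valid.
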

\begin{proof}
Let $(\RR,\FF)$ be a pair of anchored triple sets and let $R \coloneqq \{(xy,xz) \, : \,
\underline{x}y|z \in \RR\}$ and $F \coloneqq \{(xy,xz) \, : \, \underline{x}y|z \in \FF\}$ be
relations on $\pairs(\XRF)$. By Theorem~\ref{thm:weaklyagrees_iff_strictly_real}, there is a DAG
that \anchor-agrees with $(\RR,\FF)$ if and only if $(R,F)$ is strictly \RF-realizable. Moreover, by
Theorem~\ref{thm:characterization_AF_realized}, the latter holds if and only if $(R,F)$ satisfies
\axiom{Y1} and \axiom{Y2} and $\tc(R)$ is asymmetric. Hence, if the condition in Line~\ref{l:Y12} of
Algorithm~\ref{alg:ATC-F} is not satisfied, the algorithm correctly returns \texttt{false}.
Otherwise, the algorithm proceeds and Theorem~\ref{thm:weaklyagrees_invariant_DAG_network} implies
that the $\FR$-extension of the canonical DAG $\GG_{R,F}$ and the network $N$ on $\XRF$ obtained
from it by Lemma~\ref{lem:DAG2Network} are phylogenetic and \anchor-agree with $(\RR, \FF)$. In
summary, Algorithm~\ref{alg:ATC-F} is correct.

For the runtime of Algorithm~\ref{alg:ATC-F}, observe that the construction of $R$ and $F$ can be
achieved in polynomial time in $|\XRF|$, since $|\RR|,|\FF| \in O(|\XRF|^3)$.
Moreover, Line~\ref{l:Y12}--\ref{l:EndIf} in Algorithm~\ref{alg:ATC-F} are, in essence, identical to
Line~3--8 in Algorithm 1 in \cite{EH:26arxiv} which runs in polynomial time in $|\XRF|$; cf.\
\cite[Thm.~4.23]{EH:26arxiv}. Hence, Algorithm~\ref{alg:ATC-F} runs in polynomial time in $|\XRF|$. 
\end{proof}

Taking $\FF=\emptyset$ in Theorem~\ref{thm:weaklyagrees_invariant_DAG_network} and
\ref{thm:ATCF-problem} immediately yields the corresponding result for \PROBLEM{ATC}. In this case
$F=\emptyset$, no $\FR$-extension is applied, and the canonical DAG is $\GG_R$.

\begin{corollary}\label{cor:ATC-problem}
Let $\RR$ be a set of anchored triples. Then, Algorithm~\ref{alg:ATC-F}, applied to
$(\RR,\emptyset)$, decides in polynomial time in $|\XR|$ whether there exists a
phylogenetic DAG, equivalently a phylogenetic network, on $\XR$ that \anchor-displays all anchored
triples in $\RR$ and, in the affirmative case, 
constructs such a phylogenetic DAG and a corresponding phylogenetic network within the same time
bound.

In particular, if there is a DAG that \anchor-displays all anchored triples in $\RR$, then the
canonical DAG $\GG_{R}$ of $R \coloneqq \{(xy,xz) \,:\, \underline{x}y|z \in \RR\}$ as well as the
network $N$ obtained from it by Lemma~\ref{lem:DAG2Network} are phylogenetic and \anchor-display all
anchored triples in $\RR$.
\end{corollary}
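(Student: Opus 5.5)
The plan is to specialize the preceding results to $\FF=\emptyset$, so that $F=\emptyset$ and $\XRF=\XR$. Concretely, I apply Theorem~\ref{thm:ATCF-problem} to the pair $(\RR,\emptyset)$. It says that Algorithm~\ref{alg:ATC-F} decides in polynomial time in $|\XR|$ whether some phylogenetic DAG, equivalently some phylogenetic network, on $\XR$ \anchor-agrees with $(\RR,\emptyset)$, and that it constructs one if so. By Definition~\ref{def:anchor-display}, \anchor-agreeing with $(\RR,\emptyset)$ just means \anchor-displaying every anchored triple in $\RR$, because the forbidden part imposes nothing. This gives the first paragraph of the corollary directly.

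For the ``in particular'' part, suppose some DAG \anchor-displays all of $\RR$. By Theorem~\ref{thm:weaklyagrees_invariant_DAG_network}, with $F=\emptyset$, the $\FR$-extension $G$ of $\GG_{R,F}$ and the network $N$ obtained from $G$ via Lemma~\ref{lem:DAG2Network} are phylogenetic and \anchor-agree with $(\RR,\emptyset)$. It remains to identify this $G$ with $\GG_R$, and this needs two checks.

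First, the $\FR$-extension is trivial. Definition~\ref{def:FRextension} applies an $xy$-extension for each $xy\in\support_F\setminus\support^+_R$, and $\support_\emptyset=\emptyset$, so no extension is applied and $G=\GG_{R,\emptyset}$.

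Second, $\GG_{R,\emptyset}=\GG_R$ holds by definition. The underlying closures also agree: $\cl_{\emptyset}(R)=\cl(R)$, since $\emptyset$-conditional-symmetry is vacuous and therefore $\mathfrak{R}_{R,\emptyset}=\mathfrak{R}_R$.

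Together these show that $\GG_R$ and the associated network $N$ are phylogenetic and \anchor-display all anchored triples in $\RR$. There is no real obstacle; the only care needed is confirming that the specialization $F=\emptyset$ leaves every object (the leaf set $\XR$, the $\FR$-extension, the closure and the canonical DAG) in the stated form.
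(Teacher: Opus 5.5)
Your proposal is correct and follows the paper's own argument. The paper likewise obtains the corollary by setting $\FF=\emptyset$ in Theorems~\ref{thm:weaklyagrees_invariant_DAG_network} and~\ref{thm:ATCF-problem}, noting that then $F=\emptyset$, no $\FR$-extension is applied, and the canonical DAG is $\GG_{R,\emptyset}=\GG_R$; your checks that $\XRF=\XR$ and that \anchor-agreeing with $(\RR,\emptyset)$ means \anchor-displaying all of $\RR$ just make explicit what the paper leaves implicit.
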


To summarize, we have shown that both problems \PROBLEM{ATC} and \PROBLEM{ATC-F} can be decided in
polynomial time.

%
%

\section{Consistency Problems for Rooted Triples}
\label{sec:TC-F}

In this section, we return to ``ordinary'' rooted triples, as opposed to anchored triples. Our aim
is to construct a phylogenetic DAG or network that displays all required triples and displays none
of the forbidden ones. To be more precise, we consider the following two problems and show that they
can be solved in polynomial time.

\begin{problem}[\PROBLEM{Triples Consistency (TC)}]\ \\
  \begin{tabular}{ll}
    \emph{Input:}    & A set $\RR$ of triples. \\
    \emph{Question:} & Is there a phylogenetic network (resp., DAG) on $\XR$ that displays all
                       triples in $\RR$?
  \end{tabular}
\end{problem}

\begin{problem}[\PROBLEM{Triples Consistency with Forbidden Triples (TC-F)}]\ \\
  \begin{tabular}{ll}
    \emph{Input:}    & Two sets $\RR$ and $\FF$ of triples. \\
    \emph{Question:} & Is there a phylogenetic network (resp., DAG) on $\XRF$ that displays all
                       triples in $\RR$ but none in $\FF$?
  \end{tabular}
\end{problem}
To recall, $x, y$, and $z$ are pairwise distinct for any triple $xy|z$. Moreover, a triple $xy|z$ is
displayed by a DAG $G$ if 
\begin{center}
	(i) $\lca_G(xy)$, $\lca_G(xz)$, and $\lca_G(yz)$ are all well-defined and 	
	(ii) $\lca_G(xy) \prec_G \lca_G(xz) = \lca_G(yz)$. 
\end{center}
Thus, a triple $xy|z$ is not displayed by a DAG $G$ if 
\begin{enumerate}
    \item[](i')\ $\lca_G(xy)$, $\lca_G(xz)$, or $\lca_G(yz)$ is not well-defined or\smallskip

		all LCAs $\lca_G(xy)$, $\lca_G(xz)$, and $\lca_G(yz)$ are well-defined and one of the following
		conditions holds
    \item[](ii')\ $\lca_G(xy)\nprec_G \lca_G(xz)$$\quad$ or $\quad$(iii')\ $\lca_G(xy)\nprec_G
                  \lca_G(yz)$$\quad$ or$\quad$ (iv')\ $\lca_G(xz)\neq \lca_G(yz)$.
\end{enumerate}

In the anchored triple setting, we considered the \PROBLEM{ATC} problem as a special case of
\PROBLEM{ATC-F} by taking $\FF=\emptyset$. In this setting, it was very useful to define the
forbidden relation $F=\{(xy,xz)\, : \, \underline{x}y|z\in\FF\}$. This definition works because an
anchored triple $\underline{x}y|z$ is not \anchor-displayed by a DAG $G$ whenever Condition~(ii')
holds, assuming that both $\lca_G(xy)$ and $\lca_G(xz)$ are well-defined. For ordinary rooted
triples, however, the situation is more subtle. In this case, non-displaying is characterized by the
disjunction of Conditions~(ii'), (iii'), and (iv'). This ``or'' structure makes it more difficult to
define a simple relation $F$ directly from $\FF$.

We therefore first consider the \PROBLEM{TC} problem and provide a polynomial-time algorithm for
deciding whether a set of triples $\RR$ can be displayed by a DAG or network and, if so, for
constructing such a DAG or network. To this end, we will show how a set of rooted triples can be
translated into a ``symmetric'' set of anchored triples, for which we can reuse the results of
Section~\ref{sec:ATC-F} in a straightforward manner. For the \PROBLEM{TC-F} problem, the connection
to anchored triples is no longer so direct. Therefore, we first translate the set of required
triples $\RR$ into the relation 
\[  R_\RR \coloneqq \{(xy,xz),(xy,yz) \,:\, xy|z \in \RR\},\]
on $\pairs(\XRF)$ and subsequently saturate it with additional
LCA-constraints induced by $\FF$. This saturated version of $R_{\RR}$ is then used to decide the
\PROBLEM{TC-F} problem in polynomial time. We note in passing that $R_{\RR}$ 
satisfies \axiom{X1} for all triple sets $\RR$. 

\subsection{Required Rooted Triples}
\label{subs:TC}

We first provide a simple but useful result
that links the notion of displaying rooted triples and \anchor-displaying anchored triples.

\begin{lemma}\label{lem:display-vs-anchordisplay}
	Let $G$ be a DAG on $X$ and $x,y,z\in X$. Then the following two statements are equivalent
    \begin{enumerate}
        \item $G$ displays $xy|z$.
        \item $G$ \anchor-displays both $\underline{x}y|z$ and $\underline{y}x|z$. 
    \end{enumerate}
\end{lemma}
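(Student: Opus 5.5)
We prove the two implications separately, working directly from the definitions of displaying in \axiom{T1} and of \anchor-display, and using Lemma~\ref{lem:xyz-lca(yz)} as the key tool.

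\emph{(1) implies (2).} Assume $G$ displays $xy|z$. Then $\lca_G(xy)$, $\lca_G(xz)$ and $\lca_G(yz)$ are well-defined and $\lca_G(xy)\prec_G\lca_G(xz)=\lca_G(yz)$. Since $\lca_G(xy)$ and $\lca_G(xz)$ are well-defined with $\lca_G(xy)\prec_G\lca_G(xz)$, the anchored triple $\underline{x}y|z$ is \anchor-displayed. Using $\lca_G(yx)=\lca_G(xy)$ (recall $xy=yx$) and the equality $\lca_G(xz)=\lca_G(yz)$, we also obtain $\lca_G(yx)\prec_G\lca_G(yz)$. Hence $\underline{y}x|z$ is \anchor-displayed as well.

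\emph{(2) implies (1).} Assume both anchored triples are \anchor-displayed. The definitions directly give that $\lca_G(xy)$, $\lca_G(xz)$ and $\lca_G(yz)$ are well-defined, together with
\[
\lca_G(xy)\prec_G\lca_G(xz) \quad\text{and}\quad \lca_G(xy)\prec_G\lca_G(yz).
\]
It remains to show $\lca_G(xz)=\lca_G(yz)$, and this is the only step needing an argument. We apply Lemma~\ref{lem:xyz-lca(yz)} twice, with the roles of $x$ and $y$ swapped.
\begin{itemize}
\item With the original roles, $\lca_G(xy)\prec_G\lca_G(xz)$ gives $\lca_G(yz)\preceq_G\lca_G(xz)$.
\item With $x$ and $y$ interchanged, $\lca_G(yx)\prec_G\lca_G(yz)$ gives $\lca_G(xz)\preceq_G\lca_G(yz)$.
\end{itemize}
Since $\preceq_G$ is antisymmetric in a DAG, the two inequalities together yield $\lca_G(xz)=\lca_G(yz)$. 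This shows that $G$ displays $xy|z$.

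One point should be checked before applying Lemma~\ref{lem:xyz-lca(yz)}: its hypothesis requires all three LCAs to be well-defined. This is satisfied here because the two \anchor-display conditions together cover all three pairs $xy$, $xz$ and $yz$.
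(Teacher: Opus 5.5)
Your proposal is correct and follows essentially the same route as the paper. The forward direction is read off from the definitions, and the converse applies Lemma~\ref{lem:xyz-lca(yz)} twice, with $x$ and $y$ interchanged, to obtain $\lca_G(xz)=\lca_G(yz)$ by antisymmetry.
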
    
\begin{proof}
 Let $G$ be a DAG on $X$ and $x,y,z \in X$. If $G$ displays $xy|z$, then $\lca_G(xy)$, $\lca_G(xz)$,
 and $\lca_G(yz)$ are well-defined and it holds that $\lca_G(xy)\prec_G \lca_G(xz) = \lca_G(yz)$. By
 definition, $G$ \anchor-displays both $\underline{x}y|z$ and $\underline{y}x|z$. For the converse,
 assume that $G$ \anchor-displays both $\underline{x}y|z$ and $\underline{y}x|z$. Hence,
 $\lca_G(xy)$, $\lca_G(xz)$, and $\lca_G(yz)$ are well-defined and it holds that $\lca_G(xy) \prec_G
 \lca_G(xz)$ and $\lca_G(xy) \prec_G \lca_G(yz)$. Moreover, by Lemma~\ref{lem:xyz-lca(yz)}, we
 obtain $\lca_G(yz) \preceq_G \lca_G(xz)$ and $\lca_G(xz) \preceq_G \lca_G(yz)$. Consequently,
 $\lca_G(xy)\prec_G\lca_G(yz)=\lca_G(xz)$ holds. In other words, $G$ displays the triple $xy|z$. 
\end{proof}

The latter result allows us to reduce the problem of displaying rooted triples to the corresponding
problem for anchored triples.

\begin{theorem}\label{thm:TC-char} 
Let $\RR$ be a set of triples and put $\RR'\coloneqq \{\underline{x}y|z, \underline{y}x|z \,:\,
xy|z\in \RR\}$. Then the following statements are equivalent for every DAG $G$ on $\XR$. 
\begin{enumerate}
	\item $G$ displays all triples in $\RR$. 
	\item $G$ $\anchor$-displays all anchored triples in $\RR'$. 
    \item $G$ strictly realizes $R_\RR$.
    \item $G$ realizes $R_\RR$ and $(xz,xy), (yz,xy) \notin \tc(R_\RR)$ for all $xy|z \in \RR$.
\end{enumerate} 
\end{theorem}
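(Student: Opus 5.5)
The plan is to prove the cycle of implications (1)$\Leftrightarrow$(2), (2)$\Leftrightarrow$(3) and (3)$\Leftrightarrow$(4), reusing the anchored-triple machinery from Section~\ref{sec:ATC-F} with $\FF=\emptyset$.

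The equivalence (1)$\Leftrightarrow$(2) is immediate from Lemma~\ref{lem:display-vs-anchordisplay}, applied to each triple $xy|z\in\RR$ separately. By construction of $\RR'$, $G$ displays every $xy|z\in\RR$ precisely when it \anchor-displays both $\underline{x}y|z$ and $\underline{y}x|z$ for each of them. Before using this, I will also note that $L(\RR')$ equals $L(\RR)$, so that $X_{\RR'}=\XR$ and $G$ is a DAG on the right leaf set.

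For (2)$\Leftrightarrow$(3) and (2)$\Leftrightarrow$(4), I will apply Theorem~\ref{thm:weaklyagrees_iff_strictly_real} to the pair $(\RR',\emptyset)$. The key bookkeeping step is to check that the relation that theorem associates with $\RR'$ coincides with $R_\RR$. The anchored triple $\underline{x}y|z$ contributes $(xy,xz)$, and $\underline{y}x|z$ contributes $(yx,yz)=(xy,yz)$, using $ab=ba$ in $\pairs(X)$. Hence
\[
\{(xy,xz):\underline{x}y|z\in\RR'\}=R_\RR .
\]
With forbidden set $F=\emptyset$, condition \axiom{F} holds vacuously. So strict \RF-realization of $(R_\RR,\emptyset)$ is just strict realization of $R_\RR$, and \RF-realization is just realization.

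Theorem~\ref{thm:weaklyagrees_iff_strictly_real} then states that $G$ \anchor-agrees with $(\RR',\emptyset)$ (equivalently (2)) if and only if $G$ strictly realizes $R_\RR$, which is (3). It also gives equivalence with: $G$ realizes $R_\RR$ and $(xz,xy)\notin\tc(R_\RR)$ for all $\underline{x}y|z\in\RR'$. Running over both anchored versions of each triple $xy|z\in\RR$, this last condition reads $(xz,xy)\notin\tc(R_\RR)$ and $(yz,yx)=(yz,xy)\notin\tc(R_\RR)$, which is exactly (4).

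I do not expect a real obstacle. The only points needing care are the identification of the relation with $R_\RR$ under the unordered-pair convention, and the matching of leaf sets. Both are routine.
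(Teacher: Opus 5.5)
Your proposal is correct and follows the paper's proof essentially step for step. The paper also gets (1)$\Leftrightarrow$(2) from Lemma~\ref{lem:display-vs-anchordisplay}, identifies $\{(xy,xz):\underline{x}y|z\in\RR'\}$ with $R_\RR$, and applies Theorem~\ref{thm:weaklyagrees_iff_strictly_real} to $(\RR',\emptyset)$ to obtain (2)$\Leftrightarrow$(3)$\Leftrightarrow$(4), translating the $\tc$-condition through the symmetry of $\RR'$; your explicit check that $X_{\RR'}=\XR$ is a small point the paper leaves implicit.
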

\begin{proof}
 Let $\RR$, $\RR'$, and $G$ be as stated and let $R'\coloneqq\{(xy,xz)\,:\,
 \underline{x}y|z\in\RR'\}$ be a relation on $\pairs(\XRF)$. One easily verifies that $R'$ and
 $R_\RR$ coincide. By Lemma~\ref{lem:display-vs-anchordisplay}, Statements (1) and (2) are
 equivalent. Moreover, by definition, $G$ \anchor-displays all anchored triples in $\RR'$ if and
 only if $G$ \anchor-agrees with $(\RR',\emptyset)$. Therefore,
 Theorem~\ref{thm:weaklyagrees_iff_strictly_real} implies that Statement (2) is equivalent to $G$
 strictly \RF-realizing $(R',\emptyset)$ which, by definition, is equivalent to Statement (3).
 Similarly, Theorem~\ref{thm:weaklyagrees_iff_strictly_real} states that Statement (3) is equivalent
 to $G$ realizing $R'$ and $(xz,xy) \notin \tc(R')$ for all $\underline{x}y|z \in \RR'$. Note that
 by definition of $\RR'$ and $R'$, we have $\underline{x}y|z \in \RR'$ if and only if $xy|z\in \RR$
 if and only if $\underline{y}x|z \in \RR'$. Thus, $G$ realizes $R'$ and $(xz,xy) \notin \tc(R')$
 for all $\underline{x}y|z \in \RR'$ if and only if $G$ realizes $R'$ and $(xz,xy), (yz,xy) \notin
 \tc(R')$ for all $xy|z \in \RR$. Thus, Statements (3) and (4) are equivalent.
\end{proof}

In particular, if there exists a DAG that displays all required triples in $\RR$, there also exists
a phylogenetic network that does so.

\begin{theorem}\label{thm:TC-DAG_iff_network}
Let $\RR$ be a set of triples. Then, the following statements are equivalent. 
\begin{enumerate}
    \item There is a DAG on $\XR$ that displays all triples in $\RR$. 
    \item There is a phylogenetic network on $\XR$ that displays all triples in $\RR$. 
\end{enumerate}
In particular, suppose that there exists a DAG that displays all triples in $\RR$. Then the
canonical DAG $\GG_{R_\RR}$ as well as the network $N$ obtained from it by
Lemma~\ref{lem:DAG2Network} are phylogenetic and display all triples in $\RR$.
\end{theorem}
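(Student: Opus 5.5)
The plan is to reduce the statement to the anchored setting and then apply Corollary~\ref{cor:ATC-problem}. Let $\RR' \coloneqq \{\underline{x}y|z, \underline{y}x|z \,:\, xy|z\in\RR\}$, as in Theorem~\ref{thm:TC-char}. Two bookkeeping facts are needed first. The leaf sets agree, i.e.\ $X_{\RR'}=\XR$, because each anchored triple in $\RR'$ is built on exactly the same three leaves as the rooted triple it comes from. The relations also coincide: $R' \coloneqq \{(xy,xz) : \underline{x}y|z\in\RR'\} = R_\RR$, which was already observed in the proof of Theorem~\ref{thm:TC-char}.

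The direction (2)$\Rightarrow$(1) is immediate, since every phylogenetic network is a DAG. For (1)$\Rightarrow$(2), suppose $G$ is a DAG on $\XR$ that displays every triple in $\RR$. By the equivalence of Statements~(1) and~(2) in Theorem~\ref{thm:TC-char}, the same $G$ \anchor-displays every anchored triple in $\RR'$. So there is a DAG that \anchor-displays all of $\RR'$.

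Now apply Corollary~\ref{cor:ATC-problem} to $\RR'$. It states that the canonical DAG $\GG_{R'}$, together with the network $N$ obtained from it by Lemma~\ref{lem:DAG2Network}, is phylogenetic, lives on $X_{\RR'}=\XR$, and \anchor-displays all anchored triples in $\RR'$. Since $R'=R_\RR$, we have $\GG_{R'}=\GG_{R_\RR}$. Applying Theorem~\ref{thm:TC-char} in the reverse direction, both to $\GG_{R_\RR}$ and to $N$ (each is a DAG on $\XR$), shows that they display all triples in $\RR$. This gives (2) and also the ``in particular'' part of the statement.

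The only point that needs care is that the leaf sets really are $\XR$. The canonical DAG has exactly the vertices $[aa]$ for $a\in\XR$ as leaves, and this is already covered by the corollary. The network $N$ has the same leaf set by Lemma~\ref{lem:DAG2Network}. I therefore expect no substantive obstacle: the whole argument is a matter of identifying $R'$ with $R_\RR$ and chaining together results already stated earlier in the paper.
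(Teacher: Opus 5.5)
Your proposal is correct and follows essentially the same route as the paper. You pass to $\RR'$ via Theorem~\ref{thm:TC-char}, identify $R'=R_\RR$ and $X_{\RR'}=\XR$, invoke Corollary~\ref{cor:ATC-problem}, and transfer back. The only cosmetic difference is that the paper obtains (1)$\Leftrightarrow$(2) from Theorem~\ref{thm:weaklyagrees_invariant_DAG_network} applied to $(\RR',\emptyset)$ and uses the corollary only for the ``in particular'' part, whereas you get both conclusions at once from the corollary.
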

\begin{proof}
Let $\RR$ be a set of triples and $\RR' \coloneqq \{\underline{x}y|z, \underline{y}x|z \,:\, xy|z\in
\RR\}$. By Theorem~\ref{thm:TC-char}, a DAG displays all triples in $\RR$ if and only if it
\anchor-displays all anchored triples in $\RR'$, equivalently, if and only if it \anchor-agrees with
$(\RR',\emptyset)$. By Theorem~\ref{thm:weaklyagrees_invariant_DAG_network}, there is a DAG on
$X_{\RR',\emptyset}$ that \anchor-agrees with $(\RR',\emptyset)$ if and only if there is a
phylogenetic network $N$ on $X_{\RR',\emptyset}$ that \anchor-agrees with $(\RR',\emptyset)$. This
together with Theorem~\ref{thm:TC-char} and $X_{\RR',\emptyset} = \XR$ implies the equivalence of
Statement~(1) and (2).

Suppose now that there is a DAG $G$ that displays all triples in $\RR$. By
Theorem~\ref{thm:TC-char}, $G$ \anchor-displays all anchored triples in $\RR'$. Hence, by Corollary
\ref{cor:ATC-problem}, the canonical DAG $\GG_{R'}$ of $R'\coloneqq\{(xy,xz)\,:\,
\underline{x}y|z\in\RR'\}$ and the network $N$ obtained from it by Lemma~\ref{lem:DAG2Network} are
phylogenetic and \anchor-display all anchored triples in $\RR'$. Since $R' = R_\RR$, we have 
$\GG_{R'} = \GG_{R_\RR}$. Taking the latter arguments together with Theorem~\ref{thm:TC-char},
this implies that both $\GG_{R_\RR}$ and $N$ display all triples in $\RR$, proving the second part
of the theorem.
\end{proof}

Clearly, Theorem~\ref{thm:TC-char} together with Corollary~\ref{cor:ATC-problem} now implies that
the \PROBLEM{TC} problem can be decided in polynomial time.

\begin{theorem}\label{thm:newAlgo-TC}
Let $\RR$ be a set of triples and put $\RR' \coloneqq \{\underline{x}y|z, \underline{y}x|z \,:\,
xy|z\in \RR\}$. 
Then, Algorithm~\ref{alg:ATC-F}, applied to $(\RR',\emptyset)$, decides in polynomial time in
$|\XR|$ whether there exists a phylogenetic DAG, equivalently a phylogenetic network, on $\XR$ that
displays all triples in $\RR$ and, in the affirmative case, 
constructs such a phylogenetic DAG and a corresponding phylogenetic network within the same time
bound.
\end{theorem}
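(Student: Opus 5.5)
The plan is to reduce Theorem~\ref{thm:newAlgo-TC} to the anchored setting, which is already solved. Fix a triple set $\RR$ and let $\RR'$ be as in the statement. We first note that $X_{\RR',\emptyset}=\XR$, since $xy|z$ and its two anchored versions use the same three leaves. We also note that the relation computed in Algorithm~\ref{alg:ATC-F} on input $(\RR',\emptyset)$ is exactly
\[
R'=\{(xy,xz)\,:\,\underline{x}y|z\in\RR'\}=R_\RR ,
\]
as observed in the proof of Theorem~\ref{thm:TC-char}. Finally, the forbidden relation is $F=\emptyset$, so no $\FR$-extension takes place and the canonical DAG is $\GG_{R_\RR}$.

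For correctness, by Theorem~\ref{thm:TC-char} a DAG $G$ on $\XR$ displays all triples in $\RR$ if and only if it \anchor-displays all anchored triples in $\RR'$, that is, \anchor-agrees with $(\RR',\emptyset)$. Corollary~\ref{cor:ATC-problem} then shows that Algorithm~\ref{alg:ATC-F} on $(\RR',\emptyset)$ behaves as follows:
\begin{itemize}
\item it returns \texttt{false} exactly when no such DAG exists;
\item otherwise it returns the phylogenetic DAG $\GG_{R_\RR}$ and the phylogenetic network $N$ of Lemma~\ref{lem:DAG2Network}, and both \anchor-display $\RR'$.
\end{itemize}
Applying Theorem~\ref{thm:TC-char} again (or Theorem~\ref{thm:TC-DAG_iff_network}), both outputs display every triple in $\RR$. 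The phrase ``DAG, equivalently network'' in the statement is exactly Theorem~\ref{thm:TC-DAG_iff_network}.

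For the running time, $|\RR'|\le 2|\RR|\in O(|\XR|^3)$, so $\RR'$ is built in polynomial time in $|\XR|$. By Corollary~\ref{cor:ATC-problem}, Algorithm~\ref{alg:ATC-F} then runs in time polynomial in $|X_{\RR',\emptyset}|=|\XR|$.

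The only step needing care is the bookkeeping between the two settings. We must check that the leaf sets coincide, so that the output is a DAG on $\XR$ and not on some other set. We must also check that the relation $R'$ the algorithm builds really is $R_\RR$, so that the guarantees of Corollary~\ref{cor:ATC-problem} transfer verbatim. Both checks are immediate from the definitions, so I expect no genuine obstacle.
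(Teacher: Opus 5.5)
Your proposal is correct and matches the paper's proof. The paper also combines Theorem~\ref{thm:TC-char} (displaying all triples in $\RR$ is equivalent to \anchor-displaying all anchored triples in $\RR'$) with Corollary~\ref{cor:ATC-problem} applied to $(\RR',\emptyset)$. Your additional bookkeeping ($X_{\RR',\emptyset}=\XR$, $R'=R_\RR$, no $\FR$-extension since $F=\emptyset$, and $|\RR'|\le 2|\RR|$) only makes explicit what the paper leaves implicit.
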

\begin{proof}
Let $\RR $ and $\RR'$ be as stated. By Corollary~\ref{cor:ATC-problem}, Algorithm~\ref{alg:ATC-F},
applied to $(\RR',\emptyset)$, decides in polynomial time in $|\XR|$ whether there exists a
phylogenetic DAG, equivalently a phylogenetic network, that \anchor-displays all anchored triples in
$\RR'$ and, in the affirmative case, Algorithm~\ref{alg:ATC-F}, applied to $(\RR',\emptyset)$,
constructs such a phylogenetic DAG and network within the same time bound. By
Theorem~\ref{thm:TC-char}, this is equivalent to the existence of a phylogenetic DAG or network that
displays all triples in $\RR$.
\end{proof}

\subsection{Required and Forbidden Rooted Triples} \label{subs:TC-F}

We now consider pairs $(\RR,\FF)$ of triple sets and design a polynomial-time algorithm for
\PROBLEM{Triples Consistency with Forbidden Triples (TC-F)}. Analogously to the notion of a DAG
\anchor-agreeing with a pair of anchored triple sets, we use the following terminology.

\begin{definition}
Let $(\RR,\FF)$ be a pair of triple sets. A DAG $G$ \emph{agrees} with $(\RR, \FF)$ if $G$ displays
all triples in $\RR$ but none of the ones in $\FF$.  
\end{definition}

To recall, a triple $xy|z$ is not displayed by a DAG $G$ if (i') at least one of $\lca_G(xy)$,
$\lca_G(xz)$, or $\lca_G(yz)$ is not well-defined or, otherwise, (ii') $\lca_G(xy)\nprec_G
\lca_G(xz)$ or (iii') $\lca_G(xy)\nprec_G \lca_G(yz)$ or (iv') $\lca_G(xz)\neq \lca_G(yz)$. We note
in passing that one can show that if a triple is not displayed by a DAG $G$, then at least one of
the conditions (i'), (ii'), and (iii') must be satisfied. Indeed, if all three LCAs are well-defined
and neither (ii') nor (iii') holds, then $\lca_G(xy)\prec_G\lca_G(xz)$ and
$\lca_G(xy)\prec_G\lca_G(yz)$. By applying Lemma~\ref{lem:xyz-lca(yz)} twice, we obtain
$\lca_G(xz)=\lca_G(yz)$ and, hence, $xy|z$ is displayed.

For us some other observations are more important. Our approach to obtain a DAG $G'$ that agrees
with a pair $(\RR,\FF)$ of triple sets is based, in essence, on the following ideas. First, recall
that for a given DAG $G$ on $X$, we defined the relation $\rel_G$ on $\pairs(X)$ as 
\[\rel_G \coloneqq \{(ab,xy) \, : \, \lca_G(ab), \lca_G(xy) \text{ are well-defined and }
  \lca_G(ab)\preceq_G \lca_G(xy)\}.
\] 

Now consider a forbidden triple $xy|z\in\FF$ and let $G$ be a DAG that displays all triples in
$\RR$. Two different situations can occur. First, suppose that not all of the three LCAs
$\lca_G(xy), \lca_G(xz),\lca_G(yz)$ are forced to be well-defined by the required triples in $\RR$.
In other words, there is some $ab\in\{xy,xz,yz\}$ such that no triple in $\RR$ requires $\lca_G(ab)$
to be well-defined. If $\lca_G(ab)$ is already not well-defined, then $G$ does not display $xy|z$, and
there is nothing to do. If, on the other hand, $\lca_G(ab)$ is well-defined, then we can apply an
$ab$-extension to obtain a DAG $G'$ in which $\lca_{G'}(ab)$ is no longer well-defined (cf.\
Observation~\ref{obs:xy-extension1}). By construction, this prevents $G'$ from displaying the
forbidden triple $xy|z$, while the triples in $\RR$ remain displayed. This first case is handled in
Definition~\ref{def:FFRR_extension} and Proposition~\ref{prop:reduction_of_forbidden_triple_sets}.

Second, suppose that the required triples force all three LCAs $\lca_G(xy)$, $\lca_G(xz)$, and
$\lca_G(yz)$ to be well-defined. In this case, non-display of $xy|z$ cannot be ensured merely by
making one of these LCAs not well-defined, since that would, in turn, prevent $G$ from displaying
all required triples. Instead, we have to modify the underlying LCA-constraints. If $xy|z$ is
displayed by $G$, then $ \lca_G(xy) \prec_G \lca_G(xz) = \lca_G(yz)$ and, thus, $
\{(xy,xz),(xy,yz),(xz,yz),(yz,xz)\}\subseteq \rel_G$ holds. The idea is then to add the ``reverse''
elements $(xz,xy)$ and $(yz,xy)$. For any DAG $G'$ realizing the resulting relation $Q$, this
enforces $ \lca_{G'}(xy)=\lca_{G'}(xz)=\lca_{G'}(yz)$. Consequently, $G'$ does not display $xy|z$.
Provided that the required triples in $\RR$ are still displayed in $G'$, this DAG $G'$ agrees with
$(\RR,\{xy|z\})$. This second case is handled systematically by the saturation procedure introduced
in Definition~\ref{def:method}, which forms the basis for solving the \PROBLEM{TC-F} problem in
polynomial time. We now introduce the main notation needed to formalize these ideas and start with
the following

\begin{definition}\label{def:tsupp}
If $\mathcal S$ is a set of rooted triples, we define the \emph{triple-support} as
$\tsupp_{\mathcal{S}}\coloneqq \bigcup_{xy|z \in \mathcal{S}} \{xy,xz,yz\}$.
\end{definition}

Note that since a triple $xy|z$ is defined on three distinct leaves $x,y,z$, every
$xy\in\tsupp_{\mathcal{S}}$ satisfies $x\neq y$. Hence, if all triples in $\RR$ are displayed by
$G$, then $\tsupp_{\mathcal{\RR}}$ contains precisely those elements $ab$ for which there is some
triple in $\RR$ enforcing $\lca_G(ab)$ to be well-defined in $G$. This motivates to define a
restriction $\FFRR$ of $\FF$ for a pair $(\RR, \FF)$ of triple sets as follows 
\[
\FFRR \coloneqq \{ xy|z  \in \FF \ \,:\, xy,xz,yz \in \tsupp_{\RR}\}. 
\]

Thus, the set $\FFRR$ contains all triples $xy|z \in \FF$ for which $\lca_G(xy)$, $\lca_G(xz)$, and
$\lca_G(yz)$ must be well-defined in any DAG $G$ that displays all triples in $\RR$ and, in
particular, in any DAG $G$ that agrees with $(\RR,\FF)$. Consequently, if $ab\in
\tsupp_{\FF}\setminus \tsupp_{\RR}$ and, thus, $a \neq b$, then $\lca_G(ab)$ does not need to be
well-defined in a DAG $G$ that agrees with $(\RR,\FF)$. In this case, we can apply an $ab$-extension
to $G$ to obtain $G'$ for which $|\LCA_{G'}(ab)|>1$ holds (cf.\ Observation~\ref{obs:xy-extension1}). This
observation, in turn, motivates the following definition, that we need in the upcoming results. 
\begin{definition}\label{def:FFRR_extension}
Let $(\RR,\FF)$ be a pair of triple sets. Given a DAG $G$ on $\XRF$, the \emph{$\FFRR$-extension of
$G$} is obtained from $G$ by stepwise application of $xy$-extensions for all $xy \in \tsupp_\FF
\setminus \tsupp_\RR$.
\end{definition}

The definition of the $\FFRR$-extension is similar to that of the $\FR$-extension
(cf.~Definition~\ref{def:FRextension}). In particular, suppose that $R$ and $F$ are relations on
$\pairs(\XRF)$ such that $\tsupp_\RR=\support_R$ and $\tsupp_\FF=\support_F$. Since $\tsupp_\FF$
contains no elements $xx$, the set $\tsupp_\FF\setminus\tsupp_\RR$ coincides with
$\support_F\setminus\support_R^+$. Hence, in this situation, the $\FFRR$-extension of a DAG
coincides with its $\FR$-extension.

The next result shows that, for the purpose of deciding whether a DAG exists
that agrees with $(\RR,\FF)$, it suffices to consider the restricted
forbidden set $\FFRR$ instead of $\FF$.

\begin{proposition}\label{prop:reduction_of_forbidden_triple_sets}
For every pair $(\RR,\FF)$ of triple sets and every DAG $G$ on $\XRF$, the following statements
hold. 
\begin{enumerate}
    \item If $G$ agrees with $(\RR,\FF)$, then $G$ agrees with $(\RR,\FFRR)$.
    \item If $G$ agrees with $(\RR,\FFRR)$, then the $\FFRR$-extension of $G$ agrees with $(\RR,\FF)$. 
\end{enumerate}
Thus, there exists a DAG that agrees with $(\RR,\FF)$ if and only if there exists a DAG that agrees
with $(\RR,\FFRR)$. 
\end{proposition}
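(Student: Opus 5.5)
Statement (1) is immediate because $\FFRR\subseteq\FF$: if $G$ displays all triples in $\RR$ and none in $\FF$, then it displays none in $\FFRR$. The final equivalence follows from (1) and (2) together. So the substance is Statement~(2), and I would prove it by tracking how the $\FFRR$-extension changes LCAs.

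Let $G$ agree with $(\RR,\FFRR)$ and let $G'$ be its $\FFRR$-extension. By construction, $G'$ is obtained by finitely many $ab$-extensions, one for each $ab\in\tsupp_\FF\setminus\tsupp_\RR$, and each such $ab$ satisfies $a\neq b$. I would apply Observation~\ref{obs:xy-extension1} inductively along this sequence to get three facts. First, $G'$ is a DAG on $\XRF$. Second, $u\preceq_G v$ if and only if $u\preceq_{G'} v$ for all $u,v\in V(G)$. Third, $\LCA_{G'}(cd)=\LCA_G(cd)$ for every $cd\notin\tsupp_\FF\setminus\tsupp_\RR$, and $|\LCA_{G'}(cd)|>1$ for every $cd\in\tsupp_\FF\setminus\tsupp_\RR$.

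The one point to check in the induction is that a later $ab$-extension does not disturb a pair $cd$ handled earlier. This holds because it keeps $\LCA(cd)$ unchanged for $\{c,d\}\neq\{a,b\}$ by part~(iii) of the observation. It also holds because descendancy among old vertices is preserved by part~(ii), and this includes the vertices added in earlier steps, since those belong to the current DAG.

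With these facts, the triples in $\RR$ remain displayed. For $xy|z\in\RR$, all three pairs $xy,xz,yz$ lie in $\tsupp_\RR$, so their LCA sets are unchanged and the LCAs are still well-defined with the same vertices. Since descendancy is preserved, $\lca_{G'}(xy)\prec_{G'}\lca_{G'}(xz)=\lca_{G'}(yz)$.

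For a forbidden triple $xy|z\in\FF$, I would split into two cases.
\begin{itemize}
\item If $xy|z\in\FFRR$, all three pairs lie in $\tsupp_\RR$, so again LCAs and their order relations coincide in $G$ and $G'$. Since $G$ does not display $xy|z$, neither does $G'$.
\item If $xy|z\notin\FFRR$, some $ab\in\{xy,xz,yz\}$ lies outside $\tsupp_\RR$. Because $ab\in\tsupp_\FF$, we get $ab\in\tsupp_\FF\setminus\tsupp_\RR$, so $|\LCA_{G'}(ab)|>1$. Then condition~(i') holds and $G'$ does not display $xy|z$.
\end{itemize}
Hence $G'$ agrees with $(\RR,\FF)$.

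The main obstacle is only bookkeeping: making the iterated use of Observation~\ref{obs:xy-extension1} rigorous. That means checking that the leaf set stays $\XRF$ (the new vertices have out-degree $2$) and that multiple LCAs created in earlier extensions persist. I would handle this by a short induction on the number of extensions applied.
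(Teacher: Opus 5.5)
Your proposal is correct and follows essentially the same route as the paper. Statement~(1) follows from $\FFRR\subseteq\FF$. Statement~(2) follows because the extensions leave unchanged both the LCA sets of all pairs in $\tsupp_\RR$ and the ancestor order on existing vertices, while every triple in $\FF\setminus\FFRR$ gets a pair with multiple LCAs. Your explicit induction over the sequence of $ab$-extensions, which ensures that multiple LCAs created earlier persist, just spells out bookkeeping that the paper compresses into one appeal to Observation~\ref{obs:xy-extension1}.
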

\begin{proof}
Let $(\RR,\FF)$ be a pair of triple sets and $G$ be a DAG on $\XRF$. Clearly, if $G$ agrees with
$(\RR, \FF)$, then $G$ agrees with $(\RR, \FFRR)$, since $\FFRR \subseteq \FF$, which proves
Statement (1).
 
Suppose now that $G$ agrees with $(\RR,\FFRR)$, and let $H$ be the $\FFRR$-extension of $G$. This
extension $H$ is obtained by applying $pq$-extensions for all pairs
$pq\in\tsupp_\FF\setminus\tsupp_\RR$. Hence, for every $ab\in\tsupp_\RR$, no $ab$-extension is
applied. By Observation~\ref{obs:xy-extension1}, each such extension preserves the set of LCAs of
every pair in $\tsupp_\RR$ and preserves all ancestor relations among the vertices of the previous
DAG. Consequently, after all extensions have been applied, we have $ \LCA_H(ab)=\LCA_G(ab)$ for all
$ab\in\tsupp_\RR$, and the ancestor relations among the vertices of $G$ are the same in $G$ and in
$H$. Hence, every triple in $\RR$ remains displayed by $H$. Moreover, since every triple in $\FFRR$
involves only pairs from $\tsupp_\RR$, every triple in $\FFRR$ remains not displayed by $H$. It
remains to consider a triple $xy|z\in\FF\setminus\FFRR$. By definition of $\FFRR$, at least one of
the three pairs $xy,xz,yz$ lies in $\tsupp_\FF\setminus\tsupp_\RR$. For this pair, say $pq$, a
$pq$-extension was applied to obtain $H$. Hence, by Observation~\ref{obs:xy-extension1},
$\lca_H(pq)$ is not well-defined. Consequently, at least one of $\lca_H(xy)$, $\lca_H(xz)$, and
$\lca_H(yz)$ is not well-defined, and so $H$ does not display $xy|z$. Thus, $H$ agrees with
$(\RR,\FF)$. This proves Statement~(2).

Statement~(1) and (2) together imply that there exists a DAG that agrees with $(\RR,\FF)$ if and
only if there exists a DAG that agrees with $(\RR,\FFRR)$.
\end{proof}

\begin{figure}
    \centering
    \includegraphics[width=0.8\textwidth]{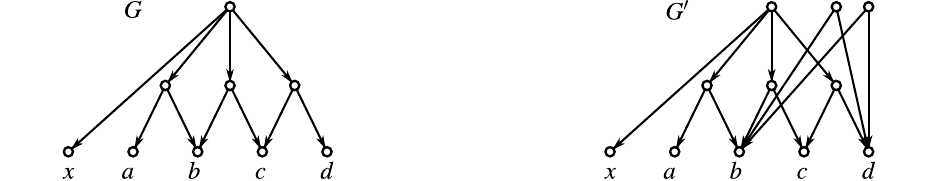}
    \caption{Consider the pair $(\RR,\FF)$ of triple sets with 
    $\RR=\{ab|x, bc|x, cd|a\}$ and $\FF=\{ac|x, ab|d\}$.
    Then $\FFRR = \{ac|x\}$ and the DAG $G$ agrees with $(\RR,\FFRR)$ but does not agree with $(\RR,\FF)$, since $ab|d$ is displayed. 
    In accordance with Proposition~\ref{prop:reduction_of_forbidden_triple_sets}, the $\FFRR$-extension $G'$ of $G$ agrees with $(\RR,\FF)$. 
    }
    \label{fig:FFRR_extension}
\end{figure}

In Figure~\ref{fig:FFRR_extension}, we show an example of a DAG $G$ that agrees with some
$(\RR,\FFRR)$ but not with $(\RR,\FF)$, together with the $\FFRR$-extension of $G$, which agrees
with $(\RR,\FF)$. We will use this example throughout Section~\ref{subs:TC-F}. Recall that our main
aim is to construct a DAG, respectively, network on $\XRF$ that agrees with a given pair $(\RR,\FF)$
of triple sets. We will solve this problem by designing a relation $Q_K$, based on the properties of
$\RR$ and $\FF$, such that the $\FFRR$-extension of the canonical DAG $\GG_{Q_K}$ agrees with
$(\RR,\FF)$ if and only if there exists some DAG $G$ that agrees with $(\RR,\FF)$. Since we want to
obtain DAGs and networks defined on the common leaf set $\XRF$, all LCA-constraints associated with
$(\RR,\FF)$ are regarded as relations on a common ground set, which we fix for the remainder of this
section as follows.

\begin{remark}
From now on, unless explicitly stated otherwise, all relations considered in this section are
relations on $\pairs(\XRF)$. 
\end{remark} 

To recall, we have defined
$
    R_\RR \coloneqq \{(xy,xz),(xy,yz) \,:\, xy|z \in \RR\}.
$
Hence, for $\RR=\{xy|z\}$, we have
\[
    R_{\{xy|z\}}=\{(xy,xz),(xy,yz)\}.
\]
We now consider the following extended relation associated with $xy|z$:
\[
    \Rext_{\{xy|z\}}
    \coloneqq
    R_{\{xy|z\}} \cup \{(xz,yz),(yz,xz)\} = 
    \{(xy,xz),(xy,yz),(xz,yz),(yz,xz)\}. 
\]

The relation $\Rext_{\{xy|z\}}$ encodes precisely the LCA-constraints imposed by the triple $xy|z$.
Indeed, if a DAG $G$ displays $xy|z$, then $ \lca_G(xy)\prec_G \lca_G(xz) =\lca_G(yz)$. The strict
inequalities $\lca_G(xy)\prec_G\lca_G(xz)$ and $\lca_G(xy)\prec_G\lca_G(yz)$ are represented by the
pairs $(xy,xz)$ and $(xy,yz)$, respectively, while the equality $\lca_G(xz)=\lca_G(yz)$ is
represented by the two pairs $(xz,yz)$ and $(yz,xz)$ in $\Rext_{\{xy|z\}}$.

We define now a particular restriction of a given relation $R$ on $\pairs(X)$ to three elements of
$X$. 

\begin{definition} \label{def:restrRxyz}
    Given a relation $R$ on $\pairs(X)$ and three distinct elements $x,y,z$ $\in X$, we define the
    \emph{restriction of $R$ to $x,y,z$} as 
    \[
    R\restr  \coloneqq 
    R \cap \{(xy,xz),(xy,yz),(xz,xy),(xz,yz),(yz,xy),(yz,xz)\}. 
    \]
\end{definition}

Since $x,y,z$ are pairwise distinct, the restriction $R\restr$ only records comparisons between the
three distinct pairs $xy$, $xz$, and $yz$. In particular, elements involving repeated pairs, such as
$(xy,xy)$, as well as elements involving repeated leaves, such as $(xx,xy)$ or $(zz,zx)$, do not
occur in $R\restr$. We next relate this restricted relation to the relation associated with a single
triple.

\begin{lemma}\label{lem:displays-iff-restr-equal}
Let $xy|z$ be a triple and $G$ be a DAG with leaves $x,y$, and $z$. Then, $G$ displays $xy|z$ if and
only if $\rel_G\relrestr=\Rext_{\{x y |z\}}$.  
\end{lemma}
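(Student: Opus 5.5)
We prove both directions by unwinding the definitions of $\rel_G$ and of the restriction $\rel_G\relrestr$. Throughout, recall that $x,y,z$ are pairwise distinct, so $xy$, $xz$, $yz$ are three distinct elements of $\pairs(X)$. The restriction only records which of the six ordered pairs among $xy,xz,yz$ lie in $\rel_G$.

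\emph{($\Rightarrow$)} Suppose $G$ displays $xy|z$. Then all three LCAs are well-defined and $\lca_G(xy)\prec_G\lca_G(xz)=\lca_G(yz)$.
\begin{itemize}
\item Since $\lca_G(xy)\preceq_G\lca_G(xz)$ and $\lca_G(xy)\preceq_G\lca_G(yz)$, the pairs $(xy,xz)$ and $(xy,yz)$ lie in $\rel_G$.
\item The equality $\lca_G(xz)=\lca_G(yz)$ puts both $(xz,yz)$ and $(yz,xz)$ into $\rel_G$.
\item Conversely, $(xz,xy)\in\rel_G$ would give $\lca_G(xz)\preceq_G\lca_G(xy)\prec_G\lca_G(xz)$. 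This is a directed cycle, contradicting acyclicity of $G$. Hence $(xz,xy)\notin\rel_G$, and likewise $(yz,xy)\notin\rel_G$.
\end{itemize}
Intersecting with the six-element set of Definition~\ref{def:restrRxyz} therefore yields exactly $\{(xy,xz),(xy,yz),(xz,yz),(yz,xz)\}=\Rext_{\{xy|z\}}$.

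\emph{($\Leftarrow$)} Suppose $\rel_G\relrestr=\Rext_{\{xy|z\}}$.
\begin{itemize}
\item Membership of a pair $(ab,cd)$ in $\rel_G$ requires $\lca_G(ab)$ and $\lca_G(cd)$ to be well-defined. The pairs in $\Rext_{\{xy|z\}}$ involve all of $xy$, $xz$, $yz$, so all three LCAs are well-defined.
\item From $(xz,yz),(yz,xz)\in\rel_G$ we get $\lca_G(xz)\preceq_G\lca_G(yz)\preceq_G\lca_G(xz)$. Antisymmetry of $\preceq_G$ in a DAG then gives $\lca_G(xz)=\lca_G(yz)$.
\item From $(xy,xz)\in\rel_G$ we get $\lca_G(xy)\preceq_G\lca_G(xz)$. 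Since $(xz,xy)\notin\Rext_{\{xy|z\}}$, it is not in the restriction and hence not in $\rel_G$. The LCAs are well-defined, so this means $\lca_G(xz)\not\preceq_G\lca_G(xy)$, and in particular $\lca_G(xy)\neq\lca_G(xz)$.
\end{itemize}
Thus $\lca_G(xy)\prec_G\lca_G(xz)=\lca_G(yz)$, i.e., $G$ displays $xy|z$.

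The only point needing care is the step in the converse that turns non-membership in $\rel_G$ into a strict inequality. It is valid only because all three LCAs have already been shown to be well-defined; otherwise non-membership could merely reflect an ill-defined LCA. So the well-definedness check comes first.
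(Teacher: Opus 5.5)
Your proof is correct and follows essentially the same route as the paper. Both directions simply unwind the definitions of $\rel_G$ and of the restriction, using the pairs in $\Rext_{\{xy|z\}}$ to secure well-definedness of all three LCAs before turning $(xz,xy)\notin\rel_G$ into strictness. The only addition is that you spell out the acyclicity argument for $(xz,xy),(yz,xy)\notin\rel_G$ in the forward direction, which the paper leaves implicit.
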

\begin{proof}
Let $G$ be a DAG on $X$ and $xy|z$ a triple with $x,y,z \in X$. Since, by definition, $x,y,z$ are
pairwise distinct and $\rel_G$ is a relation on $\pairs(X)$, it follows that $\rel_G\relrestr$ is
well-defined. Suppose $G$ displays $xy|z$, then $\lca_G(xy)$, $\lca_G(xz)$, and $\lca_G(yz)$ are
well-defined and $\lca_G(xy) \prec_G \lca_G(xz) = \lca_G(yz)$ holds. Thus,
$(xy,xz),(xy,yz),(xz,yz),(yz,xz) \in \rel_G$ and $(xz,xy),(yz,xy) \notin \rel_G$ holds. This implies
$\rel_G\relrestr = \{(xy,xz),(xy,yz),(xz,yz),(yz,xz)\} = \Rext_{\{x y|z\}}$. Now suppose that
$\rel_G\relrestr=\Rext_{\{x y |z\}}$. Since $(xy,xz),(xy,yz) \in \rel_G$, we conclude that
$\lca_G(xy)$, $\lca_G(xz),$ and $\lca_G(yz)$ are well-defined. Moreover, $(xy,xz)\in \rel_G$ implies
$\lca_G(xy)\preceq_G\lca_G(xz)$ and $(xz,xy) \notin \rel_G$ implies $\lca_G(xz)\npreceq_G\lca_G(xy)$
and, therefore, $\lca_G(xy) \prec_G \lca_G(xz)$. Moreover, $(xz,yz),(yz,xz)\in \rel_G$ imply
$\lca_G(xz) = \lca_G(yz)$. In summary, $\lca_G(xy) \prec_G \lca_G(xz) = \lca_G(yz)$ holds. Hence,
$xy|z$ is displayed by $G$.
\end{proof}

In what follows, we consider pairs $(\RR, \FF)$ of triple sets and want to determine if there exists
a DAG that agrees with $(\RR, \FF)$. To this end, we will saturate $R_\RR$ using the following
procedure.

\begin{definition}\label{def:method}
For a pair $(\RR, \FF)$ of triple sets, we define the following procedure:

  \hspace{2cm}
  \begin{minipage}{.7\linewidth}
    \begin{algorithm}[H]
    \small
    \caption*{\textsc{Saturate~$R_{\RR}$~with~$\FF$} (in short \textsc{Sat($R_{\RR},\FF$)})}
  \label{alg:def-method}
        \begin{algorithmic}[0]   \small
         \State $Q_0\coloneqq\cl(R_\RR)$ 
        \State $\FF_0\coloneqq\left\{xy|z \in \FF \,:\, Q_{0}\restr=\Rext_{\{xy|z\}}\right\}$
         \State $i\coloneqq 0$
            \While{$\FF_i\neq\emptyset$}
                \State Let $xy|z$ be an arbitrary element in $\FF_i$ 
                \State $Q_{i+1}\coloneqq \cl\left(Q_i \cup\{(xz, xy),(yz, xy)\}\right)$
                \State $\FF_{i+1}\coloneqq\left\{xy|z \in \FF \,:\, Q_{i+1}\restr=\Rext_{\{xy|z\}}\right\}$
                \State $i\coloneqq i+1 $
             \EndWhile  
          \State \Return $Q_i$
        \end{algorithmic}
        \end{algorithm}     
   \end{minipage} \smallskip
\end{definition} 

To provide some intuition behind the greedy-like method \textsc{Sat($R_{\RR},\FF$)}, let us consider
$Q_0 = \cl(R_\RR)$. Clearly, if $\FF_0 = \emptyset$, then \textsc{Sat($R_{\RR},\FF$)} stops and we
still have $Q_0 = \cl(R_\RR)$. Suppose now that $\FF_0\neq \emptyset$. In this case, $Q_1$ is
obtained from $Q_0$ by adding the two elements $(xz,xy)$ and $(yz,xy)$ for some triple $xy|z\in \FF$
with $Q_0\restr = \Rext_{\{xy|z\}}$. The purpose of adding these pairs is to ensure that $xy|z$ is
not displayed by any DAG realizing the updated relation $Q_1$. Indeed, if $G$ realizes $Q_1$, then
$(xz,xy),(yz,xy)\in Q_1\subseteq \rel_G$ (cf.\ \cite[Lem~7]{LAMSH:25}). Consequently,
$\rel_G\relrestr\neq \Rext_{\{xy|z\}}$ and, hence, $G$ does not display $xy|z$ by
Lemma~\ref{lem:displays-iff-restr-equal}. This is exactly what is required, since $xy|z\in \FF$ is a
forbidden triple. After adding the two elements to $Q_0$, we compute the closure in order to obtain
all further LCA-constraints implied by the constraints in $Q_0\cup\{(xz,xy),(yz,xy)\}$ that must be
respected by any DAG realizing $Q_0\cup\{(xz,xy),(yz,xy)\}$.

As we shall see, this eventually yields a final relation $Q_K$ that is returned by
\textsc{Sat($R_{\RR},\FF$)} and for which the canonical DAG $\GG_{Q_K}$ displays all triples in $\RR$
precisely if there is some DAG that agrees with $(\RR,\FF)$, see Theorem~\ref{thm:char-RF-display}.
In particular, this in turn can be used to prove that the Problem \PROBLEM{TC-F} can be solved in
polynomial time, see Theorem~\ref{thm:TCF-problem}. The latter ideas are also illustrated in
Figure~\ref{fig:saturate}. 

Note that the constructed relations $Q_0, Q_1, Q_2, ...$ depend on the chosen arbitrary element of
$\FF_i$. Nevertheless, the following results hold for each order in which those elements are chosen.
Indeed, we now show that \textsc{Sat($R_{\RR},\FF$)} always terminates and provide some additional
properties of the constructed relations $Q_i$.

\begin{figure}[t]
    \centering
    \includegraphics[width=0.8\textwidth]{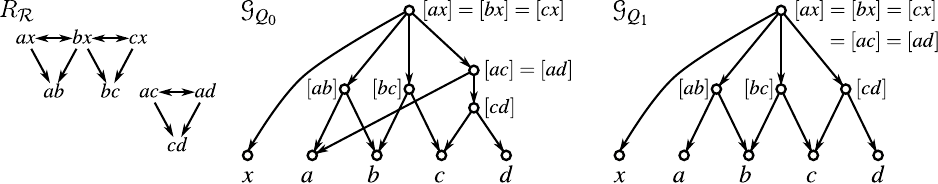}
    \caption{Consider the pair $(\RR,\FF)$ of triple sets with $\RR=\{ab|x, bc|x, cd|a\}$ and
    $\FF=\{ac|x, ab|d\}$. Let $Q_0=\cl(R_\RR)$ whose canonical DAG $\GG_{Q_0}$ is shown in the
    middle. Here, $Q_0|^{\neq}_{acx} = \Rext_{ac|x} = \{(ac,ax), (ac,cx), (ax,cx), (cx,ax)\}$ and,
    in particular, $\GG_{Q_0}$ displays the triple $ac|x\in \FF$. In this case, we apply a
    saturation of $Q_0$ by adding first the elements $(ax,ac)$ and $(cx,ac)$ and then computing the
    closure which results in $Q_{1}\coloneqq \cl\left(Q_0 \cup\{(ax,ac),(cx,ac)\}\right)$. The
    canonical DAG $\GG_{Q_1}$ is shown to the right. Here, $\GG_{Q_1}$ agrees with $(\RR,\FFRR)$,
    cf.\ Figure~\ref{fig:FFRR_extension} where $G = \GG_{Q_1}$. Since $\FF_1 = \emptyset$,
    \textsc{Sat($R_{\RR},\FF$)} terminates after one iteration of the \textbf{while}-loop.}
    \label{fig:saturate}
\end{figure}

\begin{lemma}\label{lem:Qi_properties_realizable}
For every pair $(\RR, \FF)$ of triple sets, \textsc{Sat($R_{\RR},\FF$)} terminates. Let
$Q_0,\dots,Q_K$ be the constructed relations during some run of \textsc{Sat($R_{\RR},\FF$)}. Then,
it holds that $Q_i = \tc(Q_i)=\cl(Q_i)$ for all $i\in \{0,\dots, K\}$ and $Q_i\subseteq Q_{i+1}$ for
all $i\in \{0,\dots, K-1\}$. Moreover, $Q_i$ is a realizable relation for all $i \in \{0, \dots,
K\}$.
\end{lemma}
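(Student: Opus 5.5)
The plan has four parts: the closure identities, monotonicity, termination, and realizability. The first three are routine; realizability is where the work lies.

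\emph{Closure identities.} Every $Q_i$ is, by construction, the image of $\cl$: $Q_0=\cl(R_\RR)$ and $Q_{i+1}=\cl(Q_i\cup\{(xz,xy),(yz,xy)\})$. Idempotency of $\cl$ (Theorem~\ref{thm:cl-polytime}) gives $\cl(Q_i)=Q_i$. Since $\cl(S)$ is an intersection of transitive relations, it is transitive, so $\tc(Q_i)=Q_i$.

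\emph{Monotonicity.} We have $Q_i\subseteq Q_i\cup\{(xz,xy),(yz,xy)\}\subseteq\cl(Q_i\cup\{\dots\})=Q_{i+1}$ by extensivity.

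\emph{Termination.} We show the inclusion $Q_i\subseteq Q_{i+1}$ is strict. The triple $xy|z$ is chosen from $\FF_i$, so $Q_i\relrestr=\Rext_{\{xy|z\}}$. This set does not contain $(xz,xy)$, which lies in $Q_{i+1}$. Hence $Q_i\subsetneq Q_{i+1}$. All $Q_i$ are subsets of the finite set $\pairs(\XRF)\times\pairs(\XRF)$, so the loop stops after at most $|\pairs(\XRF)|^2$ iterations.

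\emph{Realizability.} This is the main obstacle, and I would use Theorem~\ref{thm:char}. Because $Q_i=\tc(Q_i)=\cl(Q_i)$, condition \axiom{X2} becomes: $(ab,cd)\in Q_i$ and $(cd,ab)\notin Q_i$ imply $(cd,ab)\notin Q_i$. This is trivially true, so \axiom{X2} holds automatically. Realizability of $Q_i$ is therefore equivalent to \axiom{X1}: no $(ab,xx)\in Q_i$ with $ab\neq xx$.

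I would prove \axiom{X1} using the explicit rule set for computing $\cl$ from \cite[Thm~17]{LAMSH:25}. The closure is generated from the input by reflexivity on $\support^+$, transitivity, and cross-consistency. None of these rules produces a pair whose second entry is a singleton $xx$ unless such a pair (other than the reflexive $(xx,xx)$) is already present. In more detail:
\begin{itemize}
\item Transitivity yields $(p,xx)$ only from some $(q,xx)$, and by induction $q=xx$, which forces $p=xx$.
\item Cross-consistency yields $(ab,xx)$ only from $(ac,xx),(bd,xx)$, and by induction $ac=xx=bd$. Then $a=b=x$, so the new pair is the reflexive $(xx,xx)$.
\end{itemize}
The inputs $R_\RR$ and $Q_i\cup\{(xz,xy),(yz,xy)\}$ have second entries only of the form $xy$ with $x\neq y$, apart from pairs already in $Q_i$, so induction on $i$ gives \axiom{X1} for every $Q_i$.

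If the rule-based argument is awkward to cite, the fallback is to show directly that $Q^\ast\coloneqq Q\cup\{(p,q)\in Q: \ldots\}$, that is, $Q$ with all pairs $(ab,xx)$, $ab\neq xx$, removed, still lies in $\mathfrak{R}_{S}$ for the relevant generating set $S$. Minimality of the intersection defining $\cl(S)$ then forces $\cl(S)$ to contain no such pair. The nontrivial check there is that cross-consistency and transitivity are preserved, by the same case analysis as above. Once \axiom{X1} holds, Theorem~\ref{thm:char} gives that each $Q_i$ is realizable.
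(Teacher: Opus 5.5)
Your proof is correct, but it takes a different route from the paper at two points.

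\textbf{Termination.} The paper starts from the same observation you do: $(xz,xy)\in Q_{i+1}$ but $(xz,xy)\notin\Rext_{\{xy|z\}}$. It draws a different conclusion, namely that by monotonicity $Q_j\restr\neq\Rext_{\{xy|z\}}$ for all $j>i$. So each forbidden triple is selected at most once, and the loop runs at most $|\FF|$ times. You instead conclude $Q_i\subsetneq Q_{i+1}$ and bound the number of iterations by $|\pairs(\XRF)|^2$. Both arguments are valid. The paper's bound is sharper, and it is the one quoted in the runtime analysis of Theorem~\ref{thm:TCF-problem}; yours still gives a polynomial bound.

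\textbf{Realizability.} The paper argues by induction and cites \cite[Prop~28]{LAMSH:25}, which says that the closure of a relation satisfying \axiom{X1} is realizable. It uses Theorem~\ref{thm:char} only to recover \axiom{X1} for $Q_k$ from its realizability. You instead observe that for a closed relation ($Q_i=\tc(Q_i)=\cl(Q_i)$) condition \axiom{X2} is vacuous, so realizability reduces to \axiom{X1}. This is essentially \cite[Cor~32]{LAMSH:25}, which the paper itself invokes in the proof of Theorem~\ref{thm:char-W-TCF}. You then prove directly that $\cl$ preserves \axiom{X1}.

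Of your two arguments for that preservation step, write up the minimality one. It does not depend on the exact form of the rule set in \cite[Thm~17]{LAMSH:25}, which you describe only loosely. Note that your displayed definition of $Q^\ast$ is garbled: it should be $Q^\ast\coloneqq Q\setminus\{(ab,xx) : ab\neq xx\}$, not a union with ``$\ldots$''. The verification then goes as follows:
\begin{itemize}
\item $Q^\ast$ contains the generating set, because that set satisfies \axiom{X1}.
\item $Q^\ast$ keeps all reflexive pairs.
\item Your case analysis shows $Q^\ast$ is transitive.
\item Your case analysis shows $Q^\ast$ is cross-consistent. Here note $\support_{Q^\ast}\subseteq\support_Q$, so cross-consistency of $Q$ applies.
\end{itemize}
Hence $Q^\ast$ belongs to the family whose intersection defines $\cl$, and so $\cl(S)\subseteq Q^\ast$.

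Your version is self-contained apart from Theorem~\ref{thm:char}, at the cost of a short verification that the paper delegates to \cite{LAMSH:25}.
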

\begin{proof} 
Let $(\RR, \FF)$ be a pair of triple sets. We first argue that \textsc{Sat($R_{\RR},\FF$)}
terminates. By construction and the monotonicity of the closure operator $\cl$, it holds that $Q_i
\subseteq Q_{i+1}$ in each step of \textsc{Sat($R_{\RR},\FF$)}. In particular, if $xy|z \in \FF_{i}$
is the currently considered triple used to compute $Q_{i+1}= \cl\left(Q_i \cup\{(xz, xy),(yz,
xy)\}\right)$, then $(xz,xy),(yz,xy)\in Q_{i+1}$ which together with $(xz,xy),(yz,xy)\notin
\Rext_{\{xy|z\}}$ implies that $Q_j\restr \neq \Rext_{\{xy|z\}}$ for all $j>i$. This, in turn,
implies that $xy|z \notin \FF_j$ for all $j > i$. Since $\FF$ is finite, and $\FF_i \subseteq \FF$
for all $i$, \textsc{Sat($R_{\RR},\FF$)} terminates after at most $|\FF|$ iterations of the
\textbf{while}-loop of \textsc{Sat}($R_{\RR},\FF$).

We continue with showing that $Q_i \subseteq Q_{i+1}$ for each $0 \leq i < K$ and that $Q_i=\tc(Q_i)
=\cl(Q_i)$ for each $0 \leq i \leq K$. For $0 \leq i < K$, we have, by monotonicity of $\cl$ and
definition of $Q_{i+1}$, that $Q_i \subseteq \cl(Q_i \cup \{(xz,xy),(yz,xy)\}) = Q_{i+1}$. Moreover,
idempotency of $\cl$ implies for $Q_0 = \cl(R_\RR)$ that $\cl(Q_0)=\cl(\cl(R_\RR)) = \cl(R_\RR) =
Q_0$ and for $Q_i = \cl(Q_{i-1} \cup \{(xz,xy),(yz,xy)\})$ that $\cl(Q_i)= \cl(\cl(Q_{i-1} \cup
\{(xz,xy),(yz,xy)\}))=\cl(Q_{i-1} \cup \{(xz,xy),(yz,xy)\})=Q_i$, $1 \leq i \leq K$. This together
with extensivity of $\tc$ and transitivity of $\cl(Q_i)$ implies that $Q_i\subseteq \tc(Q_i)=
\tc(\cl(Q_i))=\cl(Q_i)=Q_i$ and, therefore, $Q_i = \tc(Q_i)= \cl(Q_i)$ for all $0 \leq i \leq K$.

It remains to show that $Q_i$ is realizable for all $0 \leq i \leq K$. We proceed by induction on
$i$ and start with $i=0$. By definition, $Q_0 = \cl(R_\RR)$ and, for every triple $xy|z \in \RR$,
the leaves $x,y,z$ are pairwise distinct. Thus, for all $a,b,x \in \XRF$ with $ab \neq xx$, it holds
that $(ab,xx) \notin R_\RR$. By definition, $R_\RR$ satisfies \axiom{X1}. This together with
\cite[Prop~28]{LAMSH:25} implies that $\cl(R_\RR)$ and, therefore, $Q_0 = \cl(R_\RR)$ is realizable.
Now assume that there exist some $0 \leq k < K$ such that $Q_k$ is realizable. Consider
$Q_{k+1}\coloneqq \cl\left(Q_k \cup\{(xz, xy),(yz, xy)\}\right)$, which exists since $k < K$. Since
$Q_k$ is realizable, Theorem~\ref{thm:char} implies that $Q_k$ satisfies \axiom{X1}. Hence, the set
$Q_k \cup\{(xz, xy),(yz, xy)\}$ does, by construction, satisfy $\axiom{X1}$. This together with
\cite[Prop~28]{LAMSH:25} implies that $\cl(Q_k \cup\{(xz, xy),(yz, xy)\}) = Q_{k+1}$ is realizable,
which completes this proof.  
\end{proof}

\begin{remark}\label{rem:QK}
	By Lemma~\ref{lem:Qi_properties_realizable}, \textsc{Sat($R_{\RR},\FF$)} always terminates and,
	thus, there is one last call of the \textbf{while}-loop in which it is verified that
	$\FF_i=\emptyset$ and the algorithm stops. Hence, there are $K+1\geq 1$ calls of the
	\textbf{while}-loop condition $\FF_i\neq \emptyset$ and we assume, from here on, that
	$Q_0,\dots,Q_K$ are the constructed relations during some run of \textsc{Sat($R_{\RR},\FF$)}.
	Moreover, we call $Q_K$ a \emph{final relation} of \textsc{Sat($R_{\RR},\FF$)}. 
\end{remark}

Recall that the choice of the triple $xy|z\in\FF_i$ in each iteration may affect the final relation
$Q_K$ returned by \textsc{Sat($R_{\RR},\FF$)}. In particular, for different runs of
\textsc{Sat($R_{\RR},\FF$)}, the number of iterations may differ, and it remains open whether the
final relation $Q_K$ of one run always coincides with the final relation $Q_L$ of another run.

The following lemma establishes the soundness of the saturation procedure: if a DAG $G$ agrees with
$(\RR,\FF)$, then every relation $Q_i$ constructed during some run of \textsc{Sat($R_{\RR},\FF$)} is
already contained in the LCA-relation $\rel_G$ of $G$.

\begin{lemma}\label{lem:Qi-subset-relG}
 Assume that $G$ is a DAG that agrees with the pair $(\RR,\FF)$ of triple sets. Then, it holds that
 $Q_i\subseteq\rel_G$ for all relations $Q_0,\dots,Q_K$ constructed during some run of
 \textsc{Sat($R_{\RR},\FF$)}. 
\end{lemma}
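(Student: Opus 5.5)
We argue by induction on $i$, using at each step that $\rel_G$ is itself a member of the family whose intersection defines $\cl$. The key auxiliary fact is this: for any DAG $G$ and any relation $S$ with $S\subseteq\rel_G$, we have $\cl(S)\subseteq\rel_G$. To see this, check that $\rel_G$ lies in $\mathfrak{R}_{S}$.

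\emph{Reflexivity on $\support^+_S$.} Every $ab\in\support_S$ satisfies $ab\in\support_{\rel_G}$, so $\lca_G(ab)$ is well-defined and $(ab,ab)\in\rel_G$. For $xx$ we use $\lca_G(xx)=x$.

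\emph{Transitivity.} This is inherited from $\preceq_G$.

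\emph{Cross-consistency.} Suppose $ab\in\support_S$, $(ac,xy)\in\rel_G$ and $(bd,xy)\in\rel_G$. Then $a$ and $b$ both lie below $\lca_G(xy)$. Hence $\lca_G(xy)$ is a common ancestor of $a$ and $b$, and since $\lca_G(ab)$ is well-defined, $\lca_G(ab)\preceq_G\lca_G(xy)$.

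Since $S\subseteq\rel_G$, it follows that $\cl(S)$, being the intersection over $\mathfrak{R}_{S}$, is contained in $\rel_G$. This is essentially the soundness statement for $\cl$ in \cite{LAMSH:25}, and I would cite it if available.

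\emph{Base case $i=0$.} Since $G$ displays every $xy|z\in\RR$, Lemma~\ref{lem:displays-iff-restr-equal} gives $R_\RR\subseteq\Rext_{\{xy|z\}}\subseteq\rel_G$. The auxiliary fact then yields $Q_0=\cl(R_\RR)\subseteq\rel_G$.

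\emph{Inductive step.} Assume $Q_i\subseteq\rel_G$ and let $xy|z\in\FF_i$ be the chosen triple, so $Q_i\restr=\Rext_{\{xy|z\}}$. Restriction is monotone, so
\[\Rext_{\{xy|z\}}=Q_i\restr\subseteq\rel_G\relrestr.\]
In particular, $\lca_G(xy)$, $\lca_G(xz)$ and $\lca_G(yz)$ are well-defined, and
\[\lca_G(xy)\preceq_G\lca_G(xz)=\lca_G(yz).\]
Because $G$ agrees with $(\RR,\FF)$, it does not display $xy|z$. By Lemma~\ref{lem:displays-iff-restr-equal}, $\rel_G\relrestr\neq\Rext_{\{xy|z\}}$, so $\rel_G\relrestr$ properly contains $\Rext_{\{xy|z\}}$. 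Thus $(xz,xy)\in\rel_G$ or $(yz,xy)\in\rel_G$. Either one, combined with $\lca_G(xy)\preceq_G\lca_G(xz)=\lca_G(yz)$, forces
\[\lca_G(xy)=\lca_G(xz)=\lca_G(yz).\]
Hence both $(xz,xy)$ and $(yz,xy)$ lie in $\rel_G$, and so $Q_i\cup\{(xz,xy),(yz,xy)\}\subseteq\rel_G$. Applying the auxiliary fact gives $Q_{i+1}\subseteq\rel_G$.

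The main obstacle is the auxiliary closure-soundness fact, specifically verifying cross-consistency and support-reflexivity for $\rel_G$ relative to an arbitrary subrelation $S$. If this is already proved in \cite{LAMSH:25}, I would cite it directly; otherwise I would carry out the verification above.
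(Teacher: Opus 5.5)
Your proof is correct and follows the paper's argument: induction on $i$, with $R_\RR\subseteq\rel_G$ in the base case, and in the inductive step the observation that $\Rext_{\{xy|z\}}\subseteq\rel_G$ together with non-display of $xy|z$ forces $\lca_G(xy)=\lca_G(xz)=\lca_G(yz)$, after which closure-soundness finishes the job. The only difference is that you verify $\rel_G\in\mathfrak{R}_S$ directly rather than citing $\cl(\rel_G)=\rel_G$ from \cite[Prop~21]{LAMSH:25} together with monotonicity; note that cross-consistency of $\rel_G$ has to be checked for all $ab\in\support_{\rel_G}$, not only for $ab\in\support_S$, but your argument covers this case unchanged because it only uses that $\lca_G(ab)$ is well-defined.
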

\begin{proof}
	Let $G$ be a DAG that agrees with the pair $(\RR,\FF)$ of triple sets. Let $Q_0,\dots,Q_K$ be the
	relations constructed during some run of \textsc{Sat($R_{\RR},\FF$)}. We proceed by induction on
	$i\in \{0,\dots,K\}$. For the base case, let $i=0$. By definition, $Q_0\coloneqq\cl(R_\RR)$. Since
	$G$ displays every triple in $\RR$, Theorem~\ref{thm:TC-char} implies that $G$ realizes $R_\RR$.
	By \cite[Lem~7]{LAMSH:25}, $R_\RR\subseteq \rel_G$ follows. Moreover, $\cl(\rel_G)=\rel_G$ holds
	by \cite[Prop~21]{LAMSH:25}. The latter two arguments together with monotonicity of $\cl$ imply
	that $Q_0 = \cl(R_\RR)\subseteq \cl(\rel_G)=\rel_G$. 

	Assume now that for some $i\in \{0,\dots,K-1\}$ it holds that $Q_i\subseteq\rel_G$. To recall,
	\[R\restr = R \cap \{(xy,xz),(xy,yz),(xz,xy),(xz,yz),(yz,xy),(yz,xz)\}\text{ for any relation }R\]
	while $\Rext_{\{xy|z\}}=\{(xy,xz),(xy,yz),(xz,yz),(yz,xz)\}$. Consider now $Q_{i+1}= \cl\left(Q_i
	\cup\{(xz, xy),(yz, xy)\}\right)$. Since $i<K$, the relation $Q_{i+1}$ exists and has, in
	particular, been constructed using some triple $xy|z\in\FF$ for which
	$Q_{i}\restr=\Rext_{\{xy|z\}}$. Since $Q_{i}\restr=\Rext_{\{xy|z\}}$, it holds that
	$\Rext_{\{xy|z\}}\subseteq Q_{i}\subseteq\rel_G $. In particular, the three LCAs $\lca_G(xy)$, 
    $\lca_G(xz)$, and $\lca_G(yz)$ are well-defined and $\lca_G(xy) \preceq_G \lca_G(xz) = \lca_G(yz)$
	must hold. However, as $G$ agrees with $(\RR,\FF)$, the triple $xy|z\in\FF$ is not displayed by
	$G$. Hence, we obtain $\lca_G(xy) = \lca_G(xz) = \lca_G(yz)$. Thus, $(xz,xy),(yz,xy) \in \rel_G$
	holds. This together with $Q_i \subseteq \rel_G$ implies that $Q_i \cup \{(xz,xy),(yz,xy)\}
	\subseteq \rel_G$. By \cite[Prop~21]{LAMSH:25}, $\cl(\rel_G)=\rel_G$ holds. The latter two
	arguments together with monotonicity of $\cl$ imply that
	$Q_{i+1}=\cl(Q_i\cup\{(xz,xy),(yz,xy)\})\subseteq\cl(\rel_G)=\rel_G$. In summary, we verified that
	$Q_{i+1}\subseteq\rel_G$. The full statement follows by induction.
\end{proof}

We next record a non-trivial consequence of Lemma~\ref{lem:Qi_properties_realizable} and
\ref{lem:Qi-subset-relG}. In particular, 
if a DAG agreeing with $(\RR,\FF)$ exists, 
the canonical DAG associated with $Q_i$ still realizes the relation $R_{\RR}$ induced by the
required triple set $\RR$.

\begin{lemma}\label{lem:Si-auxillary}
Let $(\RR,\FF)$ be a pair of triple sets and $Q_0,\dots,Q_K$ be the relations constructed during
some run of \textsc{Sat($R_{\RR},\FF$)}. If there is a DAG that agrees with $(\RR,\FF)$, then the
canonical DAG $\GG_{Q_i}$ realizes $R_\RR$ for all $i\in \{0,\dots,K\}$. 
\end{lemma}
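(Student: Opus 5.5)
Let $G$ be a DAG that agrees with $(\RR,\FF)$ and fix $i\in\{0,\dots,K\}$. By Lemma~\ref{lem:Qi_properties_realizable}, $Q_i$ is realizable and $Q_i=\tc(Q_i)=\cl(Q_i)$. Theorem~\ref{thm:char} then shows that the canonical DAG $\GG_{Q_i}$ realizes $Q_i$. In particular, for all $ab,cd\in\support^+_{Q_i}$ the LCAs $\lca_{\GG_{Q_i}}(ab)$ and $\lca_{\GG_{Q_i}}(cd)$ are well-defined. Moreover, \axiom{I1} and \axiom{I2} hold with respect to $\tc(Q_i)=Q_i$.

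We now check that $\GG_{Q_i}$ realizes $R_\RR$. Since $R_\RR\subseteq\cl(R_\RR)=Q_0\subseteq Q_i$, we have $\support^+_{R_\RR}\subseteq\support^+_{Q_i}$, so all required LCAs are well-defined. Let $(ab,cd)\in R_\RR$; then also $(ab,cd)\in Q_i$. We split into the two cases of Definition~\ref{def:strict-real}.

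\emph{Case \axiom{I2}: $(cd,ab)\in\tc(R_\RR)$.} Then $(cd,ab)\in\tc(Q_i)=Q_i$, and \axiom{I2} for $Q_i$ yields $\lca_{\GG_{Q_i}}(ab)=\lca_{\GG_{Q_i}}(cd)$. (By Theorem~\ref{thm:TC-char} this case never actually occurs, but it is harmless.)

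\emph{Case \axiom{I1}: $(cd,ab)\notin\tc(R_\RR)$.} We must show $\lca_{\GG_{Q_i}}(ab)\prec\lca_{\GG_{Q_i}}(cd)$. By \axiom{I1} for $Q_i$ it suffices to show $(cd,ab)\notin Q_i$, and this is the main obstacle. Here we use $G$. Since $G$ displays all triples in $\RR$, Theorem~\ref{thm:TC-char} shows that $G$ strictly realizes $R_\RR$. Hence $\lca_G(ab)\prec_G\lca_G(cd)$, so $(cd,ab)\notin\rel_G$. By Lemma~\ref{lem:Qi-subset-relG}, $Q_i\subseteq\rel_G$, and therefore $(cd,ab)\notin Q_i$. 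Applying \axiom{I1} for $Q_i$ gives the strict relation in $\GG_{Q_i}$.

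Combining both cases, $\GG_{Q_i}$ realizes $R_\RR$. In fact it strictly realizes $R_\RR$, because the first case is vacuous.
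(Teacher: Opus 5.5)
Your proof is correct and follows essentially the same route as the paper. Both arguments use the same ingredients:
\begin{itemize}
\item realizability and closedness of $Q_i$ (Lemma~\ref{lem:Qi_properties_realizable});
\item $\GG_{Q_i}$ realizing $Q_i$ (Theorem~\ref{thm:char});
\item $\tc(R_\RR)\subseteq \tc(Q_i)=Q_i$ for \axiom{I2};
\item $Q_i\subseteq\rel_G$ (Lemma~\ref{lem:Qi-subset-relG}) to exclude $(cd,ab)\in Q_i$ for \axiom{I1}.
\end{itemize}
The only cosmetic difference is that you invoke strict realization of $R_\RR$ by $G$ (Statement~(3) of Theorem~\ref{thm:TC-char}) directly, whereas the paper argues by contradiction from plain realization. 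Your remark that \axiom{I2} is vacuous, so that $\GG_{Q_i}$ even strictly realizes $R_\RR$, is also correct.
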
 
\begin{proof}
Let $G$ be a DAG that agrees with the pair $(\RR,\FF)$ of triple sets. Let $Q_0,\dots,Q_K$ be the
relations constructed during some run of \textsc{Sat($R_{\RR},\FF$)}.

We show that $\GG_{Q_i}$ realizes $R_\RR$ for all $i\in \{0,\dots,K\}$. To this
end, let $i\in \{0,\dots,K\}$ be chosen arbitrarily. Observe first that transitivity of $\cl$
implies that $\tc(R_\RR)\subseteq\cl(R_\RR)=Q_0$. Moreover, by
Lemma~\ref{lem:Qi_properties_realizable}, $Q_0 \subseteq Q_i=\tc(Q_i)$ holds and, hence, $\tc(R_\RR)
\subseteq Q_i = \tc(Q_i)$. To verify that $\GG_{Q_i}$ realizes $R_\RR$, we show that \axiom{I1} and
\axiom{I2} hold for $\GG_{Q_i}$ and $R_\RR$. We start with \axiom{I2}. Hence, suppose $(ab,cd) \in
R_\RR\subseteq Q_i$ and $(cd,ab) \in \tc(R_\RR) \subseteq \tc(Q_i)$.
Lemma~\ref{lem:Qi_properties_realizable} implies that $Q_i$ is realizable and by
Theorem~\ref{thm:char}, the canonical DAG $\GG_{Q_i}$ realizes $Q_i$. Hence, $\lca_{\GG_{Q_i}}(ab) =
\lca_{\GG_{Q_i}}(cd)$ must hold and, in particular, \axiom{I2} is satisfied for $\GG_{Q_i}$ and
$R_\RR$. For \axiom{I1}, let $(ab,cd)\in R_\RR\subseteq Q_i$ and $(cd,ab)\notin\tc(R_\RR)$. Assume,
for contradiction, that $(cd,ab)\in \tc(Q_i)$. In this case, Lemma~\ref{lem:Qi-subset-relG} implies
that $\tc(Q_i) = Q_{i}\subseteq\rel_G$ and, therefore, $(ab,cd),(cd,ab)\in\rel_G$. In other words,
$\lca_G(ab)=\lca_G(cd)$, which implies that $G$ does not realize $R_\RR$. However, $G$ displays all
triples in $\RR$ and, by Theorem~\ref{thm:TC-char}, realizes $R_\RR$; a contradiction. Hence,
$(cd,ab)\notin \tc(Q_i)$. Since $Q_i$ is realized by $\GG_{Q_i}$ and since $(ab,cd)\in Q_i$ and
$(cd,ab)\notin \tc(Q_i)$, it holds that $\lca_{\GG_{Q_i}}(ab)\prec_{\GG_{Q_i}}
\lca_{\GG_{Q_i}}(cd)$. Therefore, \axiom{I1} is satisfied for $\GG_{Q_i}$ and $R_\RR$. In summary,
$\GG_{Q_i}$ realizes $R_\RR$.
\end{proof}

We now combine the preceding ingredients to characterize when a pair $(\RR,\FF)$ of required and
forbidden triple sets agrees with some DAG. The theorem shows, in particular, that the final relation
$Q_K$ produced by the saturation procedure is sufficient to decide whether a DAG agreeing with
$(\RR,\FF)$ exists.

\begin{theorem}\label{thm:char-RF-display} 
Let $(\RR,\FF)$ be a pair of triple sets and $Q_K$ be the final relation constructed during some run
of \textsc{Sat($R_{\RR},\FF$)}. Then, the following statements are equivalent. 
\begin{enumerate}   
    \item There exists a DAG on $\XRF$ that agrees with $(\RR, \FF)$. 
    \item The canonical DAG $\GG_{Q_K}$ displays all triples in $\RR$. 
    \item The canonical DAG $\GG_{Q_K}$ realizes $R_\RR$ and $(xz,xy),(yz,xy) \notin \tc(R_\RR)$ for
          all $xy|z \in \RR$.
    \item $(xz,xy),(yz,xy) \notin Q_K$ for all $xy|z \in \RR$.
\end{enumerate}
\end{theorem}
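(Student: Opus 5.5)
We prove the cycle $(1)\Rightarrow(4)\Rightarrow(3)\Rightarrow(2)\Rightarrow(1)$. The key facts are Lemma~\ref{lem:Qi-subset-relG}, which gives soundness of the saturation; Lemma~\ref{lem:Qi_properties_realizable}, which gives $Q_K=\tc(Q_K)=\cl(Q_K)$ and realizability; and Theorem~\ref{thm:TC-char}.

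\emph{$(1)\Rightarrow(4)$.} Let $G$ agree with $(\RR,\FF)$. By Lemma~\ref{lem:Qi-subset-relG}, $Q_K\subseteq\rel_G$. Suppose $(xz,xy)\in Q_K$ for some $xy|z\in\RR$. Then $\lca_G(xz)\preceq_G\lca_G(xy)$. This contradicts $\lca_G(xy)\prec_G\lca_G(xz)$, which holds because $G$ displays $xy|z$. The case $(yz,xy)$ is symmetric.

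\emph{$(4)\Rightarrow(3)$.} By Lemma~\ref{lem:Qi_properties_realizable} and Theorem~\ref{thm:char}, $\GG_{Q_K}$ realizes $Q_K$. We also have $R_\RR\subseteq\tc(R_\RR)\subseteq\cl(R_\RR)=Q_0\subseteq Q_K=\tc(Q_K)$. Since $\support^+_{R_\RR}\subseteq\support^+_{Q_K}$, all LCAs required for $R_\RR$ are well-defined in $\GG_{Q_K}$.

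Take $(xy,xz)\in R_\RR$ coming from $xy|z\in\RR$. By (4), $(xz,xy)\notin Q_K=\tc(Q_K)$. Hence \axiom{I1} for $Q_K$ yields $\lca(xy)\prec\lca(xz)$, and the same argument gives $\lca(xy)\prec\lca(yz)$. Every element of $R_\RR$ has this form, so $\GG_{Q_K}$ strictly realizes $R_\RR$. Moreover, $(xz,xy)\notin\tc(R_\RR)$ because $\tc(R_\RR)\subseteq Q_K$. Thus (3) holds. The step $(3)\Rightarrow(2)$ is immediate from Theorem~\ref{thm:TC-char}, $(4)\Leftrightarrow(1)$ of that theorem.

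\emph{$(2)\Rightarrow(1)$.} This is the main step. I would show that $\GG_{Q_K}$ agrees with $(\RR,\FFRR)$ and then apply Proposition~\ref{prop:reduction_of_forbidden_triple_sets}(2), so that the $\FFRR$-extension of $\GG_{Q_K}$ agrees with $(\RR,\FF)$.

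For the leaf set: $Q_K$ is a relation on $\pairs(\XRF)$, so $\support^+_{Q_K}$ contains $xx$ for every $x\in\XRF$. Hence $\GG_{Q_K}$ is a DAG on $\XRF$.

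Now suppose, for contradiction, that $\GG_{Q_K}$ displays some $xy|z\in\FFRR$. The pairs $xy,xz,yz$ lie in $\tsupp_\RR\subseteq\support_{Q_K}$. I would invoke the structural property of the canonical DAG from \cite{LAMSH:25}: for $p\in\support^+_{Q_K}$ one has $\lca_{\GG_{Q_K}}(p)=[p]$, and $[p]\preceq[q]$ holds iff $(p,q)\in\cl(Q_K)=Q_K$. From this, $\rel_{\GG_{Q_K}}\relrestr=Q_K\restr$. Lemma~\ref{lem:displays-iff-restr-equal} then gives $Q_K\restr=\Rext_{\{xy|z\}}$. So $xy|z\in\FF_K$, contradicting $\FF_K=\emptyset$ at termination (Remark~\ref{rem:QK}).

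The obstacle I expect is exactly this identification of $\rel_{\GG_{Q_K}}$ with $Q_K$ on supported pairs. If the cited property is only available as ``$\GG_{Q_K}$ realizes $Q_K$'', I would argue directly from \axiom{I1} and \axiom{I2}. These show that $\rel_{\GG_{Q_K}}$ contains $Q_K$ on supported pairs. For the reverse inclusion, I would use the Hasse-diagram description of the quotient poset: $[p]\preceq[q]$ forces $(p,q)\in Q_K$, by transitivity and reflexivity of $Q_K$.
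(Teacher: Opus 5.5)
Your proposal is correct and follows the paper's proof essentially step for step. The paper uses the same cycle $(1)\Rightarrow(4)\Rightarrow(3)\Rightarrow(2)\Rightarrow(1)$, with Lemma~\ref{lem:Qi-subset-relG} for $(1)\Rightarrow(4)$, realizability of $Q_K$ together with $Q_K=\tc(Q_K)$ for $(4)\Rightarrow(3)$, and Theorem~\ref{thm:TC-char} for $(3)\Rightarrow(2)$. For $(2)\Rightarrow(1)$ it obtains $Q_K\restr=\rel_{\GG_{Q_K}}\relrestr$ by citing \cite[Thm~47]{LAMSH:25}, i.e.\ $Q_K=\rel_{\GG_{Q_K}}\cap(\support^+_{Q_K}\times\support^+_{Q_K})$, which is exactly the canonical-DAG fact you invoke, and then concludes via Proposition~\ref{prop:reduction_of_forbidden_triple_sets} as you do.
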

\begin{proof}
Let $(\RR,\FF)$ be a pair of triple sets, $Q_K$ be the final relation constructed during some run of
\textsc{Sat($R_{\RR},\FF$)} and put $\GG \coloneqq \GG_{Q_K}$. We prove this theorem by verifying
the following implications: $(1) \Rightarrow (4) \Rightarrow (3) \Rightarrow (2) \Rightarrow (1)$.

Suppose that Statement~(1) holds. Hence, there exists a DAG $G$ on $\XRF$ that agrees with
$(\RR,\FF)$. 
By Lemma~\ref{lem:Qi-subset-relG}, we have $Q_K\subseteq \rel_G$.
Assume, for contradiction, that $(xz,xy) \in Q_K$ for some $xy|z \in \RR$.
By construction and Lemma~\ref{lem:Qi_properties_realizable}, it holds that $(xy,xz) \in R_\RR \subseteq \cl(R_\RR) =
Q_0 \subseteq Q_K $.
The latter two arguments imply that 
both $(xz,xy)$ and $(xy,xz)$ belong to $\rel_G$. It follows that $ \lca_G(xy)\preceq_G\lca_G(xz)$
and $ \lca_G(xz)\preceq_G\lca_G(xy)$. Therefore, $ \lca_G(xy)=\lca_G(xz)$, which contradicts the fact
that $G$ displays $xy|z$. Hence, $(xz,xy)\notin Q_K$. 
By an analogous argument, $(yz,xy) \notin Q_K$. Thus, Statement~(4) holds.

Now suppose Statement~(4) holds and we show that this implies Statement~(3). 
By assumption, $(xz,xy),(yz,xy) \notin Q_K$ for all $xy|z \in \RR$. We first argue that
$(xz,xy),(yz,xy) \notin \tc(R_\RR)$ for all $xy|z \in \RR$. To this end, observe that $\tc(R_\RR)
\subseteq \cl(R_\RR) = Q_0 \subseteq Q_K$ holds by transitivity of $\cl$ and by
Lemma~\ref{lem:Qi_properties_realizable}. This together with $(xz,xy),(yz,xy) \notin Q_K$ implies
$(xz,xy),(yz,xy) \notin \tc(R_\RR)$ for all $xy|z \in \RR$. We continue by showing that $\GG$
realizes $R_\RR$. By Lemma~\ref{lem:Qi_properties_realizable}, $Q_K$ is realizable and by
Theorem~\ref{thm:char}, $\GG$ realizes $Q_K$. We now verify \axiom{I1} and \axiom{I2} for $\GG$ and
$R_\RR$. Let $(ab,cd) \in R_\RR$. Since $R_\RR = \{(xy,xz),(xy,yz) \,:\, xy|z \in \RR\}$, there
exists some $xy|z \in \RR$ such that $(ab,cd) \in \{(xy,xz),(xy,yz)\}$. As argued earlier,
$(xz,xy),(yz,xy) \notin \tc(R_\RR)$ and, thus, $(cd,ab) \notin \tc(R_\RR)$ holds. Hence, \axiom{I2}
is vacuously true for $\GG$ and $R_\RR$. Moreover, by extensivity of $\cl$ and
Lemma~\ref{lem:Qi_properties_realizable}, $R_\RR \subseteq \cl(R_\RR) = Q_0 \subseteq Q_K =
\tc(Q_K)$ and, thus, $(ab,cd)\in R_\RR$ implies that $(ab,cd) \in Q_K$. Since $(cd,ab) \in
\{(xz,xy),(yz,xy) \,:\, xy|z \in \RR\}$, it follows, from the assumption $(xz,xy),(yz,xy)\notin Q_K$
for all $xy|z\in\RR$, that $(cd,ab)\notin Q_K$. Moreover, by
Lemma~\ref{lem:Qi_properties_realizable}, $Q_K=\tc(Q_K)$ and, hence, $(cd,ab)\notin\tc(Q_K)$. Since
$\GG$ realizes $Q_K$, it follows from \axiom{I1} that $ \lca_\GG(ab)\prec_\GG\lca_\GG(cd)$.
Therefore, \axiom{I1} is satisfied for $\GG$ and $R_\RR$. In summary, $\GG$ realizes $R_\RR$. Hence,
Statement~(3) holds. By Theorem~\ref{thm:TC-char}, Statement~(3) implies Statement~(2).

It remains to show that Statement~(2) implies (1). Hence, assume that $\GG$ displays all triples in
$\RR$. We first argue that $\GG$ does not display any triple in $\FFRR$. Let $xy|z \in \FFRR$.
Suppose, for contradiction, that $\GG$ displays $xy|z$. By Lemma~\ref{lem:displays-iff-restr-equal},
$\rel_\GG\relrestr=\Rext_{\{x y |z\}} = \{(xy,xz),(xy,yz),(xz,yz),(yz,xz)\}$ holds. By
Lemma~\ref{lem:Qi_properties_realizable}, $Q_K$ is realizable and, in addition, $\cl(Q_K) = Q_K$.
This together with \cite[Thm~47]{LAMSH:25} implies $Q_K = \rel_\GG \cap (\support_{Q_K}^+ \times
\support_{Q_K}^+)$. Since $xy|z \in \FFRR$, we conclude that $xy,xz,yz \in \tsupp_{\RR}$ and,
therefore, $xy,xz,yz \in \support_{R_\RR}$. Since $R_\RR \subseteq \cl(R_\RR) = Q_0 \subseteq Q_K$
by extensivity of $\cl$ and Lemma~\ref{lem:Qi_properties_realizable}, we obtain $xy,xz,yz \in
\support_{Q_K}$. With $Q_K = \rel_\GG \cap (\support_{Q_K}^+ \times \support_{Q_K}^+)$, we
conclude that $Q_K\restr = \rel_\GG\relrestr$. This together with $\rel_\GG\relrestr=\Rext_{\{x y
|z\}}$ implies that $Q_K\restr = \Rext_{\{x y |z\}}$. But then, $xy|z \in \FF_K \neq \emptyset$
where $\FF_K$ is the triple set as defined in \textsc{Sat($R_{\RR},\FF$)}; a contradiction to $Q_K$
being the final relation during this particular run of \textsc{Sat($R_{\RR},\FF$)}. Hence, $\GG$
does not display $xy|z$ and it follows that $\GG$ agrees with $(\RR,\FFRR)$. 
Since $Q_K$ is a relation on $\pairs(\XRF)$, its canonical DAG $\GG$ has leaf set
$\XRF$ and, by construction, the $\FFRR$-extension of $\GG$ has leaf set $\XRF$. 
By Proposition~\ref{prop:reduction_of_forbidden_triple_sets}, the $\FFRR$-extension of $\GG$ agrees
with $(\RR,\FF)$. Hence, Statement~(1) holds. 
\end{proof}

Theorem~\ref{thm:char-RF-display} provides characterizations for when there exists a DAG that agrees
with a pair of required and forbidden triple sets. In the following theorem, we give a specific
phylogenetic DAG and network that agrees with $(\RR,\FF)$, whenever such a DAG, respectively,
network exists.

\begin{theorem}\label{thm:char-RF-construction} 
Let $(\RR,\FF)$ be a pair of triple sets. Then, the following statements are equivalent.
\begin{enumerate}
    \item There is a DAG on $\XRF$ that agrees with $(\RR,\FF)$. 
    \item There is a phylogenetic network on $\XRF$ that agrees with $(\RR,\FF)$. 
\end{enumerate}
In particular, suppose that there exists a DAG that agrees with $(\RR,\FF)$ and let $Q_K$ be the
final relation constructed during some run of \textsc{Sat($R_{\RR},\FF$)}. Then, the
$\FFRR$-extension $G$ of $\GG_{Q_K}$ as well as the network $N$ obtained from $G$ by
Lemma~\ref{lem:DAG2Network} are phylogenetic and agree with $(\RR,\FF)$. 
\end{theorem}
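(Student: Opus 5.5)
The plan mirrors the proof of Theorem~\ref{thm:weaklyagrees_invariant_DAG_network}, with Theorem~\ref{thm:char-RF-display} in place of Theorem~\ref{thm:weaklyagrees_iff_strictly_real}. The implication (2)$\Rightarrow$(1) is trivial, since every phylogenetic network is a DAG. For (1)$\Rightarrow$(2), I will assume that some DAG on $\XRF$ agrees with $(\RR,\FF)$ and fix a final relation $Q_K$ of some run of \textsc{Sat($R_{\RR},\FF$)}. The proof then has three steps.

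\emph{Step 1: $G$ agrees with $(\RR,\FF)$.} By Theorem~\ref{thm:char-RF-display}, $(1)\Rightarrow(2)$, the canonical DAG $\GG\coloneqq\GG_{Q_K}$ displays all triples in $\RR$. The argument in the proof of $(2)\Rightarrow(1)$ of that theorem then shows that $\GG$ agrees with $(\RR,\FFRR)$. That argument rests on $Q_K=\rel_\GG\cap(\support^+_{Q_K}\times\support^+_{Q_K})$ and on the fact that $\FF_K=\emptyset$. Proposition~\ref{prop:reduction_of_forbidden_triple_sets}(2) now gives that the $\FFRR$-extension $G$ of $\GG$ agrees with $(\RR,\FF)$. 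Its leaf set is $\XRF$, since $Q_K$ lives on $\pairs(\XRF)$ and extensions add no leaves.

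\emph{Step 2: $G$ and $N$ are phylogenetic.} By Lemma~\ref{lem:Qi_properties_realizable}, $Q_K$ is realizable, so Proposition~\ref{prop:properties_of_canonical_DAG} makes $\GG$ phylogenetic. Each $xy$-extension adds two vertices of in-degree $0$ and out-degree $2$ and changes no existing vertex except by raising the in-degree of the leaves $x$ and $y$. Hence $G$ is phylogenetic. By Lemma~\ref{lem:DAG2Network}, $N$ is a phylogenetic network on $\XRF$.

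\emph{Step 3: $N$ agrees with $(\RR,\FF)$.} This is the main point needing care. By Lemma~\ref{lem:DAG2Network}, $\preceq_N$ restricted to $V(G)$ equals $\preceq_G$, and $\LCA_N(ab)=\LCA_G(ab)$ whenever $\LCA_G(ab)\neq\emptyset$.
\begin{itemize}
\item For triples in $\RR$: all three LCAs are well-defined in $G$, so they and their order are preserved, and each triple in $\RR$ is displayed by $N$.
\item For $xy|z\in\FF$: each of $xy,xz,yz$ lies either in $\tsupp_\RR$ or in $\tsupp_\FF\setminus\tsupp_\RR$.
\begin{itemize}
\item If it lies in $\tsupp_\RR$, its LCA is well-defined in $G$.
\item Otherwise an extension was applied to that pair, so by Observation~\ref{obs:xy-extension1} it has at least two LCAs in $G$.
\end{itemize}
In both cases $\LCA_G\neq\emptyset$, so the LCA sets of all three pairs coincide in $G$ and $N$. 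Well-definedness, equalities and strict ancestry among these LCAs therefore transfer, and non-display of $xy|z$ in $G$ implies non-display in $N$.
\end{itemize}
This proves (1)$\Rightarrow$(2) together with the explicit construction stated in the theorem.
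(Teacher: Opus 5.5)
Your proof is correct. Its constructive part matches the paper's:
\begin{itemize}
\item Theorem~\ref{thm:char-RF-display} gives that $\GG_{Q_K}$ displays $\RR$.
\item The $\FF_K=\emptyset$ argument shows that $\GG_{Q_K}$ agrees with $(\RR,\FFRR)$, and Proposition~\ref{prop:reduction_of_forbidden_triple_sets} lifts this to the $\FFRR$-extension.
\item Phylogeny follows via Proposition~\ref{prop:properties_of_canonical_DAG}.
\item The transfer to $N$ uses $\LCA_G\neq\emptyset$ on $\tsupp_\RR\cup\tsupp_\FF$.
\end{itemize}

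The difference lies in (1)$\Rightarrow$(2). The paper proves this implication separately and without the saturation machinery. It takes an arbitrary agreeing DAG and repeatedly suppresses vertices of out-degree $1$ and in-degree at most $1$, citing \cite[Thm~5.5]{HL:24} for the preservation of LCA sets and ancestry. It then applies the $\FFRR$-extension and adds a root, and only afterwards treats the canonical construction.

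You instead obtain (1)$\Rightarrow$(2) as an immediate corollary of the constructive claim. This removes the suppression step and the external citation, and so shortens the proof.

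What the paper's route buys is a structural fact of independent interest. Any agreeing DAG can be turned into an agreeing phylogenetic network by local modifications that preserve its LCA and ancestor structure. Your route only yields one specific canonical network and rests entirely on Theorem~\ref{thm:char-RF-display}, though that theorem is needed for the ``in particular'' part anyway.
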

\begin{proof}
	Let $(\RR,\FF)$ be a pair of triple sets, $Q_K$ be the final relation constructed during some run
	of \textsc{Sat($R_{\RR},\FF$)} and put $\GG \coloneqq \GG_{Q_K}$. Suppose that Statement~(1)
	holds. Thus, there exists a DAG $G$ on $\XRF$ that agrees with $(\RR,\FF)$. We first argue that
	there exists a phylogenetic DAG $H$ that agrees with $(\RR,\FF)$ and second show that the network
	obtained from $H$ by Lemma~\ref{lem:DAG2Network} agrees with $(\RR,\FF)$ and is phylogenetic. If
	$G$ is phylogenetic, we simply put $\widetilde G \coloneqq G$. Otherwise, there exists a vertex $v
	\in V(G)$ such that $\outdeg_G(v) = 1$ and $\indeg_G(v) \leq 1$. It is an easy task to verify that
	$v$ cannot be the least common ancestor of any two leaves. We now suppress $v$, i.e., we remove
	$v$ and its incident arcs and afterwards add the arc $(p,c)$ with $p$ and $c$ being the unique
	parent (if it exists) and child of $v$, respectively. This results in a directed graph $G'$. Note
	that if $v$ is a root in $G$ then its child $c$ becomes a new root in $G'$. By
	\cite[Thm~5.5]{HL:24}, $G'$ is a DAG that satisfies $\LCA_G(xy) = \LCA_{G'}(xy)$ for all $x,y\in
	\XRF$ as well as $a \preceq_{G'} b$ if and only if $a \preceq_G b$ for all $a,b \in V(G')$. The
	latter, in particular, implies that $G'$ agrees with $(\RR,\FF)$. We can now repeat this procedure
	until we obtain a phylogenetic DAG $\widetilde G$ that agrees with $(\RR,\FF)$. In particular,
	$\widetilde G$ is a DAG on $\XRF$. Now, let $H$ be the $\FFRR$-extension of $\widetilde G$. By
	Observation~\ref{obs:xy-extension1}, $H$ remains a phylogenetic DAG on $\XRF$ that agrees with
	$(\RR,\FF)$ and $|\LCA_H(xy)|>1$ for all $xy \in \tsupp_{\FF} \setminus
	\tsupp_{\RR}$. Now, let $N$ be the network obtained from $H$ as specified in
	Lemma~\ref{lem:DAG2Network}. Thus, the $\preceq_{H}$-order among any of the vertices in $H$ is
	preserved in $N$ and $\LCA_H(xy) = \LCA_N(xy)$ for all $xy \in \pairs(\XRF)$ with $\LCA_H(xy) \neq
	\emptyset$. Since $H$ agrees with $(\RR,\FF)$ and is an $\FFRR$-extension, $\LCA_H(xy) \neq
	\emptyset$ for all $xy \in \tsupp_{\RR} \cup \tsupp_{\FF}$. It readily follows that $N$ is a
	network that agrees with $(\RR,\FF)$. Moreover, by Lemma~\ref{lem:DAG2Network}, $N$ is
	phylogenetic network on $\XRF$. Thus, Statement~(2) holds. Moreover, Statement~(2) clearly implies
	Statement~(1).

	For the last statement, suppose there is a DAG that agrees with $(\RR,\FF)$. By
	Theorem~\ref{thm:char-RF-display}, this implies that $\GG$ displays all triples in $\RR$. In the
	final paragraph of the proof of Theorem~\ref{thm:char-RF-display}, we showed that this implies
	that the $\FFRR$-extension $G$ of $\GG$ agrees with $(\RR,\FF)$. Let $N$ be the network obtained
	from $G$ as specified in Lemma~\ref{lem:DAG2Network}. By the same arguments as in the previous
	paragraph, $N$ agrees with $(\RR,\FF)$. Finally, by
	Proposition~\ref{prop:properties_of_canonical_DAG} and since $Q_K$ is realizable by
	Lemma~\ref{lem:Qi_properties_realizable}, $\GG=\GG_{Q_K}$ is phylogenetic. Since during the
	construction of the $\FFRR$-extension $G$ of $\GG$ the in-degree of vertices is only increased and
	the out-degree of all added vertices equals two, $G$ is phylogenetic. By
	Lemma~\ref{lem:DAG2Network}, $N$ is phylogenetic.
\end{proof}

\begin{figure}
    \centering

    \includegraphics[width=0.8\textwidth]{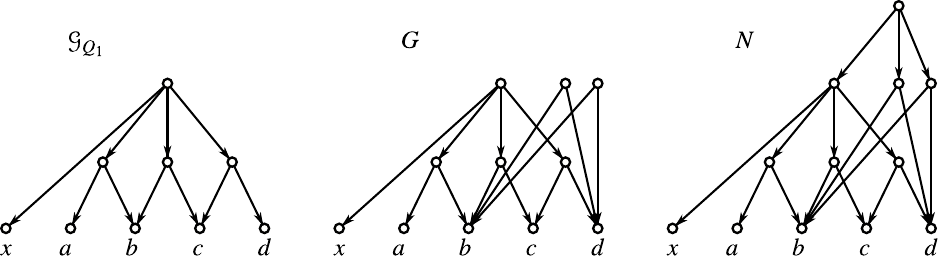}
    \caption{Consider the pair $(\RR,\FF)$ of triple sets with $\RR=\{ab|x, bc|x, cd|a\}$ and
    $\FF=\{ac|x, ab|d\}$. Let $Q_1$ be the final relation computed during the unique run of
    \textsc{Sat($R_{\RR},\FF$)}, illustrated in Figure~\ref{fig:saturate}. In this example, the canonical DAG
    $\GG_{Q_1}$ displays all triples in $\RR$, but does not agree with $(\RR,\FF)$. However, by Theorem~\ref{thm:char-RF-construction}, the
    $\FFRR$-extension $G$ of $\GG_{Q_1}$ and the network $N$ obtained from $G$ according to
    Lemma~\ref{lem:DAG2Network} agree with $(\RR,\FF)$ and are phylogenetic. 
    }
    \label{fig:TC-F_DAG_network}
\end{figure}

In Figure~\ref{fig:TC-F_DAG_network}, we show the canonical DAG $\GG_{Q_K}$, its $\FFRR$-extension
$G$, and the network $N$ obtained from $G$ by Lemma~\ref{lem:DAG2Network} for a pair $(\RR,\FF)$ of
triple sets.

\begin{figure}
\centering
\begin{minipage}{.9\textwidth}
\begin{algorithm}[H] 
    \small
  \caption{\textsc{Triple Consistency with Forbidden Triples}}
  \label{alg:TC-F-v2}
  \begin{algorithmic}[1]
  \Require  A pair $(\RR,\FF)$ of triple sets
    \Ensure  A phylogenetic DAG and a phylogenetic network on $\XRF$ that agrees with
             $(\RR,\FF)$ if one exists and, otherwise, \texttt{false} is returned
    \State $Q_K\gets \textsc{Sat($R_{\RR},\FF$)}$ 
    \State Compute the canonical DAG  $\GG_{Q_K}$  
    \State Compute the $\FFRR$-extension $G$ of $\GG_{Q_K}$ 
    \If{$G$ agrees with $(\RR,\FF)$}
        \State \Return $G$ and the network $N$ obtained from $G$ according to Lemma~\ref{lem:DAG2Network}
     \Else \ \Return \texttt{false}
    \EndIf
  \end{algorithmic}
\end{algorithm}		
\end{minipage}
\end{figure}

We argue now that the \PROBLEM{TC-F} problem can be solved in polynomial time. 

\begin{theorem}\label{thm:TCF-problem}
Let $(\RR,\FF)$ be a pair of triple sets. Then, Algorithm~\ref{alg:TC-F-v2}, applied to $(\RR,\FF)$
decides in polynomial time in $|\XRF|$ whether there exists a phylogenetic DAG, equivalently a phylogenetic 
network, on $\XRF$ that agrees with $(\RR,\FF)$ and, in the affirmative case, constructs such a
phylogenetic DAG and a corresponding phylogenetic network within the same time bound.
\end{theorem}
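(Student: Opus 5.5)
The proof splits into correctness of Algorithm~\ref{alg:TC-F-v2}, which follows from Theorem~\ref{thm:char-RF-construction}, and a running-time analysis of each step.

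For correctness, first suppose some DAG on $\XRF$ agrees with $(\RR,\FF)$. Theorem~\ref{thm:char-RF-construction} then says that for the final relation $Q_K$ of any run of \textsc{Sat($R_{\RR},\FF$)}, the $\FFRR$-extension $G$ of $\GG_{Q_K}$ and the network $N$ obtained from $G$ by Lemma~\ref{lem:DAG2Network} are phylogenetic and agree with $(\RR,\FF)$. This holds independently of the order in which triples are picked from the sets $\FF_i$. Hence the test in the \textbf{if}-clause succeeds, and the algorithm returns valid $G$ and $N$.

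Conversely, if no such DAG exists, then the DAG $G$ built by the algorithm cannot agree with $(\RR,\FF)$, so the test fails and \texttt{false} is returned. By Theorem~\ref{thm:char-RF-construction}, the existence of a DAG is equivalent to the existence of a phylogenetic network, so the decision answers both versions of the question. There is one subtlety: the returned $N$ must itself agree with $(\RR,\FF)$. This is given by the last part of Theorem~\ref{thm:char-RF-construction} whenever a DAG exists, and a DAG exists in this branch because $G$ is one.

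For the running time, put $n=|\XRF|$. Then $|\pairs(\XRF)|\in O(n^2)$, every relation on it has $O(n^4)$ elements, and $|\RR|,|\FF|\in O(n^3)$. The steps are bounded as follows.
\begin{enumerate}
\item Building $R_\RR$ is immediate.
\item By Lemma~\ref{lem:Qi_properties_realizable}, \textsc{Sat($R_{\RR},\FF$)} runs at most $|\FF|$ iterations of its \textbf{while}-loop. Each iteration makes one closure computation, which is polynomial by Theorem~\ref{thm:cl-polytime}. It also recomputes $\FF_{i+1}$, which needs a constant-size comparison of $Q_{i+1}\restr$ with $\Rext_{\{xy|z\}}$ for each of the $O(n^3)$ triples in $\FF$.
\item The canonical DAG $\GG_{Q_K}$ is the Hasse diagram of a quotient poset on $O(n^2)$ classes. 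It is computable in polynomial time, for instance as in Theorem~\ref{thm:char}.
\item The $\FFRR$-extension adds two vertices and four arcs for each of the $O(n^2)$ pairs in $\tsupp_\FF\setminus\tsupp_\RR$.
\item Checking whether $G$ agrees with $(\RR,\FF)$ needs, for each pair $ab$, the set $\LCA_G(ab)$, and needs reachability between the resulting vertices. The graph $G$ has polynomially many vertices, so the reachability relation can be computed by a transitive closure, from which each $\LCA_G(ab)$ is the set of $\preceq_G$-minimal common ancestors. Each triple of $\RR\cup\FF$ is then checked in constant time.
\item Building $N$ via Lemma~\ref{lem:DAG2Network} adds at most one vertex.
\end{enumerate}

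I expect the main obstacle to be bounding the while-loop. The argument in Lemma~\ref{lem:Qi_properties_realizable} shows that a triple never re-enters any later $\FF_j$ once it has been processed. Combined with the polynomial closure computation, this gives an overall polynomial bound, so the whole algorithm runs in polynomial time in $|\XRF|$.
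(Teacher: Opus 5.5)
Your proposal is correct and follows the paper's proof essentially step by step. Correctness comes from Theorem~\ref{thm:char-RF-construction}, and the polynomial bound comes from the at most $|\FF|$ iterations of \textsc{Sat($R_{\RR},\FF$)} with polynomial closure computations, the polynomial size of $\GG_{Q_K}$ and its $\FFRR$-extension, and a polynomial check of the \textbf{if}-condition; the only difference is that you obtain LCAs and reachability from a transitive closure of $G$, where the paper cites a dedicated LCA algorithm and breadth-first search.
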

\begin{proof}
	Let $(\RR, \FF)$ be a pair of triple sets. We start with showing that Algorithm~\ref{alg:TC-F-v2}
	is correct. Observe that Theorem~\ref{thm:char-RF-construction} implies that there is a DAG that
	agrees with $(\RR,\FF)$ if and only if the $\FFRR$-extension $G$ of $\GG_{Q_K}$ and the network
	$N$ obtained from $G$ according to Lemma~\ref{lem:DAG2Network} agree with $(\RR, \FF)$. In
	particular, $G$ and $N$ are phylogenetic. Hence, Algorithm~\ref{alg:TC-F-v2} is correct. 

	For the runtime of Algorithm~\ref{alg:TC-F-v2}, put $X \coloneqq \XRF$. Observe first that $|\RR|,
	|\FF| \in O(|X|^3)$. Hence, $R_{\RR}$ can be constructed in polynomial time in $|X|$. In
	particular, $|R_\RR|\in O(|X|^3)$. Computation of $Q_K$ requires, by
	Lemma~\ref{lem:Qi_properties_realizable}, at most $|\FF| \in O(|X|^3)$ iterations of the
	\textbf{while}-loop of \textsc{Sat($R_{\RR},\FF$)}. In particular, the runtime of a single
	iteration of the \textbf{while}-loop of \textsc{Sat($R_{\RR},\FF$)} is dominated by the
	construction of the $\cl$. This can be done in polynomial time in $|X|$ by
	Theorem~\ref{thm:cl-polytime}. Hence, $Q_K$ can be computed in polynomial time in $|X|$. Moreover,
	the canonical DAG $\GG_{Q_K}$ can be computed in polynomial time in $|X|$, cf.\ Theorem~38 and
	Algorithm~2 in \cite{LAMSH:25}. The $\FFRR$-extension $G$ of $\GG_{Q_K}$ requires adding at most
	$|\tsupp_{\FF}\setminus\tsupp_{\RR} | \leq |\tsupp_{\FF}|$ distinct $xy$-extensions, each of which
	can be performed in constant time. Since $|\tsupp_\FF|\in O(|X|^2)$, the DAG $G$ can be
	constructed in polynomial time in $|X|$. Moreover, it is straightforward to check that the network
	$N$ obtained from $G$ by Lemma~\ref{lem:DAG2Network} is constructable in polynomial time in $|X|$.
	The final \emph{if}-condition amounts to computing LCAs in $G$ and verifying whether certain paths
	between vertices exist in $G$. Determining whether an LCA of some $xy\in\pairs(X)$ is well-defined
	in $G$ and if so, determining the vertex $\lca_G(xy)$ can be done in polynomial time in $|V(G)|$,
	see \cite[Thm~3]{Kowaluk:07}. Since each vertex of $\GG_{Q_K}$ is the unique LCA of some
	$xy\in\pairs(X)$ \cite[Prop~28]{LAMSH:25}, we have $|V(\GG_{Q_K})|\leq|\pairs(X)|$ and, as argued
	above, $|V(G)\setminus V(\GG_{Q_K})|\leq 2|\tsupp_\FF|\in O(|X|^2)$. Therefore, $V(G)$ is of
	polynomial size in $|X|$. Lastly, checking whether $u \prec_G v$ for two vertices $u,v \in V(G)$
	is equivalent to checking wether there exists a $vu$-path. This can be done in $O(|V(G)| +
	|E(G)|)$ time by breadth-first search \cite{cormen2022introduction}. Since $|E(G)| \leq |V(G)|^2$
	and $|V(G)|$ is polynomial in $|X|$, verifying whether $G$ agrees with $(\RR,\FF)$ can be done in
	polynomial time in $|X|$. Thus, Algorithm~\ref{alg:TC-F-v2} runs in polynomial time.
\end{proof}

%
%
\section{Consistency Problems with Additional LCA-Constraints}
\label{sec:strong_and_LCA_problem}

In the previous section, a triple $xy|z$ was considered not to be displayed by a DAG $G$ whenever at
least one of the conditions required for display fails. In particular, if at least one of
$\lca_G(xy)$, $\lca_G(xz)$ and $\lca_G(yz)$ is not well-defined, then $G$ does not display $xy|z$.
This convention allowed us to use $\FFRR$-extensions to exclude certain forbidden triples by making
at least one of their relevant LCAs not well-defined. Analogously, $\FR$-extensions were used in
Section~\ref{sec:ATC-F} to exclude forbidden anchored triples via non-well-defined relevant LCAs.

One may, however, want to impose a stronger interpretation of forbidden anchored or rooted triples:
a forbidden triple should fail to be displayed because the relative positions of the relevant LCAs
do not satisfy the corresponding display condition, and not merely because one of the underlying
LCAs is not well-defined. This motivates variants in which certain LCAs, possibly associated with
forbidden anchored or rooted triples, or even all LCAs, are required to be well-defined.

In the following, we first discuss the problems for anchored triples in Section~\ref{sec:strong-ATC-F_and_LCA-ATC-F}, 
before turning to ``ordinary'' rooted triples in Section~\ref{sec:strong-TC-F_and_LCA-TC-F}.

\subsection{Anchored Triples}
\label{sec:strong-ATC-F_and_LCA-ATC-F}

We first consider a general variant for anchored triples in which all pairwise LCAs in a prescribed
set $\mathscr W$ are required to be well-defined.

\begin{problem}[\PROBLEM{$\mathscr W$-Anchored Triples Consistency with Forbidden Triples
($\mathscr W$-ATC-F)}]\ \\
  \begin{tabular}{ll}
    \emph{Input:}    & Two sets $\RR$ and $\FF$ of anchored triples and a set
                       $\mathscr W\subseteq \pairs(\XRF)$. \\
    \emph{Question:} & Is there a phylogenetic network (resp., DAG) $G$ on $\XRF$
                       that $\anchor$-agrees with $(\RR,\FF)$ and for which \\ & $\lca_G(ab)$ is well-defined
                       for all $ab\in\mathscr W$?
  \end{tabular}
\end{problem}

Note that $\mathscr W$-\PROBLEM{ATC-F} generalizes \PROBLEM{ATC-F}, since every instance of
\PROBLEM{ATC-F} can be viewed as an instance of $\mathscr W$-\PROBLEM{ATC-F} by taking $\mathscr
W=\emptyset$, and the answer remains unchanged. 
The next definition therefore naturally extends Definition~\ref{def:anchor-display}.

\begin{definition}
Let $(\RR,\FF)$ be a pair of anchored triple sets and let $\mathscr W\subseteq \pairs(\XRF)$. A DAG
$G$ \emph{$\mathscr W$-$\anchor$-agrees} with $(\RR,\FF)$ if $G$ \anchor-agrees with $(\RR,\FF)$ and
$\lca_G(ab)$ is well-defined for every $ab\in\mathscr W$.
\end{definition}

The following result, analogous to Theorem~\ref{thm:weaklyagrees_iff_strictly_real}, also reflects
that $\mathscr W$-\PROBLEM{ATC-F} generalizes \PROBLEM{ATC-F}. Although
Theorem~\ref{thm:W-anchor-agree}, in fact, implies parts of
Theorem~\ref{thm:weaklyagrees_iff_strictly_real} we have treated \PROBLEM{ATC-F} separately in
Section~\ref{sec:ATC-F} in order to introduce the main ideas without the additional notation and
terminology needed for the more general $\mathscr W$-setting.

\begin{theorem}\label{thm:W-anchor-agree}
Let $(\RR,\FF)$ be a pair of anchored triple sets and let $\mathscr W\subseteq\pairs(\XRF)$.
Moreover, let $R \coloneqq \{(xy,xz) \,:\, \underline{x}y|z\in\RR\} \cup
\{(ab,ab) \,:\, ab\in\mathscr W\}$ and $F = \{(xy,xz) \,:\, \underline{x}y|z \in \FF\}$ be relations
on $\pairs(\XRF)$. Then the following statements are equivalent for every DAG $G$ on $\XRF$:
\begin{enumerate}
    \item $G$ $\mathscr W$-$\anchor$-agrees with $(\RR,\FF)$.
    \item $G$ \RF-realizes $(R,F)$ and $(xz,xy)\notin \tc(R)$ for all $\underline{x}y|z\in\RR$. 
\end{enumerate}
In particular, it can be decided in polynomial time in $|\XRF|$ whether there exists a phylogenetic
DAG or network on $\XRF$ that $\mathscr W$-$\anchor$-agrees with $(\RR,\FF)$. In the affirmative
case, the $\FR$-extension $G$ of the canonical DAG $\GG_{R,F}$ and the network obtained from $G$
according to Lemma~\ref{lem:DAG2Network} $\mathscr W$-$\anchor$-agree with $(\RR,\FF)$ and both can
be constructed in polynomial time in $|\XRF|$. 
\end{theorem}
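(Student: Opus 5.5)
The plan is to mirror the proof of Theorem~\ref{thm:weaklyagrees_iff_strictly_real}, with one new ingredient: the reflexive pairs $(ab,ab)$ for $ab\in\mathscr W$. Their only role is to place $ab$ into $\support_R$. Realization of $R$ then forces $\lca_G(ab)$ to be well-defined, while \axiom{I2} applied to $(ab,ab)$ imposes only the vacuous condition $\lca_G(ab)=\lca_G(ab)$. Let $R_0\coloneqq\{(xy,xz):\underline{x}y|z\in\RR\}$, so that $R=R_0\cup\{(ab,ab):ab\in\mathscr W\}$ and $\support^+_R=\support^+_{R_0}\cup\mathscr W$.

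\emph{(1)$\Rightarrow$(2).} Suppose $G$ $\mathscr W$-$\anchor$-agrees with $(\RR,\FF)$. First, all LCAs of pairs in $\support^+_R$ are well-defined. For pairs $xx$ this is trivial. For pairs occurring in some $\underline{x}y|z\in\RR$ it follows from $\anchor$-display. For pairs in $\mathscr W$ it holds by assumption.

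Next, I would show that $\tc(R)$ contains no reversed pair $(xz,xy)$. Every pair in $\tc(R)$ either is a loop $(ab,ab)$ or arises from a chain in $R_0$. Each step $(xy,xz)\in R_0$ of such a chain satisfies $\lca_G(xy)\prec_G\lca_G(xz)$, so any chain gives a strict descent in $G$. If $(xz,xy)\in\tc(R)$ with $xz\neq xy$, we would obtain $\lca_G(xz)\prec_G\lca_G(xy)\prec_G\lca_G(xz)$, contradicting acyclicity. Note that $xz\neq xy$ because $y\neq z$.

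With this in hand, \axiom{I1} for the pairs of $R_0$ holds by $\anchor$-display. \axiom{I2} applies only to loops, where it is trivial, since a non-loop $(p,q)\in R$ with $(q,p)\in\tc(R)$ would produce a directed cycle by the same argument. Finally, \axiom{F} holds exactly as in the proof of Theorem~\ref{thm:weaklyagrees_iff_strictly_real}.

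\emph{(2)$\Rightarrow$(1).} Realization of $R$ gives well-definedness of $\lca_G(ab)$ for all $ab\in\mathscr W\subseteq\support_R$. For $\underline{x}y|z\in\RR$, \axiom{I1} applies because $(xz,xy)\notin\tc(R)$, and yields $\lca_G(xy)\prec_G\lca_G(xz)$. \axiom{F} rules out every triple in $\FF$, as before.

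For the algorithmic part, I would use Theorem~\ref{thm:characterization_AF_realized}(1). We check \axiom{Y1}, \axiom{Y2} and $(xz,xy)\notin\tc(R)$ in polynomial time. If these hold, the $\FR$-extension $G$ of $\GG_{R,F}$ \RF-realizes $(R,F)$. Condition (2) then holds for $G$, because the $\tc$-condition depends only on $R$ and not on $G$. Hence $G$ $\mathscr W$-$\anchor$-agrees with $(\RR,\FF)$. Conversely, if some DAG agrees, then (2) holds for it, so $(R,F)$ is \RF-realizable and the tests pass. Phylogeneticity follows from Proposition~\ref{prop:properties_of_canonical_DAG} and the nature of the extensions.

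For the network $N$ of Lemma~\ref{lem:DAG2Network}, I would argue as in Theorem~\ref{thm:weaklyagrees_invariant_DAG_network}. All relevant LCA-sets in $G$ are nonempty, so they coincide in $N$ and the ancestor order is preserved. In particular, $\lca_N(ab)$ remains well-defined for all $ab\in\mathscr W$.

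The main obstacle I expect is the $\tc$-bookkeeping around the added loops. One must make sure that these loops neither trigger \axiom{I2} in an unwanted way nor interfere with the $\FR$-extension. The latter is fine: $ab\in\mathscr W$ lies in $\support^+_R$, so no $ab$-extension is applied to it.
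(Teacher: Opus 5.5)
Your proposal is correct and follows essentially the same route as the paper. You verify \axiom{I1}, \axiom{I2} and \axiom{F} via the observation that every non-loop pair of $\tc(R)$ comes from a chain in $R_0$ and hence is a strict LCA-descent in $G$; the paper runs the same chain argument inside its \axiom{I2} check instead of isolating it first, and handles the algorithmic and network parts through Theorem~\ref{thm:characterization_AF_realized}, Proposition~\ref{prop:properties_of_canonical_DAG} and Lemma~\ref{lem:DAG2Network} just as you do, except that you also state explicitly that a yes-instance forces the polynomial-time tests to pass, which the paper leaves implicit.
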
 
\begin{proof} 
	Let $\RR$, $\FF$, $\mathscr W$, $G$, $R$, and $F$ be as stated. Assume first that $G$ $\mathscr
	W$-$\anchor$-agrees with $(\RR,\FF)$. Then, for every $\underline{x}y|z\in\RR$, the LCAs
	$\lca_G(xy)$ and $\lca_G(xz)$ are well-defined and satisfy $ \lca_G(xy)\prec_G \lca_G(xz)$.
	Moreover, for every $ab\in\mathscr W$, the LCA $\lca_G(ab)$ is well-defined by assumption. Hence
	all LCAs required by elements of $\support_R^+$ are well-defined.

	We start by showing that $G$ realizes $R$. To verify \axiom{I1} and \axiom{I2}, let $(xy,xz)\in R$
	and suppose first that $(xz,xy)\notin\tc(R)$. In this case, $xy\neq xz$ holds and, with
	$(xy,xz)\in R$, it follows that there is an anchored triple $\underline{x}y|z\in\RR$. Since $G$
	\anchor-displays $\underline{x}y|z\in\RR$, we have $\lca_G(xy)\prec_G\lca_G(xz)$. Thus
	Condition~\axiom{I1} is satisfied. To check that \axiom{I2} is satisfied, assume that
	$(xz,xy)\in\tc(R)$. If $xy=xz$, then $xy \in \support^+_R$ ensures that $\lca_G(xy)$ is
	well-defined and $\lca_G(xy)=\lca_G(xz)$. Assume now that $xy\neq xz$ in which case there is an
	anchored triple $\underline{x}y|z\in\RR$. Since $(xz,xy)\in\tc(R)$, there is a $(xz,xy)$-chain
	$xz=p_0 \,R\, p_1 \,R\, \dots \,R\, p_k=xy$ with $k\geq 1$. Each element $(p_{j-1},p_j)\in R$ is
	of one of the following two types. If $(p_{j-1},p_j)=(ab,ab)$ holds for some $ab\in\mathscr W$,
	then $\lca_G(p_{j-1})=\lca_G(p_j)$. If, instead, $(p_{j-1},p_j)=(ab,ac)$ holds for some anchored
	triple $\underline{a}b|c\in\RR$, we have $\lca_G(p_{j-1})\prec_G \lca_G(p_j)$, since $G$
	$\anchor$-displays all anchored triples in $\RR$. Combining these two cases, we see that
	$\lca_G(p_{j-1})\preceq_G\lca_G(p_{j})$ for all $1\leq j\leq k$. By transitivity
	of $\preceq_G$, the entire $(xz,xy)$-chain therefore implies $\lca_G(xz)\preceq_G\lca_G(xy)$.
	However, this contradicts $\lca_G(xy)\prec_G\lca_G(xz)$, which follows from
	$\underline{x}y|z\in\RR$ being $\anchor$-displayed by $G$. Consequently, $(xz,xy)\notin\tc(R)$ for
	all $\underline{x}y|z\in\RR$, and Condition~\axiom{I2} is vacuously satisfied for all $(p,q)\in R$
	with $p\neq q$.

	It remains to verify the forbidden constraints. Let $(xy,xz)\in F$. Hence, there is
	$\underline{x}y|z\in\FF$. Since $G$ does not $\anchor$-display $\underline{x}y|z$, either at least
	one of $\lca_G(xy)$ and $\lca_G(xz)$ is not well-defined, or both are well-defined and
	$\lca_G(xy)\nprec_G\lca_G(xz)$. This is precisely Condition~\axiom{F}. Thus, $G$ \RF-realizes
	$(R,F)$.

	Conversely, suppose that $G$ \RF-realizes $(R,F)$ and that $(xz,xy)\notin\tc(R)$ for all
	$\underline{x}y|z\in\RR$. Let $\underline{x}y|z\in\RR$. Then $(xy,xz)\in R$, and since
	$(xz,xy)\notin\tc(R)$, Condition~\axiom{I1} implies $ \lca_G(xy)\prec_G\lca_G(xz) $. In
	particular, $\underline{x}y|z$ is $\anchor$-displayed by $G$. Now let $\underline{x}y|z\in\FF$.
	Then $(xy,xz)\in F$. By Condition~\axiom{F}, either at least one of $\lca_G(xy)$ and $\lca_G(xz)$
	is not well-defined, or both are well-defined and $\lca_G(xy)\nprec_G\lca_G(xz)$. Hence $G$ does
	not $\anchor$-display $\underline{x}y|z$. Finally, for every $ab\in\mathscr W$, we have
	$(ab,ab)\in R$. Thus $ab\in\support_R^+$, and since $G$ realizes $R$, the LCA $\lca_G(ab)$ is
	well-defined. Therefore, $G$ $\mathscr W$-$\anchor$-agrees with $(\RR,\FF)$.

	We now verify the last two statements in this theorem. By
	Theorem~\ref{thm:characterization_AF_realized}, \RF-realizability can be tested in polynomial time
	in $|\XRF|$. The additional conditions involving $\tc(R)$ can also be checked in polynomial time.
	If these conditions are satisfied, let $H$ be the $\FR$-extension of the canonical DAG
	$\GG_{R,F}$. By Theorem~\ref{thm:characterization_AF_realized}, $H$ is a DAG on $\XRF$ that
	\RF-realizes $(R,F)$. This together with $(xz,xy)\notin\tc(R)$ for all $ \underline{x}y|z\in\RR$
	and the equivalence between Statements (1) and (2) implies that $H$ $\mathscr
	W$-$\anchor$-agrees with $(\RR,\FF)$. Moreover, $\GG_{R,F}$ is phylogenetic by
	Proposition~\ref{prop:properties_of_canonical_DAG}, since $(R,F)$ is \RF-realizable. This together
	with the construction of the $\FR$-extension (cf.\ Definition~\ref{def:FRextension}) implies that
	$H$ is phylogenetic. Finally, let $N$ be the network obtained from $H$ according to
	Lemma~\ref{lem:DAG2Network}. 
    By this lemma, the ancestor relation among the vertices of $H$ is preserved in $N$. Moreover, for
  every pair $ab\in\pairs(\XRF)$ with $\LCA_H(ab)\neq\emptyset$, we have $\LCA_H(ab)=\LCA_N(ab)$.
  Thus, every LCA that is well-defined in $H$ remains well-defined and unchanged in $N$. It follows
  immediately that all anchored triples in $\RR$ remain \anchor-displayed by $N$, and that all LCAs
  required by $\mathscr W$ remain well-defined in $N$. Now let $\underline{x}y|z\in\FF$. Then
  $(xy,xz)\in F$, and hence $xy,xz\in\support_F$. By construction of the $\FR$-extension, each of
  the LCA sets $\LCA_H(xy)$ and $\LCA_H(xz)$ is nonempty. Thus, $\LCA_H(xy) = \LCA_N(xy)$ and
  $\LCA_H(xz) = \LCA_N(xz)$ holds by Lemma~\ref{lem:DAG2Network}. This together with the ancestor
  relation among the vertices of $H$ being preserved in $N$, implies that $N$ does not
  \anchor-display any forbidden anchored triple.
	Hence, the network $N$ also $\mathscr W$-$\anchor$-agrees with $(\RR,\FF)$. By
	Lemma~\ref{lem:DAG2Network}, $N$ is phylogenetic and has leaf set $\XRF$. Lastly, by
	Theorem~\ref{thm:characterization_AF_realized}, $H$ can be constructed in polynomial time in
	$|\XRF|$. The network $N$ is then obtained from $H$ by Lemma~\ref{lem:DAG2Network}, and one easily
	verifies that this construction is also polynomial in $|\XRF|$.
\end{proof}

Theorem~\ref{thm:W-anchor-agree} shows that the problem $\mathscr W$-\PROBLEM{ATC-F} can be solved 
in polynomial time in $|\XRF|$. We now specialize $\mathscr{W}$\PROBLEM{-ATC-F} to the two problems 
mentioned in the beginning of this section. The first specialization uses the following definition, 
analogous to Definition~\ref{def:tsupp}.

\begin{definition}
If $\mathcal S$ is a set of anchored triples, we define the \emph{anchored-triple-support} as
$\atsupp_{\mathcal{S}}\coloneqq \bigcup_{\underline{x}y|z \in \mathcal{S}} \{xy,xz\}$.
\end{definition}

By choosing $\mathscr{W}=\atsupp_\FF$ in $\mathscr{W}$\PROBLEM{-ATC-F}, we obtain the following
problem.

\begin{problem}[\PROBLEM{Strong Anchored Triples Consistency with Forbidden Triples (strong-ATC-F)}]\ \\
  \begin{tabular}{ll}
    \emph{Input:}    & Two sets $\RR$ and $\FF$ of anchored triples. \\
    \emph{Question:} & Is there a phylogenetic network (resp., DAG) $G$ on $\XRF$ that
                       \anchor-agrees with $(\RR,\FF)$ and for which the \\ 
                     &  LCAs $\lca_G(xy)$ and $\lca_G(xz)$ are well-defined for all
                        $\underline{x}y|z\in\FF$?
  \end{tabular}
\end{problem} 

Now, recall that a DAG $G$ on $X$ has the 2-lca-property if $\lca_G(xy)$ is well-defined for all
$xy\in\pairs(X)$. Therefore, choosing $\mathscr{W}=\pairs(\XRF)$ in $\mathscr{W}$\PROBLEM{-ATC-F}
gives rise to the following problem.

\begin{problem}[\PROBLEM{LCA Anchored Triples Consistency with Forbidden Triples (LCA-ATC-F)}]\ \\
  \begin{tabular}{ll}
    \emph{Input:}    & Two sets $\RR$ and $\FF$ of anchored triples. \\
   \emph{Question:}  & Is there a phylogenetic network (resp., DAG) on $\XRF$ with 2-lca-property 
                       that \anchor-agrees with $(\RR,\FF)$? 
  \end{tabular}
\end{problem}

Clearly, Theorem~\ref{thm:W-anchor-agree} immediately implies the following
\begin{corollary}
   The problems \PROBLEM{strong-ATC-F} and \PROBLEM{LCA-ATC-F} can be solved in polynomial time. 
\end{corollary}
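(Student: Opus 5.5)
The plan is to obtain both problems as special cases of $\mathscr W$-\PROBLEM{ATC-F} and apply Theorem~\ref{thm:W-anchor-agree}.

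\textbf{The problem \PROBLEM{strong-ATC-F}.} Given an instance $(\RR,\FF)$, set $\mathscr W\coloneqq\atsupp_\FF=\bigcup_{\underline{x}y|z\in\FF}\{xy,xz\}$. Since every anchored triple has pairwise distinct leaves in $\XRF$, we have $\mathscr W\subseteq\pairs(\XRF)$, so $(\RR,\FF,\mathscr W)$ is a valid input to $\mathscr W$-\PROBLEM{ATC-F}. By definition, a DAG (or network) $G$ on $\XRF$ is a solution of \PROBLEM{strong-ATC-F} exactly when it $\anchor$-agrees with $(\RR,\FF)$ and $\lca_G(xy)$ and $\lca_G(xz)$ are well-defined for all $\underline{x}y|z\in\FF$. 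This is precisely the statement that $G$ $\mathscr W$-$\anchor$-agrees with $(\RR,\FF)$. Hence the two instances have the same yes/no answer and the same solution sets. The set $\mathscr W$ has size $O(|\FF|)\subseteq O(|\XRF|^3)$ and is computed in polynomial time. Theorem~\ref{thm:W-anchor-agree} then decides the instance in time polynomial in $|\XRF|$. In the affirmative case it also returns a phylogenetic DAG, namely the $\FR$-extension of $\GG_{R,F}$, together with a phylogenetic network.

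\textbf{The problem \PROBLEM{LCA-ATC-F}.} Take $\mathscr W\coloneqq\pairs(\XRF)$, which has size $O(|\XRF|^2)$. A DAG on $\XRF$ has the 2-lca-property if and only if $\lca_G(ab)$ is well-defined for every $ab\in\pairs(\XRF)$. So $G$ is a solution of \PROBLEM{LCA-ATC-F} exactly when it $\mathscr W$-$\anchor$-agrees with $(\RR,\FF)$, and Theorem~\ref{thm:W-anchor-agree} applies in the same way.

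\textbf{Main check.} The only point needing care is that the "DAG" and "phylogenetic network" readings of each problem are equivalent. Theorem~\ref{thm:W-anchor-agree} already covers this. Its network $N$, built via Lemma~\ref{lem:DAG2Network}, preserves every well-defined LCA and the ancestor order. In particular, for $\mathscr W=\pairs(\XRF)$ it preserves the 2-lca-property, and the singleton pairs $aa$ are trivially well-defined. Conversely, any network is itself a DAG. No further obstacle arises; the corollary is a direct instantiation of the theorem.
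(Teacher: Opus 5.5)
Your proposal is correct and matches the paper's proof, which likewise treats both problems as the instances $\mathscr W=\atsupp_\FF$ and $\mathscr W=\pairs(\XRF)$ of $\mathscr W$-\PROBLEM{ATC-F} and applies Theorem~\ref{thm:W-anchor-agree} directly. As a minor addition, Corollary~\ref{cor:no-FR-extension-strong-ATC} shows that for these two choices the $\FR$-extension adds nothing, so $\GG_{R,F}$ itself is already a solution.
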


For the two special choices of $\mathscr W$ considered above, the construction from
Theorem~\ref{thm:W-anchor-agree} simplifies further. In the general $\mathscr W$-\PROBLEM{ATC-F}
setting, the solution is obtained from the $\FR$-extension of the canonical DAG $\GG_{R,F}$. For
\PROBLEM{strong-ATC-F} and \PROBLEM{LCA-ATC-F}, however, the prescribed set $\mathscr W$ already
contains all leaf pairs that occur in the forbidden anchored triples. Consequently, the
$\FR$-extension becomes unnecessary, as made precise in the following corollary.

\begin{corollary}\label{cor:no-FR-extension-strong-ATC}
Let $(\RR,\FF)$ be a pair of anchored triple sets and let $\mathscr
W\in\{\atsupp_\FF,\pairs(\XRF)\}$. Moreover, let $R$ and $F$ be as in Theorem~\ref{thm:W-anchor-agree}.
If $(\RR,\FF)$ is a yes-instance of the corresponding $\mathscr W$-\PROBLEM{ATC-F} problem, then the
canonical DAG $\GG_{R,F}$ $\mathscr W$-$\anchor$-agrees with $(\RR,\FF)$. 
\end{corollary}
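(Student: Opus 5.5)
The plan is to show that the $\FR$-extension in Theorem~\ref{thm:W-anchor-agree} is vacuous, i.e., $\support_F\setminus\support_R^+=\emptyset$, so that the $\FR$-extension of $\GG_{R,F}$ is $\GG_{R,F}$ itself. The conclusion then follows directly from Theorem~\ref{thm:W-anchor-agree}.

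First, assume $(\RR,\FF)$ is a yes-instance of $\mathscr W$-\PROBLEM{ATC-F}, so some DAG $\mathscr W$-$\anchor$-agrees with $(\RR,\FF)$. By Theorem~\ref{thm:W-anchor-agree}, the $\FR$-extension $G$ of $\GG_{R,F}$ then $\mathscr W$-$\anchor$-agrees with $(\RR,\FF)$.

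Second, compare supports. By definition of $F$,
\[
\support_F=\{xy,xz \,:\, \underline{x}y|z\in\FF\}=\atsupp_\FF .
\]
In both cases $\atsupp_\FF\subseteq\mathscr W$: trivially for $\mathscr W=\atsupp_\FF$, and because $\atsupp_\FF\subseteq\pairs(\XRF)$ for $\mathscr W=\pairs(\XRF)$. Since $R$ contains $(ab,ab)$ for every $ab\in\mathscr W$, we have $\mathscr W\subseteq\support_R\subseteq\support_R^+$. Hence $\support_F\setminus\support_R^+=\emptyset$.

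Third, by Definition~\ref{def:FRextension}, no $xy$-extension is then applied, so $G=\GG_{R,F}$. Therefore $\GG_{R,F}$ itself $\mathscr W$-$\anchor$-agrees with $(\RR,\FF)$.

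The only step needing care is the support identification. One must confirm that $\support_F$ is exactly $\atsupp_\FF$, with no elements $xx$ and nothing from the $R$-side. One must also note that the $\FR$-extension is defined only through $\support_F\setminus\support^+_R$. Both follow immediately from the definitions, so I expect no real obstacle.
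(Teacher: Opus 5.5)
Your proposal is correct and matches the paper's proof essentially step for step. Like the paper, you observe that $\support_F=\atsupp_\FF\subseteq\mathscr W\subseteq\support_R^+$, so $\support_F\setminus\support_R^+=\emptyset$; hence the $\FR$-extension is $\GG_{R,F}$ itself, and the claim follows from Theorem~\ref{thm:W-anchor-agree}.
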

\begin{proof}
Let $\RR$, $\FF$, $\mathscr{W}$, $R$, and $F$ be as stated. By construction, $(ab,ab)\in R$ for
every $ab\in\mathscr W$, and hence $\mathscr W\subseteq\support_R^+$. Moreover, for the two choices
$\mathscr W=\atsupp_\FF$ and $\mathscr W=\pairs(\XRF)$, we have
$\support_F=\atsupp_\FF\subseteq\mathscr W$. Therefore, $\support_F\setminus\support_R^+=\emptyset$.
Thus, the $\FR$-extension of $\GG_{R,F}$ applies no $xy$-extension and coincides with $\GG_{R,F}$
itself. The claim now follows from Theorem~\ref{thm:W-anchor-agree}.
\end{proof}

In Figure~\ref{fig:strong-ATC-F}, we provide an example of the canonical DAG $\GG_{R,F}$ that
$\mathscr{W}$-\anchor-agrees with $(\RR,\FF)$ for $\mathscr{W} = \atsupp_\FF$.

\begin{figure}
    \centering
    \includegraphics[width=0.8\linewidth]{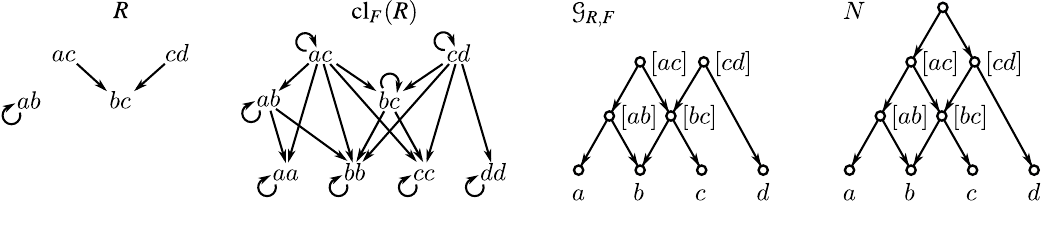}
    \caption{
	  Consider the pair $(\RR,\FF)$ of anchored triple sets with $\RR = \{\underline{c}b|a, \underline{c}b|d\}$ 
    and $\FF = \{\underline{b}c|a\}$ as in Figure~\ref{fig:anchor_displaying_RF}, and let $\mathscr{W} =
	  \atsupp_\FF$. Then $\XRF=\{a,b,c,d\}$ and, according to Theorem~\ref{thm:W-anchor-agree}, the
	  corresponding relations on $\pairs(\XRF)$ are $R=\{(bc,ac),(bc,cd),(ab,ab)\}$ and
	  $F=\{(bc,ab)\}$. For $S \in \{R,\Fcl(R)\}$, an arc $p\to q$ is drawn precisely if $(q,p) \in S$.
	  The canonical DAG $\GG_{R,F}$ $\mathscr{W}$-\anchor-agrees with $(\RR,\FF)$. Indeed, in
	  accordance with Theorem~\ref{thm:W-anchor-agree}, $\GG_{R,F}$ \RF-realizes $(R,F)$ and
	  $(ac,bc),(cd,bc) \notin \tc(R) = R$. Moreover, the phylogenetic network $N$ obtained from
	  $\GG_{R,F}$ by Lemma~\ref{lem:DAG2Network} also $\mathscr{W}$-\anchor-agrees with $(\RR,\FF)$.}
    \label{fig:strong-ATC-F}
\end{figure}

We finish this subsection by showing that \PROBLEM{strong-ATC-F} and \PROBLEM{LCA-ATC-F} are
equivalent. More precisely, an input pair $(\RR,\FF)$ is a yes-instance of \PROBLEM{strong-ATC-F} if and
only if it is a yes-instance of \PROBLEM{LCA-ATC-F}. In fact, we prove the following slightly more
general statement.

\begin{proposition}\label{prop:strong_ATC_F_equivalent_LCA_ATC_F}
Let $(\RR,\FF)$ be a pair of anchored triple sets, let
$\mathscr W = \atsupp_\FF$ and $\mathscr{W}' \subseteq\pairs(\XRF)$ such that 
$\mathscr{W} \subseteq \mathscr{W}'$. 
Then the following statements are equivalent:
\begin{enumerate}
    \item There exists a DAG that $\mathscr W$-$\anchor$-agrees with $(\RR,\FF)$.
    \item There exists a DAG that $\mathscr W'$-$\anchor$-agrees with $(\RR,\FF)$.
\end{enumerate}
\end{proposition}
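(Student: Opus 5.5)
The direction (2)$\Rightarrow$(1) is immediate. If $G$ $\mathscr W'$-$\anchor$-agrees with $(\RR,\FF)$, then $\lca_G(ab)$ is well-defined for all $ab\in\mathscr W'\supseteq\mathscr W$, so $G$ $\mathscr W$-$\anchor$-agrees with $(\RR,\FF)$. For (1)$\Rightarrow$(2) I would not go through the closure $\Fcl$, since comparing $\Fcl(R)$ with the closure of an enlarged relation seems awkward. Instead, I would start from an arbitrary DAG $G$ on $\XRF$ that $\mathscr W$-$\anchor$-agrees with $(\RR,\FF)$. I would then repair, one pair at a time, every $ab\in\mathscr W'$ whose LCA is not well-defined.

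\emph{The local repair step.} Let $ab\in\mathscr W'$ with $|\LCA_G(ab)|\neq 1$. Note $a\neq b$, since $\lca_G(aa)=a$ always. Since $G$ is a DAG on $\XRF$ and $a,b$ are leaves, I would first reduce to the case $\LCA_G(ab)\neq\emptyset$, e.g.\ by passing to the network of Lemma~\ref{lem:DAG2Network}, which preserves all existing LCA sets and the ancestor order. Now add a new vertex $w$ with arcs $(u,w)$ for every $u\in\LCA_G(ab)$ and arcs $(w,a),(w,b)$; call the result $G'$. The claims to verify are:
\begin{enumerate}
\item $G'$ is a DAG, because $w$ has only the leaves $a,b$ as proper descendants.
\item For old vertices $p,q$, $p\preceq_{G'}q$ iff $p\preceq_G q$. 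Any new path through $w$ runs $u\to w\to a$ or $u\to w\to b$, and $u\succeq_G a,b$ already.
\item $\lca_{G'}(ab)=w$. Every old common ancestor of $a,b$ lies above some old LCA $u$, which is a parent of $w$. Hence $w$ is below all other common ancestors of $a$ and $b$.
\item $\LCA_{G'}(cd)=\LCA_G(cd)$ for all $cd\neq ab$. The vertex $w$ is a common ancestor of $c,d$ only when $\{c,d\}\subseteq\{a,b\}$, and $cd\in\{aa,bb\}$ has the leaf itself as unique LCA.
\end{enumerate}

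\emph{Preservation of agreement.} By (2) and (4), every anchored triple in $\RR$ remains $\anchor$-displayed, and every LCA in $\mathscr W$ stays well-defined and unchanged. The only LCA that changes status is that of $ab$. The key observation is that $ab\notin\mathscr W=\atsupp_\FF$, because $\lca_G$ is already well-defined on $\mathscr W$. Hence no forbidden anchored triple $\underline{x}y|z\in\FF$ involves $ab$ as $xy$ or $xz$. Since both $\lca_{G'}(xy)$ and $\lca_{G'}(xz)$ coincide with their values in $G$ and the order among old vertices is preserved, each such triple is still not $\anchor$-displayed. Thus $G'$ $\mathscr W\cup\{ab\}$-$\anchor$-agrees with $(\RR,\FF)$. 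The leaf set is unchanged. Moreover, $w$ has out-degree $2$, so $G'$ is phylogenetic whenever $G$ is.

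\emph{Iteration.} Iterating this over the finitely many offending pairs of $\mathscr W'$ yields a DAG that $\mathscr W'$-$\anchor$-agrees with $(\RR,\FF)$. The point I expect to need the most care is step (3) in the case where $\LCA_G(ab)=\emptyset$. I would dispose of it up front via Lemma~\ref{lem:DAG2Network}, after which every pair has a nonempty LCA set. I would also double-check that adding $w$ cannot create a new common ancestor below an existing unique LCA of some other pair; this is exactly what (4) rules out, since $w$'s only descendants are $a$ and $b$.
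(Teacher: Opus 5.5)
Your proposal is correct and follows essentially the paper's proof. The paper also first reduces to a network, citing Theorem~\ref{thm:W-anchor-agree} where you cite Lemma~\ref{lem:DAG2Network}. It then attaches, for each offending pair $xy$, a new vertex with arcs from every $v\in\LCA_G(xy)$ and to $x$ and $y$. The only difference is that the paper treats all pairs in $\mathscr Z=\{xy\in\mathscr W' : \lca_G(xy)\text{ is not well-defined}\}$ simultaneously, whereas you repair them one at a time.

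One point deserves to be stated explicitly. Passing to the network of Lemma~\ref{lem:DAG2Network} preserves $\mathscr W$-$\anchor$-agreement precisely because every pair occurring in $\RR$ or in $\FF$ (the latter via $\mathscr W=\atsupp_\FF$) already has a nonempty LCA set in $G$.
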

\begin{proof} 
Let $\RR$, $\FF$, $\mathscr{W}$, and $\mathscr{W}'$ be as stated. Clearly, Statement~(2) implies
Statement~(1), as $\mathscr{W} \subseteq \mathscr{W'}$. 

Thus, it remains to show that Statement~(1) implies Statement~(2). Suppose that there exists a
DAG $G$ that $\mathscr W$-$\anchor$-agrees with $(\RR,\FF)$. By Theorem~\ref{thm:W-anchor-agree}, we
may assume that $G$ is a phylogenetic network. In particular, $\LCA_G(xy)\neq\emptyset$ for all
$xy\in\pairs(\XRF)$.
We now construct a DAG $H$ that $\mathscr W'$-$\anchor$-agrees with $(\RR,\FF)$. For that, we want
to preserve the structure of $G$ and ensure that $\lca_H(xy)$ becomes well-defined for each $xy \in
\mathscr W'$ for which $\lca_G(xy)$ is not well-defined. Thus, put $\mathscr Z \coloneqq \{xy \in
\mathscr W' \,:\, \lca_G(xy) \text{ is not well-defined}\}$. We then define the DAG $H$ by setting 
\[
V(H) \coloneqq V(G) \cup \{u_{xy} \,:\, xy \in  \mathscr Z\} 
\]
and 
\[
E(H) \coloneqq E(G) \cup \{(u_{xy},x),(u_{xy},y) \,:\, xy \in \mathscr Z\}
\cup \{(v,u_{xy}) \,:\, xy \in \mathscr Z \text{ and } v \in \LCA_G(xy)\}.  
\]

Clearly, $H$ is a DAG, since any directed cycle would have to contain a new vertex $u_{xy}$, but the
only outgoing arcs of $u_{xy}$ lead to the leaves $x$ and $y$ and, thus, no such cycle can exist in
$H$. We now argue that the ancestor relation among the vertices in $G$ is preserved in $H$. Since
$G$ is a subgraph of $H$ and both are DAGs, it follows immediately that $a \preceq_G b$ implies
$a\preceq_H b$ for all $a,b \in V(G)$. Conversely, let $a,b\in V(G)$ and suppose that $a\preceq_H
b$. Consider a directed $ba$-path $P$ in $H$. If $P$ contains only vertices of $G$, then $P$ is
already a path in $G$, and hence $a \preceq_G b$. Otherwise, $P$ contains
a vertex $u_{xy}$ for some $xy \in \mathscr Z$. Since $a,b \in V(G)$, the vertex $u_{xy}$ cannot be
the first or last vertex of $P$. Hence, the predecessor of $u_{xy}$ in $P$ must be some $v \in
\LCA_G(xy) \neq\emptyset$ and the successor of $u_{xy}$ in $P$ must either be $x$ or $y$. In
particular, the latter implies that $a\in\{x,y\}$, since $x$ and $y$ are leaves. Moreover, the
subpath of $P$ from $b$ to $v$ is contained in $G$, so we have $v\preceq_G b$ and, since
$a\in\{x,y\}$, clearly $a\preceq_G v$. Thus, $a\preceq_G v\preceq_G b$ holds.
In summary, $a\preceq_G b$ if and only if $ a\preceq_H b$ for all $a,b\in V(G)$.

We now compare the relevant LCAs. For every $ab\in\pairs(\XRF)\setminus\mathscr Z$, no new vertex
$u_{xy}$ is a least common ancestor of $a$ and $b$. Since the ancestor relation among vertices in
$G$ is preserved, it follows that $\LCA_H(ab)=\LCA_G(ab)$ for all
$ab\in\pairs(\XRF)\setminus\mathscr Z$. Now, let $xy\in\mathscr Z$. Clearly, $u_{xy}$ is a common
ancestor of $x$ and $y$ in $H$. To argue that $u_{xy} = \lca_H(xy)$, suppose there exists another
common ancestor $w\neq u_{xy}$ of $x$ and $y$ in $H$. Then $w\in V(G)$, and hence $w$ is a common
ancestor of $x$ and $y$ in $G$. Thus, there is some $v\in\LCA_G(xy)$ with $v\preceq_G w$. By
construction, $(v,u_{xy})\in E(H)$, and therefore $u_{xy}\preceq_H v\preceq_H w$. As this holds for
every common ancestor $w$ of $x$ and $y$ in $H$, it follows that $u_{xy}$ is the unique LCA of $x$
and $y$ in $H$.

In summary, $a\preceq_G b$ if and only if $a\preceq_H b$ for all $a,b\in V(G)$ and $\LCA_G(ab) =
\LCA_H(ab)$ for all $ab \in \pairs(\XRF) \setminus \mathscr Z$. Moreover, since $G$ $\mathscr
W$-$\anchor$-agrees with $(\RR,\FF)$ and $\mathscr W = \atsupp_\FF$, all relevant LCAs for the
anchored triples in $\RR$ and $\FF$ are well-defined in $G$ and, thus, in $H$. In particular, the
latter two arguments imply that $H$ $\mathscr W$-$\anchor$-agrees with $(\RR,\FF)$. This together
with $\lca_{H}(xy) = u_{xy}$ being well-defined for each $xy \in \mathscr Z$ and $\lca_G(xy) =
\lca_H(xy)$ for all $xy \in \mathscr{W}' \setminus \mathscr{Z}$ implies that $H$ $\mathscr
W'$-$\anchor$-agrees with $(\RR,\FF)$.
\end{proof}

Proposition~\ref{prop:strong_ATC_F_equivalent_LCA_ATC_F} applied to 
$\mathscr W=\atsupp_\FF$ and $\mathscr W'=\pairs(\XRF)$ yields the following result.

\begin{corollary}\label{cor:strong-ATC-F-equivalent-LCA-ATC-F}
The problems \PROBLEM{strong-ATC-F} and \PROBLEM{LCA-ATC-F} are equivalent.
\end{corollary}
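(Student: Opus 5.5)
The corollary is Proposition~\ref{prop:strong_ATC_F_equivalent_LCA_ATC_F} specialized to $\mathscr W=\atsupp_\FF$ and $\mathscr W'=\pairs(\XRF)$. The one non-trivial point is passing from DAGs to phylogenetic networks, since both problems are stated for either.

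First I note what the two problems are in the $\mathscr W$-language. \PROBLEM{strong-ATC-F} is $\mathscr W$-\PROBLEM{ATC-F} with $\mathscr W=\atsupp_\FF$: the requirement that $\lca_G(xy)$ and $\lca_G(xz)$ be well-defined for all $\underline{x}y|z\in\FF$ says exactly that $\lca_G(ab)$ is well-defined for all $ab\in\atsupp_\FF$. \PROBLEM{LCA-ATC-F} is $\mathscr W'$-\PROBLEM{ATC-F} with $\mathscr W'=\pairs(\XRF)$, because the 2-lca-property on $\XRF$ means well-definedness of $\lca_G(ab)$ for all $ab\in\pairs(\XRF)$. Since $\atsupp_\FF\subseteq\pairs(\XRF)$, the proposition applies.

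In the DAG version, Proposition~\ref{prop:strong_ATC_F_equivalent_LCA_ATC_F} directly gives that a DAG $\mathscr W$-$\anchor$-agreeing with $(\RR,\FF)$ exists if and only if a DAG $\mathscr W'$-$\anchor$-agreeing with $(\RR,\FF)$ exists. Hence $(\RR,\FF)$ is a yes-instance of one problem exactly when it is a yes-instance of the other.

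For the phylogenetic-network version I invoke Theorem~\ref{thm:W-anchor-agree}, separately for $\mathscr W$ and for $\mathscr W'$. It says that whenever a DAG $\mathscr W$-$\anchor$-agrees with $(\RR,\FF)$, the network obtained from the $\FR$-extension of $\GG_{R,F}$ via Lemma~\ref{lem:DAG2Network} is phylogenetic and also $\mathscr W$-$\anchor$-agrees with $(\RR,\FF)$. So, for each choice of $\mathscr W$, existence of such a DAG is equivalent to existence of such a phylogenetic network, and the DAG equivalence transfers to networks.

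The main obstacle is only bookkeeping: checking that the two identifications of the problems with $\mathscr W$- and $\mathscr W'$-\PROBLEM{ATC-F} are exact. One must confirm that the leaf set is $\XRF$ in both problems, and that ``for all $\underline{x}y|z\in\FF$'' matches $\atsupp_\FF$ precisely.
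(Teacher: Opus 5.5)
Your proposal is correct and follows the paper, which obtains the corollary in one line by applying Proposition~\ref{prop:strong_ATC_F_equivalent_LCA_ATC_F} with $\mathscr W=\atsupp_\FF$ and $\mathscr W'=\pairs(\XRF)$. Your additional appeal to Theorem~\ref{thm:W-anchor-agree}, applied to both $\mathscr W$ and $\mathscr W'$ to transfer the DAG-level equivalence to phylogenetic networks, makes explicit a step the paper leaves implicit.
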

Note that every yes-instance of \PROBLEM{strong-ATC-F} is clearly a yes-instance of \PROBLEM{ATC-F}. The
converse does not hold, as shown in Figure~\ref{fig:ATC-F_not_strong-ATC-F}.

\begin{figure}
    \centering
    \includegraphics[width=0.8\textwidth]{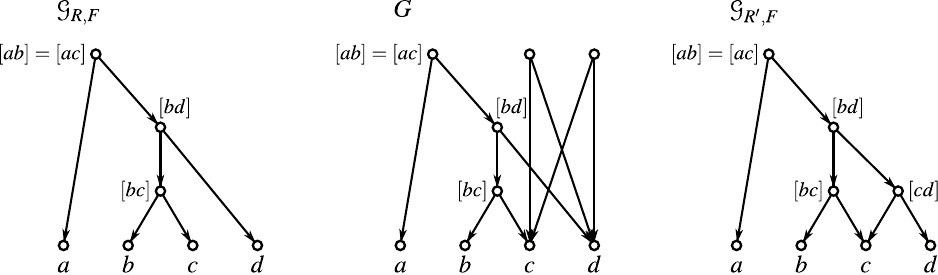}
    \caption{
    Consider the pair $(\RR,\FF)$ of anchored triple sets with
    $\RR = \{\underline{b}c|d, \underline{b}d|a, \underline{c}b|a\}$ and $\FF =
    \{\underline{c}d|a\}$ and let $\mathscr{W} = \atsupp_\FF=\{cd, ac\}$. Define $R =
    \{(bc,bd),(bd,ab),(bc,ac)\}$, $R' = R \cup \{(cd,cd),(ac,ac)\}$, and $F = \{(cd,ac)\}$ as
    relations on $\pairs(\XRF)$. Shown are the canonical DAG $\GG_{R,F}$, its $\FR$-extension $G$,
    and the canonical DAG $\GG_{R',F}$, where the latter coincides with its $\FR$-extension. Here,
    $G$ \anchor-agrees with $(\RR,\FF)$. However, $\GG_{R',F}$ does not $\mathscr{W}$-\anchor-agree
    with $(\RR,\FF)$, since $\underline{c}d|a$ is \anchor-displayed and, thus, by
    Theorem~\ref{thm:W-anchor-agree}, $(\RR,\FF)$ is not $\mathscr{W}$-\anchor-agreeable. 
    }
    \label{fig:ATC-F_not_strong-ATC-F}
\end{figure}

\subsection{Rooted Triples}
\label{sec:strong-TC-F_and_LCA-TC-F}

As in the previous subsection, we start with the following generalized consistency problem.

\begin{problem}[\PROBLEM{$\mathscr W$-Triples Consistency with Forbidden Triples}
($\mathscr W$-\PROBLEM{TC-F})]\ \\
  \begin{tabular}{ll}
    \emph{Input:}    & Two sets $\RR$ and $\FF$ of triples and a set
                       $\mathscr W\subseteq \pairs(\XRF)$. \\
    \emph{Question:} & Is there a phylogenetic network (resp., DAG) $G$ on
                       $\XRF$ that agrees with  $(\RR,\FF)$ and for which
                       \\ & $\lca_G(ab)$ is well-defined for all $ab\in\mathscr W$?
  \end{tabular}
\end{problem}

Note that, just as \PROBLEM{ATC-F} is a special case of
$\mathscr W$-\PROBLEM{ATC-F}, the problem \PROBLEM{TC-F} is a special case of
$\mathscr W$-\PROBLEM{TC-F}, obtained by taking $\mathscr W=\emptyset$.

We first consider an auxiliary version of the problem $\mathscr W$-\PROBLEM{ATC-F} 
in which the forbidden triples in $\FF$ play no role. This
will provide a realizability-based characterization for DAGs that display all required triples and
satisfy the prescribed well-definedness of relevant LCAs. Recall that $R_\RR \coloneqq \{(xy,xz),(xy,yz)
\,:\, xy|z \in \RR\}$.

\begin{lemma}\label{lem:display-with-W-lcas}
Let $(\RR,\FF)$ be a pair of triple sets, let $\mathscr W\subseteq \pairs(\XRF)$, and let $G$ be
a DAG on $\XRF$. Moreover, put $R\coloneqq R_\RR \cup \{(ab,ab) \,:\, ab\in\mathscr W\}$. Then the
following statements are equivalent:
\begin{enumerate}
    \item $G$ displays all triples in $\RR$ and $\lca_G(ab)$ is well-defined for all $ab\in\mathscr W$.
    \item $G$ realizes $R$ and $(xz,xy),(yz,xy)\notin \tc(R)$
for all $xy|z\in\RR$. 
\end{enumerate}
\end{lemma}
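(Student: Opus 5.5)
The plan is to mirror the proof of Theorem~\ref{thm:W-anchor-agree}, with the anchored triple set replaced by $\RR'\coloneqq\{\underline{x}y|z,\underline{y}x|z \,:\, xy|z\in\RR\}$. By Lemma~\ref{lem:display-vs-anchordisplay}, $G$ displays $xy|z$ if and only if it \anchor-displays both $\underline{x}y|z$ and $\underline{y}x|z$. Moreover, $\{(xy,xz)\,:\,\underline{x}y|z\in\RR'\}=R_\RR$, as already noted in the proof of Theorem~\ref{thm:TC-char}. Hence $R$ is exactly the relation of Theorem~\ref{thm:W-anchor-agree} built from $(\RR',\emptyset)$ and $\mathscr W$. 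One subtlety is that $\mathscr W\subseteq\pairs(\XRF)$ while $\RR'$ lives on $\XR$. The proof of Theorem~\ref{thm:W-anchor-agree} never uses that the leaf set equals $X_{\RR',\emptyset}$, so I would either note this or simply redo the argument directly, which is short.

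\emph{(1)$\Rightarrow$(2).} Every element of $\support_R^+$ is either some $xx$, an element of $\mathscr W$, or one of $xy,xz,yz$ for some $xy|z\in\RR$. All of these have well-defined LCAs in $G$.

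I would first show $(xz,xy)\notin\tc(R)$ for $xy|z\in\RR$, and symmetrically $(yz,xy)\notin\tc(R)$. Take a chain $xz=p_0\,R\,p_1\cdots R\,p_k=xy$. Each step is either a loop $(ab,ab)$, giving equal LCAs, or an element of $R_\RR$, giving $\lca_G(p_{j-1})\prec_G\lca_G(p_j)$ because $G$ displays the corresponding triple. Hence $\lca_G(xz)\preceq_G\lca_G(xy)$, which contradicts $\lca_G(xy)\prec_G\lca_G(xz)$.

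Next come \axiom{I1} and \axiom{I2}. For $(p,q)\in R$ with $p\neq q$, the pair lies in $R_\RR$, so $(q,p)\notin\tc(R)$ by the above, \axiom{I2} is vacuous, and \axiom{I1} holds by display. For loops $(ab,ab)$, only \axiom{I2} can apply, and it holds trivially since $\lca_G(ab)$ is well-defined.

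\emph{(2)$\Rightarrow$(1).} For $xy|z\in\RR$ we have $(xy,xz),(xy,yz)\in R$ and $(xz,xy),(yz,xy)\notin\tc(R)$. So \axiom{I1} yields $\lca_G(xy)\prec_G\lca_G(xz)$ and $\lca_G(xy)\prec_G\lca_G(yz)$, with all three LCAs well-defined since $xy,xz,yz\in\support_R^+$. Lemma~\ref{lem:display-vs-anchordisplay} then gives that $G$ displays $xy|z$; alternatively, Lemma~\ref{lem:xyz-lca(yz)} applied twice gives $\lca_G(xz)=\lca_G(yz)$. Finally, every $ab\in\mathscr W$ lies in $\support_R^+$ because $(ab,ab)\in R$, and realization of $R$ forces $\lca_G(ab)$ to be well-defined.

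The only step needing care is the chain argument excluding $(xz,xy)\in\tc(R)$. The added reflexive pairs could a priori create new chains, but they only contribute equalities, so the argument goes through. Everything else is routine bookkeeping of the definitions.
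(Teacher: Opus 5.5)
Your proposal is correct and follows essentially the paper's proof. The only difference is the order of the (1)$\Rightarrow$(2) direction: you first use the chain argument (loops give equalities, pairs of $R_\RR$ give strict inequalities) to exclude $(xz,xy),(yz,xy)\in\tc(R)$ and then obtain \axiom{I1} and \axiom{I2}, whereas the paper uses the same chain argument to verify \axiom{I2} for all of $R$ and then derives the exclusion from \axiom{I2} together with the strict inequality $\lca_G(xy)\prec_G\lca_G(xz)$.
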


\begin{proof}
Let $\RR$, $\FF$, $\mathscr W$, $G$, and $R$ be as stated. Assume first that $G$ displays all triples
in $\RR$ and that $\lca_G(ab)$ is well-defined for all $ab\in\mathscr W$. We show first that $G$
realizes $R$. Let $(xy,xz)\in R$. If $(xy,xz)\in R_\RR$, then there is a triple $xy|z\in \RR$ and,
since $G$ displays all triples in $\RR$, we obtain $\lca_G(xy)\prec_G\lca_G(xz)$. Otherwise, i.e. if
$(xy,xz)\in R\setminus R_\RR$, we have $xy=xz$ for some $xy\in\mathscr W$ and, thus,
$\lca_G(xy)=\lca_G(xz)$ is well-defined. In summary, for all $(p,q)\in R$ it holds that
$\lca_G(p)\preceq_G \lca_G(q)$.

We now verify \axiom{I1} and \axiom{I2}. Let $(p,q)\in R$ and thus, $\lca_G(p)\preceq_G \lca_G(q)$.
If $(q,p)\in\tc(R)$, then there is a chain $q=p_0 \;R\; p_1 \;R\; \dots \;R\; p_k=p$. Since for all
$(p_{i},p_{i+1})\in R$ it holds $\lca_G(p_i)\preceq_G \lca_G(p_{i+1})$, transitivity of $\preceq_G$
implies that $\lca_G(q)\preceq_G \lca_G(p)$ and, consequently, $\lca_G(p)=\lca_G(q)$, and so
\axiom{I2} holds. Assume that $(q,p)\notin\tc(R)$. Hence, $p\neq q$. This and $(p,q)\in R$ implies
that $(p,q)\in R_\RR$. Hence, $(p,q)$ comes from some displayed triple $xy|z\in \RR$, i.e.,
$(p,q)=(xy,xz)$ or $(p,q)=(xy,yz)$. Since $G$ displays $xy|z$, we obtain $\lca_G(p)\prec_G
\lca_G(q)$ and \axiom{I1} holds. In summary, $G$ realizes $R$.

It remains to show that $(xz,xy),(yz,xy)\notin\tc(R)$ for every $xy|z\in\RR$. Suppose, for
contradiction, that $(xz,xy)\in\tc(R)$ for some $xy|z\in\RR$. By definition, we have $(xy,xz)\in
R_\RR\subseteq R$. As $G$ realizes $R$, \axiom{I2} would imply $ \lca_G(xy)=\lca_G(xz)$,
contradicting the fact that $G$ displays $xy|z$. Hence $(xz,xy)\notin\tc(R)$. The argument for
$(yz,xy)\notin\tc(R)$ is analogous. We have therefore shown that Statement (1) implies Statement
(2).

Conversely, assume that $G$ realizes $R$ and that $(xz,xy),(yz,xy)\notin\tc(R)$ for every
$xy|z\in\RR$. Since $(ab,ab)\in R$ for every $ab\in\mathscr W$, realizability of $R$ implies that
$\lca_G(ab)$ is well-defined for all $ab\in\mathscr W$.

Now let $xy|z\in\RR$. By definition, $(xy,xz),(xy,yz)\in R_\RR\subseteq R$. Since
$(xz,xy)\notin\tc(R)$ and $(yz,xy)\notin\tc(R)$, \axiom{I1} gives $ \lca_G(xy)\prec_G\lca_G(xz)$ and
$\lca_G(xy)\prec_G\lca_G(yz)$. In particular, the LCAs $\lca_G(xy)$, $\lca_G(xz)$, and $\lca_G(yz)$
are well-defined. Applying Lemma~\ref{lem:xyz-lca(yz)} to $\lca_G(xy)\prec_G\lca_G(xz)$ yields
$\lca_G(yz)\preceq_G\lca_G(xz)$. Similarly, applying Lemma~\ref{lem:xyz-lca(yz)} to
$\lca_G(xy)\prec_G\lca_G(yz)$ yields $ \lca_G(xz)\preceq_G\lca_G(yz)$. Thus, $
\lca_G(xz)=\lca_G(yz)$ holds. Together with $\lca_G(xy)\prec_G\lca_G(xz)$, this shows that $G$
displays $xy|z$. Since this holds for every $xy|z\in\RR$, Statement~(2) implies Statement~(1) and
the proof is complete.
\end{proof}

We now return to the setting with forbidden triples and provide the 
following strengthened notion of agreement.

\begin{definition}
Let $(\RR,\FF)$ be a pair of triple sets and let $\mathscr W\subseteq \pairs(\XRF)$. A DAG $G$
\emph{$\mathscr W$-agrees} with $(\RR,\FF)$ if $G$ agrees with $(\RR,\FF)$ 
 and $\lca_G(ab)$ is well-defined for every $ab\in\mathscr W$.
\end{definition}

The next result is the analogue of Theorem~\ref{thm:char-RF-construction} for $\mathscr
W$-\PROBLEM{TC-F}. On a technical note, the auxiliary relation $F$ used in this result is introduced
only to specify the $\FR$-extension of the canonical DAG $\GG_{Q_K}$ but not to construct
$\GG_{Q_K}$.

\begin{theorem}\label{thm:char-W-TCF}
Let $(\RR,\FF)$ be a pair of triple sets and let $\mathscr W \subseteq \pairs(\XRF)$. Let $R\coloneqq R_\RR \cup \{(ab,ab) \,:\, ab\in\mathscr W\}$ 
and $F = \{(xy,xy),(xz,xz),(yz,yz) \,:\, xy|z \in \FF\}$
be relations on $\pairs(\XRF)$.
Let $Q_K$ be the final relation constructed during some run of $\textsc{Sat}\bigl(R,\FF\bigr)$.
Then the following statements are equivalent:
\begin{enumerate}
    \item There exists a DAG on $\XRF$ that $\mathscr W$-agrees with $(\RR,\FF)$. 
    \item The $\FR$-extension of the canonical DAG $\GG_{Q_K}$ is phylogenetic and $\mathscr
          W$-agrees with $(\RR,\FF)$.
    \item There exists a phylogenetic network on $\XRF$ that $\mathscr W$-agrees with $(\RR,\FF)$. 
    \item The network obtained from the $\FR$-extension of $\GG_{Q_K}$ by
          Lemma~\ref{lem:DAG2Network} is phylogenetic and $\mathscr W$-agrees with $(\RR,\FF)$.
\end{enumerate}
In particular, $\mathscr W$-\PROBLEM{TC-F} can be solved in polynomial time in $|\XRF|$.
\end{theorem}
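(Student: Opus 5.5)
The plan is to rerun the argument of Theorems~\ref{thm:char-RF-display} and~\ref{thm:char-RF-construction}, with $R_\RR$ replaced by $R=R_\RR\cup\{(ab,ab):ab\in\mathscr W\}$. The $\FFRR$-extension is replaced by the $\FR$-extension for $F=\{(xy,xy),(xz,xz),(yz,yz):xy|z\in\FF\}$. Since $\support_F=\tsupp_\FF$ and $\support^+_R\supseteq \tsupp_\RR\cup\mathscr W$, this extension applies $ab$-extensions exactly for $ab\in\tsupp_\FF\setminus\support^+_R$. In particular it never touches a pair of $\tsupp_\RR$ or of $\mathscr W$.

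\emph{Step 1 (the saturation lemmas for $R$).} First I check that Lemma~\ref{lem:Qi_properties_realizable} holds verbatim for $\textsc{Sat}(R,\FF)$. Its proof only needs $R$ to satisfy \axiom{X1}. This holds because every added pair $(ab,ab)$ either has $ab\neq xx$ on both sides or is $(xx,xx)$, so there is no pair $(ab,xx)$ with $ab\neq xx$. Hence all $Q_i$ are realizable, transitive, $\cl$-closed and nested. Next I prove the analogue of Lemma~\ref{lem:Qi-subset-relG}: if $G$ $\mathscr W$-agrees with $(\RR,\FF)$, then $Q_i\subseteq\rel_G$ for all $i$. For the base case, Lemma~\ref{lem:display-with-W-lcas} shows that $G$ realizes $R$, so $R\subseteq\rel_G$ by \cite[Lem~7]{LAMSH:25}, and monotonicity together with $\cl(\rel_G)=\rel_G$ gives $Q_0\subseteq\rel_G$. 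The inductive step is identical to the earlier one, since a forbidden triple that is not displayed but has $\Rext\subseteq\rel_G$ forces all three LCAs to coincide.

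\emph{Step 2 ((1)$\Rightarrow$(2)).} Let $G$ $\mathscr W$-agree with $(\RR,\FF)$. From $Q_K\subseteq\rel_G$, $(xy,xz)\in Q_K$, and the fact that $G$ displays $xy|z$, I conclude $(xz,xy),(yz,xy)\notin Q_K\supseteq\tc(R)$ for all $xy|z\in\RR$. Since $\GG_{Q_K}$ realizes $Q_K=\tc(Q_K)$ (Theorem~\ref{thm:char}), it follows that $\GG_{Q_K}$ realizes $R$:
\begin{itemize}
\item \axiom{I1} for $(xy,xz)$ holds because the reverse pair is not in $Q_K$;
\item \axiom{I2} holds for the reflexive pairs;
\item the LCAs of $\support^+_R\subseteq\support^+_{Q_K}$ are well-defined.
\end{itemize}
Lemma~\ref{lem:display-with-W-lcas} then shows that $\GG_{Q_K}$ displays $\RR$ and has $\lca(ab)$ well-defined for every $ab\in\mathscr W$.

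For a forbidden $xy|z$ with $xy,xz,yz\in\support^+_R\subseteq\support_{Q_K}$, suppose $\GG_{Q_K}$ displays it. By \cite[Thm~47]{LAMSH:25}, $Q_K=\rel_{\GG_{Q_K}}\cap(\support^+_{Q_K})^2$, so $Q_K\restr=\Rext_{\{xy|z\}}$ and $xy|z\in\FF_K$. This contradicts the finality of $Q_K$. Every other forbidden triple has a pair in $\tsupp_\FF\setminus\support^+_R$, which the $\FR$-extension makes non-well-defined. By Observation~\ref{obs:xy-extension1}, the extension preserves all LCAs of pairs in $\support^+_R$ and all ancestor relations, so the displayed required triples and the $\mathscr W$-LCAs survive. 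Phylogeny follows from Proposition~\ref{prop:properties_of_canonical_DAG} and the fact that the added vertices have out-degree two.

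\emph{Step 3 (remaining implications and runtime).} For (2)$\Rightarrow$(4), I apply Lemma~\ref{lem:DAG2Network} exactly as in the proof of Theorem~\ref{thm:char-RF-construction}. Every pair in $\tsupp_\RR\cup\tsupp_\FF\cup\mathscr W$ has a nonempty LCA set, so all relevant LCA sets and orders are unchanged. The implications (4)$\Rightarrow$(3)$\Rightarrow$(1) are trivial. For the runtime, the algorithm runs $\textsc{Sat}(R,\FF)$, which takes at most $|\FF|$ polynomial closure computations (Theorem~\ref{thm:cl-polytime}). It then builds the $\FR$-extension of $\GG_{Q_K}$ and checks $\mathscr W$-agreement by polynomial LCA and reachability computations, as in the proof of Theorem~\ref{thm:TCF-problem}.

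I expect the main obstacle to be the contradiction step in Step~2. It needs every pair $xy,xz,yz$ of a relevant forbidden triple to lie in $\support_{Q_K}$ (not merely in $\support^+$), so that \cite[Thm~47]{LAMSH:25} can be applied. It also needs the split of $\FF$ into triples handled by saturation versus triples handled by extension to match $\support^+_R$ exactly, now that $\mathscr W$ enlarges the support.
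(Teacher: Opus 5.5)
Your proposal is correct and follows essentially the same route as the paper: $Q_i\subseteq\rel_G$ by induction, exclusion of the reverse pairs from $Q_K$, \axiom{I1} in $\GG_{Q_K}$ for the required triples, \cite[Thm~47]{LAMSH:25} together with finality of $Q_K$ for the forbidden triples whose pairs all lie in $\support^+_R$, the $\FR$-extension for the remaining forbidden triples, and Lemma~\ref{lem:DAG2Network} for the network. The only real deviation is that you obtain realizability of $Q_K$ directly from \axiom{X1} of $R$ by rerunning Lemma~\ref{lem:Qi_properties_realizable}, whereas the paper derives \axiom{X1} for $Q_K$ from $Q_K\subseteq\rel_G$ and then invokes \cite[Cor~32]{LAMSH:25}; also, the obstacle you anticipate does not arise, because $xy,xz,yz$ consist of distinct leaves and so lie in $\support_R\subseteq\support_{Q_K}$ whenever they lie in $\support^+_R$.
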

\begin{proof}
Let $\RR$, $\FF$, $\mathscr{W}$, $R$, $F$, and $Q_K$ be as stated. 
By the same arguments as used in in the proof of Lemma~\ref{lem:Qi_properties_realizable}, 
$\textsc{Sat}\bigl(R,\FF\bigr)$ terminates and thus, the final relation $Q_K$ is well-defined. 
Clearly, Statement~(3) implies Statement~(1),
since every phylogenetic network is, in particular, a DAG. We now argue that $(1) \Rightarrow (2)
\Rightarrow (4) \Rightarrow (3)$. We note in passing that the proof follows, in large parts, the
same ideas as the proof of Theorem~\ref{thm:char-RF-display}. The main difference is that we now
also have to account for the additional constraints $(ab,ab)\in R$ for all $ab\in\mathscr W$ where
$R$ is the relation used as input for the saturation procedure that constructs $Q_K$.

Assume first that Statement~(1) holds, and let $G$ be a DAG on $\XRF$ that $\mathscr W$-agrees with
$(\RR,\FF)$. By Lemma~\ref{lem:display-with-W-lcas}, $G$ realizes $R$ and, for every $xy|z\in\RR$,
we have $(xz,xy),(yz,xy)\notin \tc(R)$. We first show that every relation $Q_0,\dots,Q_K$
constructed during the saturation procedure is contained in $\rel_G$. For $i=0$, we have
$R\subseteq\rel_G$, since $G$ realizes $R$ and by \cite[Lem~7]{LAMSH:25}. Hence,
$Q_0=\cl(R)\subseteq \cl(\rel_G)=\rel_G$ holds by monotonicity of $\cl$ and
\cite[Prop~21]{LAMSH:25}. Now assume that $Q_i\subseteq\rel_G$ and that the saturation procedure
constructs $Q_{i+1} = \cl\bigl(Q_i\cup\{(xz,xy),(yz,xy)\}\bigr)$ because we have $
Q_i\restr=\Rext_{\{xy|z\}}$ for some forbidden triple $xy|z\in\FF_{i}$. Since $Q_i\subseteq\rel_G$,
this implies $\Rext_{\{xy|z\}} = \{(xy,xz),(xy,yz),(xz,yz),(yz,xz)\} \subseteq \rel_G$. Hence, the
LCAs $\lca_G(xy)$, $\lca_G(xz)$, and $\lca_G(yz)$ are well-defined and satisfy $ \lca_G(xy)\preceq_G
\lca_G(xz)=\lca_G(yz)$. Since $G$ does not display the forbidden triple $xy|z$, it cannot hold that
$ \lca_G(xy)\prec_G \lca_G(xz)=\lca_G(yz)$. Therefore, we have $\lca_G(xy)=\lca_G(xz)=\lca_G(yz)$.
Consequently, $ (xz,xy),(yz,xy)\in\rel_G$ and together with $Q_i\subseteq\rel_G$, we obtain $
Q_{i+1} = \cl\bigl(Q_i\cup\{(xz,xy),(yz,xy)\}\bigr) \subseteq \cl(\rel_G) = \rel_G$. By induction,
$Q_i\subseteq\rel_G$ holds for all $i$, $0\leq i \leq K$.

We now show that $Q_K$ is realizable. Note that by idempotency of $\cl$, every $Q_i$ is closed, that
is, $Q_i=\cl(Q_i)$. Additionally, $Q_i = \tc(Q_i) = \cl(Q_i)$ holds for all $0 \leq i \leq K$ by
extensivity of $\tc$ and transitivity of $\cl$. To argue that Condition~\axiom{X1} holds for $Q_K$,
suppose, for contradiction, that \axiom{X1} fails for $Q_K$. Then there are $a,b,x\in \XRF$ with
$ab\neq xx$ such that $(ab,xx)\in Q_K$. Since $Q_K\subseteq\rel_G$, this implies $(ab,xx)\in\rel_G$.
Thus, $\lca_G(ab)$ and $\lca_G(xx)$ are well-defined and $\lca_G(ab) \preceq_G \lca_G(xx)=x$. Since
$x$ is a leaf, $\lca_G(ab)=x$ and $a=b=x$ and, thus, $ab=xx$ holds; contradicting our assumption
$ab\neq xx$. Therefore, \axiom{X1} holds for $Q_K$. Since $Q_K=\cl(Q_K)$ and $Q_K$ satisfies
\axiom{X1}, it follows from \cite[Cor~32]{LAMSH:25} that $Q_K$ is realizable. Consequently, by
Theorem~\ref{thm:char}, the canonical DAG $\GG_{Q_K}$ realizes $Q_K$.

We next show that $\GG_{Q_K}$ displays all triples in $\RR$ and that all LCAs required by
$\mathscr{W}$ are well-defined. Since $(ab,ab)\in R\subseteq Q_K$ for all $ab\in\mathscr W$ and
$Q_K$ is realized by $\GG_{Q_K}$, each $\lca_{\GG_{Q_K}}(ab)$ for $ab\in\mathscr{W}$ is
well-defined. It remains to prove that all triples in $\RR$ are displayed by $\GG_{Q_K}$. Let
$xy|z\in\RR$. We first show that $(xz,xy),(yz,xy)\notin\tc(Q_K)$. Suppose, for contradiction, that
$(xz,xy)\in\tc(Q_K)$. Since $Q_K=\tc(Q_K)$, we have $ (xz,xy)\in Q_K$. Since $Q_K\subseteq\rel_G$,
we obtain $(xz,xy)\in\rel_G$ together with $(xy,xz)\in R\subseteq Q_K\subseteq\rel_G$. Hence,
$(xy,xz), (xz,xy) \in \rel_G$ and, by definition of $\rel_G$, $\lca_G(xy)\preceq_G\lca_G(xz)$ and $
\lca_G(xz)\preceq_G\lca_G(xy)$ must hold, which implies that $\lca_G(xy)=\lca_G(xz)$. This
contradicts the fact that $G$ displays $xy|z$, which requires $\lca_G(xy)\prec_G\lca_G(xz)$.
Therefore, $(xz,xy)\notin\tc(Q_K)$ must hold. The proof that $(yz,xy)\notin\tc(Q_K)$ is analogous.
Since $(xy,xz),(xy,yz)\in R_\RR\subseteq R\subseteq Q_K$ and $(xz,xy),(yz,xy)\notin\tc(Q_K)$ and
since $Q_K$ is realized by $\GG_{Q_K}$, \axiom{I1} implies that
$\lca_{\GG_{Q_K}}(xy)\prec_{\GG_{Q_K}}\lca_{\GG_{Q_K}}(xz)$ and
$\lca_{\GG_{Q_K}}(xy)\prec_{\GG_{Q_K}}\lca_{\GG_{Q_K}}(yz)$. By Lemma~\ref{lem:xyz-lca(yz)}, applied
twice, we then get $\lca_{\GG_{Q_K}}(xz)=\lca_{\GG_{Q_K}}(yz)$ and, hence, $\GG_{Q_K}$ displays
$xy|z$.

We continue by showing that $\GG_{Q_K}$ displays none of the triples in $\FF^\circ \coloneqq \{xy|z
\in \FF \,:\, xy,xz,yz \in \support^+_R\}$. Let $xy|z\in\FF^\circ$ and assume, for contradiction,
that $\GG_{Q_K}$ displays $xy|z$. By Lemma~\ref{lem:displays-iff-restr-equal}, we have $
\rel_{\GG_{Q_K}}\relrestr=\Rext_{\{xy|z\}}$. Since $Q_K$ is closed and realized by its canonical DAG
$\GG_{Q_K}$, \cite[Thm~47]{LAMSH:25} implies $Q_K = \rel_{\GG_{Q_K}} \cap
\bigl(\support_{Q_K}^+\times\support_{Q_K}^+\bigr)$. Since $R\subseteq Q_K$, where both $R$ and
$Q_K$ are relations on $\pairs(\XRF)$, it is easily seen that $\support^+_R
\subseteq\support_{Q_K}^+$. Hence, in particular, $xy,xz,yz \in \support^+_R
\subseteq\support_{Q_K}^+$. Consequently, the previous arguments taken together imply
$Q_K\restr=\rel_{\GG_{Q_K}}\relrestr=\Rext_{\{xy|z\}}. $ But then $xy|z$ would belong to the set of
forbidden triples selected by the final saturation step. This contradicts the fact that $Q_K$ is the
final relation. Thus $\GG_{Q_K}$ does not display $xy|z$. Since $xy|z\in\FF^\circ$ was arbitrary,
$\GG_{Q_K}$ displays no triple in $\FF^\circ$. In summary, $\GG_{Q_K}$ $\mathscr W$-agrees with
$(\RR,\FF^\circ)$. 

It remains to show that the $\FR$-extension $\tilde G$ of $\GG_{Q_K}$ $\mathscr W$-agrees with
$(\RR,\FF)$. Recall that to obtain $\tilde G$, we apply an $ab$-extension to $\GG_{Q_K}$ for every
$ab \in \support_F \setminus \support^+_R$. Thus, for every triple $xy|z \in \RR \cup \FF^\circ$, no
$xy$-, $xz$-, or $yz$-extension was applied to obtain $\tilde G$. This together with
Observation~\ref{obs:xy-extension1} implies that $\tilde G$ $\mathscr W$-agrees with
$(\RR,\FF^\circ)$. Moreover, for every triple $xy|z \in \FF \setminus \FF^\circ$, at least one
element $xy,xz,yz$ is not contained in $\support^+_R$. Hence, an $xy$-, $xz$-, or $yz$-extension was
applied. By Observation~\ref{obs:xy-extension1}, the corresponding LCA is not well-defined and,
thus, $xy|z$ is not displayed by $\tilde G$. Consequently, $\tilde G$ $\mathscr W$-agrees with
$(\RR,\FF)$. Lastly, by Proposition~\ref{prop:properties_of_canonical_DAG} and the fact that $Q_K$
is realizable, $\GG_{Q_K}$ is phylogenetic and by construction, $\tilde G$ is phylogenetic. Thus,
Statement~(2) holds.

We continue by showing that Statement~(2) implies Statement~(4). Assume that the $\FR$-extension $G$
of $\GG_{Q_K}$ is phylogenetic and $\mathscr W$-agrees with $(\RR,\FF)$. By
Observation~\ref{obs:xy-extension1}, $G$ is a DAG on $\XRF$, as $Q_K$ is a relation on
$\pairs(\XRF)$. Let $N$ be the network obtained from $G$ by Lemma~\ref{lem:DAG2Network}. By this
lemma, $N$ is a phylogenetic network on $\XRF$, the ancestor relation among the vertices of $G$ is
preserved in $N$, and every nonempty LCA set in $G$ is preserved in $N$. Since $G$ displays all
triples in $\RR$, the relevant LCAs are well-defined in $G$ and hence remain well-defined and
unchanged in $N$. Therefore, all triples in $\RR$ are displayed by $N$. Similarly, $\lca_N(ab)$ is
well-defined for every $ab\in\mathscr W$. 

It remains to show that no triple in $\FF$ is displayed by $N$. Reusing the notation $\FF^\circ$ of
a previous paragraph, consider first $xy|z\in\FF^\circ$. Then $xy,xz,yz\in\support_R^+$. Then, $xy
\in \support^+_{R_\RR}$ or $xy \in \mathscr{W}$. Since $G$ $\mathscr W$-agrees with $(\RR,\FF)$, the LCA
$\lca_G(xy)$ is well-defined in both cases. By a similar argument, $\lca_G(xz)$ and $\lca_G(yz)$ are
also well-defined. Thus, since $G$ does not display $xy|z$, the display condition must fail in $G$.
In $N$, all relevant LCAs are unchanged and the ancestor relation among vertices of $G$ is
preserved, so the display condition still fails in $N$. Thus $N$ does not display any triple in
$\FF^\circ$. Now let $xy|z\in\FF\setminus\FF^\circ$. Then at least one pair $ab\in{xy,xz,yz}$
belongs to $\support_F\setminus\support_R^+$, and hence an $ab$-extension was applied when
constructing $G$. By Observation~\ref{obs:xy-extension1}, this gives $|\LCA_G(ab)|\geq 2$. Since
$\LCA_G(ab)\neq \emptyset$, Lemma~\ref{lem:DAG2Network} yields $\LCA_N(ab)=\LCA_G(ab)$. Thus
$\lca_N(ab)$ is still not well-defined, and so $N$ does not display $xy|z$. Hence, $N$ $\mathscr
W$-agrees with $(\RR,\FF)$, and Statement~(4) follows.

Finally, Statement~(4) immediately implies Statement~(3), since the network described in
Statement~(4) is a phylogenetic network on $\XRF$ that $\mathscr W$-agrees with $(\RR,\FF)$.

In summary, Statements~(1), (2), (3), and (4) are equivalent. Moreover, as in the proof of
Theorem~\ref{thm:TCF-problem}, the final relation $Q_K$ and the canonical DAG $\GG_{Q_K}$ can be
computed in polynomial time in $|\XRF|$. The $\FR$-extension of $\GG_{Q_K}$ can also be constructed
in polynomial time, and checking whether it is phylogenetic and $\mathscr W$-agrees with
$(\RR,\FF)$ is polynomial as well. Hence $\mathscr W$-\PROBLEM{TC-F} can be solved in polynomial
time in $|\XRF|$.
\end{proof}

Moreover, for $\mathscr{W} = \emptyset$, the $\FR$-extension of a DAG coincides with the $\FFRR$-extension. 
Thus, Theorem~\ref{thm:char-W-TCF} yields the results for \PROBLEM{TC-F} as the special case obtained by taking
$\mathscr W=\emptyset$ (cf.\ Theorem~\ref{thm:char-RF-construction}).
Now consider the following two problems. 

\begin{problem}[\PROBLEM{Strong Triples Consistency with Forbidden Triples (strong-TC-F)}]\ \\ \label{prob:TCF}
  \begin{tabular}{ll}
    \emph{Input:}    & Two sets $\RR$ and $\FF$ of triples. \\
    \emph{Question:} & Is there a phylogenetic network (resp., DAG) $G$ on $\XRF$ that agrees with $(\RR,\FF)$ 
    \\ & and for which the LCAs $\lca_G(xy)$, $\lca_G(xz)$, and $\lca_G(yz)$ are well-defined for all $xy|z\in\FF$?\\
  \end{tabular}
\end{problem}

\begin{problem}[\PROBLEM{LCA Triples Consistency with Forbidden Triples (LCA-TC-F)}]\ \\
  \begin{tabular}{ll}
    \emph{Input:}    & Two sets $\RR$ and $\FF$ of triples. \\
   \emph{Question:}  & Is there a phylogenetic network (resp., DAG) on $\XRF$ with the 2-lca-property that agrees with $(\RR,\FF)$?
  \end{tabular}
\end{problem}

These two variants are just special cases of \PROBLEM{$\mathscr W$-TC-F}. Indeed,
\PROBLEM{strong-TC-F} is obtained by choosing $\mathscr W = \tsupp_\FF$ whereas \PROBLEM{LCA-TC-F}
is obtained by putting $\mathscr W = \pairs(\XRF)$. The following result shows that these two
variants of \PROBLEM{$\mathscr W$-TC-F} can be solved in polynomial time. 

\begin{corollary}\label{cor:no-FR-extension-strong-TC}
The problems \PROBLEM{strong-TC-F} and \PROBLEM{LCA-TC-F} can be solved in polynomial time.
Moreover, let $(\RR,\FF)$ be a pair of triple sets and let $\mathscr
W\in\{\tsupp_\FF,\pairs(\XRF)\}$. Define $R$, $F$, and $Q_K$ as in Theorem~\ref{thm:char-W-TCF}. If
$(\RR,\FF)$ is a yes-instance of the corresponding $\mathscr W$-\PROBLEM{TC-F} problem, then the
canonical DAG $\GG_{Q_K}$ $\mathscr W$-agrees with $(\RR,\FF)$.
\end{corollary}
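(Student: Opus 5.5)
The plan is to reduce the corollary to Theorem~\ref{thm:char-W-TCF}, in the same way that Corollary~\ref{cor:no-FR-extension-strong-ATC} was derived from Theorem~\ref{thm:W-anchor-agree}. The polynomial-time claim needs nothing new. \PROBLEM{strong-TC-F} is $\mathscr W$-\PROBLEM{TC-F} with $\mathscr W=\tsupp_\FF$, and \PROBLEM{LCA-TC-F} is $\mathscr W$-\PROBLEM{TC-F} with $\mathscr W=\pairs(\XRF)$. Both sets can be computed in polynomial time in $|\XRF|$, since $|\FF|\in O(|\XRF|^3)$. The last sentence of Theorem~\ref{thm:char-W-TCF} then gives the polynomial bound.

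For the second part, fix $\mathscr W\in\{\tsupp_\FF,\pairs(\XRF)\}$, and let $R$, $F$ and $Q_K$ be as in Theorem~\ref{thm:char-W-TCF}. The key observation is that the $\FR$-extension of $\GG_{Q_K}$ is $\GG_{Q_K}$ itself.

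\begin{itemize}
\item By definition, $R$ contains $(ab,ab)$ for every $ab\in\mathscr W$. Hence $\mathscr W\subseteq\support_R\subseteq\support_R^+$.
\item Since $F=\{(xy,xy),(xz,xz),(yz,yz)\,:\,xy|z\in\FF\}$, we have $\support_F=\tsupp_\FF$.
\item In both cases $\tsupp_\FF\subseteq\mathscr W$. This is trivial for $\mathscr W=\tsupp_\FF$, and holds for $\mathscr W=\pairs(\XRF)$ because every triple in $\FF$ has its leaves in $\XRF$.
\item Therefore $\support_F\setminus\support_R^+=\emptyset$. So no $xy$-extension is applied in Definition~\ref{def:FRextension}, and the $\FR$-extension coincides with $\GG_{Q_K}$.
\end{itemize}

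Now suppose $(\RR,\FF)$ is a yes-instance of the corresponding $\mathscr W$-\PROBLEM{TC-F} problem. Then there is a DAG on $\XRF$ that $\mathscr W$-agrees with $(\RR,\FF)$, which is Statement~(1) of Theorem~\ref{thm:char-W-TCF}. The implication (1)$\Rightarrow$(2) of that theorem says the $\FR$-extension of $\GG_{Q_K}$ $\mathscr W$-agrees with $(\RR,\FF)$. By the observation above, this extension is $\GG_{Q_K}$, so $\GG_{Q_K}$ itself $\mathscr W$-agrees with $(\RR,\FF)$.

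I do not expect a real obstacle. The only point needing care is the identity $\support_F=\tsupp_\FF$, which holds because $F$ consists of reflexive pairs only, and checking that it lies inside $\support^+_R$ in both cases. As a sanity check, the proof of Theorem~\ref{thm:char-W-TCF} gives the same conclusion directly: with $\tsupp_\FF\subseteq\support^+_R$ we get $\FF^\circ=\FF$, and that proof already showed $\GG_{Q_K}$ $\mathscr W$-agrees with $(\RR,\FF^\circ)$.
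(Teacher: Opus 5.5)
Your proposal is correct and follows essentially the same route as the paper. Polynomial-time solvability is read off from Theorem~\ref{thm:char-W-TCF}, and for the second claim you observe that $\support_F=\tsupp_\FF\subseteq\mathscr W\subseteq\support_R^+$, so the $\FR$-extension of $\GG_{Q_K}$ is $\GG_{Q_K}$ itself, and then apply the implication (1)$\Rightarrow$(2) of that theorem.
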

\begin{proof}
The polynomial-time solvability of \PROBLEM{strong-TC-F} and \PROBLEM{LCA-TC-F} follows directly
from Theorem~\ref{thm:char-W-TCF} by taking $\mathscr W=\tsupp_\FF$ and $\mathscr W=\pairs(\XRF)$,
respectively.

It remains to prove the second assertion. Let $\RR$, $\FF$, $R$, $F$, and $Q_K$ be as stated. By
construction of $F$, we have $\support_F=\tsupp_\FF$. For both choices $\mathscr W\in
\{\tsupp_\FF,\pairs(\XRF)\}$, this implies $\support_F\subseteq\mathscr W$. Moreover, by definition
of $R$, we have $(ab,ab)\in R$ for every $ab\in\mathscr W$, and hence $\mathscr
W\subseteq\support_R^+$. Consequently, $\support_F\setminus\support_R^+=\emptyset$. Thus, the
$\FR$-extension of $\GG_{Q_K}$ used in Theorem~\ref{thm:char-W-TCF} applies no $ab$-extension and,
therefore, coincides with $\GG_{Q_K}$. Hence, if $(\RR,\FF)$ is a yes-instance of the corresponding
$\mathscr W$-\PROBLEM{TC-F} problem, Theorem~\ref{thm:char-W-TCF} implies that $\GG_{Q_K}$ $\mathscr
W$-agrees with $(\RR,\FF)$.
\end{proof}

Similar to the anchored triple case, we can now also prove that the \PROBLEM{strong-TC-F} and
\PROBLEM{LCA-TC-F} problem are equivalent by proving a slightly stronger statement.

\begin{proposition}\label{prop:strong_TC_F_equivalent_LCA_TC_F}
Let $(\RR,\FF)$ be a pair of triple sets, let $\mathscr W = \tsupp_\FF$ and $\mathscr{W}'
\subseteq\pairs(\XRF)$ such that $\mathscr{W} \subseteq \mathscr{W}'$. Then the following statements
are equivalent:
\begin{enumerate}
    \item There exists a DAG that $\mathscr W$-agrees with $(\RR,\FF)$.
    \item There exists a DAG that $\mathscr W'$-agrees with $(\RR,\FF)$.
\end{enumerate}
\end{proposition}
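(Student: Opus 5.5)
The direction (2)$\Rightarrow$(1) is immediate, since $\mathscr W\subseteq\mathscr W'$ and every DAG that $\mathscr W'$-agrees with $(\RR,\FF)$ also $\mathscr W$-agrees with it. For (1)$\Rightarrow$(2) I will reuse the construction from the proof of Proposition~\ref{prop:strong_ATC_F_equivalent_LCA_ATC_F} with no essential change. Let $G$ be a DAG that $\mathscr W$-agrees with $(\RR,\FF)$. By Theorem~\ref{thm:char-W-TCF} we may assume that $G$ is a phylogenetic network on $\XRF$, so $\LCA_G(xy)\neq\emptyset$ for all $xy\in\pairs(\XRF)$. Put
\[
\mathscr Z\coloneqq\{xy\in\mathscr W' \,:\, \lca_G(xy) \text{ is not well-defined}\}.
\]
Construct $H$ from $G$ by adding, for each $xy\in\mathscr Z$, a new vertex $u_{xy}$ together with the arcs $(u_{xy},x)$ and $(u_{xy},y)$, and the arcs $(v,u_{xy})$ for all $v\in\LCA_G(xy)$.

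The properties of $H$ follow exactly as in that proof. First, $H$ is a DAG, because each new vertex has only leaves as children. Second, $a\preceq_G b$ if and only if $a\preceq_H b$ for all $a,b\in V(G)$, since any path through $u_{xy}$ runs $v\to u_{xy}\to x$ (or $y$) with $v\in\LCA_G(xy)$, and this is already mirrored in $G$. Third, $\LCA_H(ab)=\LCA_G(ab)$ for all $ab\notin\mathscr Z$. This holds because a new vertex $u_{xy}$ has only the leaf descendants $x,y$, so it can be a common ancestor of $a,b$ only when $\{a,b\}\subseteq\{x,y\}$. The case $a=b$ is trivial, and the case $ab=xy$ is excluded since $xy\in\mathscr Z$. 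Fourth, $\lca_H(xy)=u_{xy}$ for every $xy\in\mathscr Z$, because every common ancestor $w\neq u_{xy}$ of $x$ and $y$ lies in $V(G)$, hence lies above some $v\in\LCA_G(xy)$, and $v$ is a parent of $u_{xy}$.

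It remains to check that $H$ still agrees with $(\RR,\FF)$. The key observation is that $\mathscr Z\cap\tsupp_\RR=\emptyset$ and $\mathscr Z\cap\tsupp_\FF=\emptyset$. The first holds because $G$ displays all triples in $\RR$, so their LCAs are well-defined. The second holds because $\mathscr W=\tsupp_\FF$ and $G$ $\mathscr W$-agrees with $(\RR,\FF)$. Consequently, for every triple $xy|z\in\RR\cup\FF$, the three LCA sets of $xy$, $xz$, $yz$ are identical in $G$ and $H$ and are singletons. Since the ancestor order on $V(G)$ is preserved, the display condition $\lca(xy)\prec\lca(xz)=\lca(yz)$ holds in $H$ exactly when it holds in $G$. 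Therefore $H$ displays every triple in $\RR$ and none in $\FF$. Finally, $\lca_H(ab)$ is well-defined for every $ab\in\mathscr W'$: for $ab\in\mathscr Z$ we have $\lca_H(ab)=u_{ab}$, and for the remaining $ab$ the LCA is unchanged from $G$.

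The main obstacle is the claim that the display status of the forbidden triples is unchanged. In the anchored setting a forbidden triple involves only two LCAs, whereas here the display condition also involves the equality $\lca(xz)=\lca(yz)$. I will handle this by showing that all three pairs lie outside $\mathscr Z$, which requires the choice $\mathscr W=\tsupp_\FF$ rather than $\atsupp$. With that, the LCA-invariance and the order-preservation transfer the full display condition, including the equality, from $G$ to $H$.
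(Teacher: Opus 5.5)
Your proposal is correct and takes the same route as the paper. Like the paper, you reuse the $u_{xy}$-construction from Proposition~\ref{prop:strong_ATC_F_equivalent_LCA_ATC_F} on a phylogenetic network $G$ obtained via Theorem~\ref{thm:char-W-TCF}, and you use $\mathscr W=\tsupp_\FF$ to ensure that no pair relevant to a triple in $\RR\cup\FF$ lies in $\mathscr Z$, so the full display condition, including the equality $\lca(xz)=\lca(yz)$, transfers unchanged from $G$ to $H$. Your checks of LCA-invariance and of $\lca_H(xy)=u_{xy}$ are somewhat more explicit than the paper's, which simply refers back to the anchored case.
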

\begin{proof}
Let $\RR$, $\FF$, $\mathscr{W}$, and $\mathscr{W}'$ be as stated. 
Clearly, Statement~(2) implies Statement~(1), since $\mathscr W\subseteq\mathscr W'$.

It remains to show that Statement~(1) implies Statement~(2). 
We use here the same ideas as for the proof of Proposition~\ref{prop:strong_ATC_F_equivalent_LCA_ATC_F}. 
Suppose that there exists a DAG $G$ that
$\mathscr W$-agrees with $(\RR,\FF)$. By Theorem~\ref{thm:char-W-TCF}, we may assume that this DAG
is a phylogenetic network. In particular, $\LCA_G(xy)\neq\emptyset$ for every $xy\in\pairs(\XRF)$.
We now construct a DAG $H$ that $\mathscr W'$-agrees with $(\RR,\FF)$. For that, put 
$\mathscr Z \coloneqq \{xy \in \mathscr W' \,:\, \lca_G(xy) \text{ is not well-defined}\}$. 
We then define the DAG $H$ by setting 
\[
V(H) \coloneqq V(G) \cup \{u_{xy} \,:\, xy \in  \mathscr Z\} 
\]
and 
\[
E(H) \coloneqq E(G) \cup \{(u_{xy},x),(u_{xy},y) \,:\, xy \in \mathscr Z\}
\cup \{(v,u_{xy}) \,:\, xy \in \mathscr Z \text{ and } v \in \LCA_G(xy)\}.  
\]
As outlined in the proof of Proposition~\ref{prop:strong_ATC_F_equivalent_LCA_ATC_F}, $H$ remains a
DAG and $a\preceq_G b$ if and only if $ a\preceq_H b$ for all $a,b\in V(G)$. Moreover, $\LCA_G(ab) =
\LCA_H(ab)$ holds for all $ab \in \pairs(\XRF) \setminus \mathscr Z$ and $\lca_H(xy)=u_{xy}$ for
each $xy \in \mathscr Z$. Since $G$ $\mathscr W$-agrees with $(\RR,\FF)$ and $\mathscr
W=\tsupp_\FF$, all LCAs relevant to the triples in $\RR$ and $\FF$ are well-defined in $G$ and
therefore, in $H$. In particular, the latter two arguments imply that $H$ $\mathscr W$-agrees with
$(\RR,\FF)$. This together with $\lca_H(xy) = u_{xy}$ being well-defined for each $xy \in \mathscr
Z$ and $\lca_G(xy) = \lca_H(xy)$ for all $xy \in \mathscr{W}' \setminus \mathscr{Z}$ implies that
$H$ $\mathscr W'$-agrees with $(\RR,\FF)$.
\end{proof}

Applying Proposition~\ref{prop:strong_TC_F_equivalent_LCA_TC_F} with
$\mathscr W=\tsupp_\FF$ and $\mathscr W'=\pairs(\XRF)$ yields the following result.
\begin{corollary}\label{cor:strong-TC-F-equivalent-LCA-TC-F}
The problems \PROBLEM{strong-TC-F} and \PROBLEM{LCA-TC-F} are equivalent.
\end{corollary}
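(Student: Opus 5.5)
This corollary follows from Proposition~\ref{prop:strong_TC_F_equivalent_LCA_TC_F}. The only work is to unfold what ``equivalent'' means for the two problems and to account for both the DAG and the phylogenetic network readings of the question.

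First, fix an input pair $(\RR,\FF)$ of triple sets. \PROBLEM{strong-TC-F} asks for a DAG or network on $\XRF$ that $\mathscr W$-agrees with $(\RR,\FF)$ for $\mathscr W=\tsupp_\FF$. Indeed, requiring $\lca_G(xy)$, $\lca_G(xz)$, $\lca_G(yz)$ to be well-defined for all $xy|z\in\FF$ says precisely that $\lca_G(ab)$ is well-defined for every $ab\in\tsupp_\FF$. Likewise, \PROBLEM{LCA-TC-F} asks for $\mathscr W'$-agreement with $\mathscr W'=\pairs(\XRF)$, since the 2-lca-property means that $\lca_G(ab)$ is well-defined for all $ab\in\pairs(\XRF)$. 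We have $\tsupp_\FF\subseteq\pairs(\XRF)$, because every triple in $\FF$ has its leaves in $\XRF$. Hence the hypothesis $\mathscr W\subseteq\mathscr W'$ of Proposition~\ref{prop:strong_TC_F_equivalent_LCA_TC_F} is met, and the proposition gives: a DAG that $\tsupp_\FF$-agrees with $(\RR,\FF)$ exists if and only if a DAG that $\pairs(\XRF)$-agrees with $(\RR,\FF)$ exists.

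Second, I would pass from DAGs to phylogenetic networks. Theorem~\ref{thm:char-W-TCF}, applied once with $\mathscr W=\tsupp_\FF$ and once with $\mathscr W=\pairs(\XRF)$, shows that for each choice the existence of a $\mathscr W$-agreeing DAG on $\XRF$ is equivalent to the existence of a $\mathscr W$-agreeing phylogenetic network on $\XRF$. Combining this with the first step, a fixed instance $(\RR,\FF)$ is a yes-instance of \PROBLEM{strong-TC-F} if and only if it is a yes-instance of \PROBLEM{LCA-TC-F}, in either the DAG or the network version.

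There is no real obstacle. The one point needing care is the leaf set: Proposition~\ref{prop:strong_TC_F_equivalent_LCA_TC_F} speaks of ``a DAG'' without naming the leaf set. Its construction adds only inner vertices $u_{xy}$ above existing leaves, so the leaf set $\XRF$ is preserved. Alternatively, Theorem~\ref{thm:char-W-TCF} produces the required graph on $\XRF$ directly, which closes this gap.
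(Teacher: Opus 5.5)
Your proof is correct and follows the paper's route: the corollary is obtained by applying Proposition~\ref{prop:strong_TC_F_equivalent_LCA_TC_F} with $\mathscr W=\tsupp_\FF$ and $\mathscr W'=\pairs(\XRF)$. Your appeal to Theorem~\ref{thm:char-W-TCF} for the phylogenetic-network reading, and your remark that the construction keeps the leaf set $\XRF$, only make explicit what the paper leaves implicit.
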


\section{Summary and Outlook}

In this work, we studied an alternative notion of displaying rooted triples in phylogenetic DAGs and
networks based on least common ancestor constraints. We also introduced anchored triples as a means
of inferring phylogenetic DAGs and networks. We considered several consistency problems involving
required and forbidden rooted or anchored triples and showed that all of them can be solved in
polynomial time. To this end, we constructed suitable relations of required and forbidden
LCA-constraints and analyzed the corresponding realizability problems. We further used
$xy$-extensions to ensure that two leaves $x$ and $y$ have multiple least common ancestors. This
allowed us to guarantee that certain forbidden rooted or anchored triples are not displayed,
respectively not \anchor-displayed. Conversely, we showed that the LCA of $x$ and $y$ can be forced
to be well-defined by adding the pair $(xy,xy)$ to the required relation. This leads to the problem
$\mathscr W$-\PROBLEM{TC-F}, which generalizes the four consistency problems considered for rooted
triples. Analogously, $\mathscr W$-\PROBLEM{ATC-F} generalizes the corresponding problems for
anchored triples. 
Our results are summarized in Table~\ref{tab:summary_results}. 

It would be interesting to determine whether the problem of constructing a phylogenetic network that
agrees with a pair of required and forbidden triple sets remains solvable in polynomial time when
the solution is restricted to a specific class of networks, such as binary or level-$k$
networks for fixed $k$. This question is particularly relevant because constructing a binary network
that topologically-displays all required triples and none of the forbidden ones is NP-hard
\cite{HHJS:06}.

Biologists may still wish to obtain a phylogenetic network even when no network agrees with the
given pair of required and forbidden triple sets. Suppose that $\RR$ is a triple set for
which the associated relation $R_\RR$ is not realizable. By construction, $R_\RR$ satisfies
\axiom{X1}, and hence \cite[Prop~28]{LAMSH:25} implies that the canonical DAG $\GG_{R_\RR}$ is
well-defined, although it does not realize $R_\RR$. This naturally raises the question of which
triples in $\RR$ are nevertheless displayed by $\GG_{R_\RR}$ and how many such triples can be
guaranteed to be displayed. Analogous questions arise when a set $\FF$ of forbidden triples is
prescribed as well.
More generally, this motivates the optimization problem of constructing a network that
maximizes the number of displayed required triples while simultaneously minimizing the number of
displayed forbidden triples. Similarly, suppose that $(\RR,\FF)$ is an instance for which an
agreeing phylogenetic network exists, but no such network can be required to have all pairwise LCAs
well-defined. One may then ask for an agreeing phylogenetic network that maximizes the number of
leaf pairs having a well-defined LCA. All of these optimization problems can naturally also be
studied for anchored triples.

Another commonly studied question in phylogenetic combinatorics is whether a phylogenetic network is
encoded by the triples it displays. For example, let $\mathscr T(N)$ denote the set of all triples
topologically-displayed by a network $N$. A class $\mathcal C$ of networks is encoded by $\mathscr
T$ if, for all $N,N'\in\mathcal C$, $ \mathscr T(N)=\mathscr T(N') $ implies $N\simeq N'$. Gambette
and Huber~\cite{Gambette2017} showed that topologically-displayed triples do not even encode level-1
networks. In the present paper, we considered a different, LCA-based notion of displaying triples.
By Lemma~\ref{lem:T1=>T2}, every triple displayed in the LCA sense is also topologically-displayed,
whereas the converse does not hold in general. Consequently, negative encoding results for
topologically-displayed triples do not automatically transfer to LCA-based displayed triples, since
$\mathscr{T}(N)=\mathscr{T}(N')$ does not automatically imply equality of the sets of displayed
triples. Nevertheless, displayed triples still do not encode level-1 networks. Indeed, consider the
network $N_2$ in Figure~\ref{fig:anchordisplay} and the star tree $T$ on $X=\{x,y,z\}$. Both are
level-1 networks, and neither displays any rooted triple in the LCA sense, but they are not
isomorphic. Hence level-1 networks are not encoded by the rooted triples they display. It remains an
interesting open question whether level-1 networks, or more generally which classes of networks, are
encoded by their \anchor-displayed anchored triples. Relatedly, it was recently shown that several
natural classes of networks are encoded by their full LCA relation $\rel_N$, including regular
level-1 networks, shortcut-free binary level-1 networks, and binary normal networks~\cite{HML:26}.
For several of these classes, even the restricted relation $\rel_N^3$, consisting only of
LCA-constraints of the form $(ab,ac)\in\rel_N$ for pairwise distinct $a,b,c\in X$, suffices to
encode them. Moreover, for a fixed leaf set $X$, the aforementioned classes
are encoded by the strict subrelation $\srel_N$ consisting of all pairs $(ab,xy)\in\rel_N$ for which
$\lca_N(ab)\prec_N\lca_N(xy)$. This relation is closely connected to the anchored-triple setting,
since an anchored triple $\underline{a}b|c$ corresponds precisely to the strict LCA-constraint $
\lca_N(ab)\prec_N\lca_N(ac)$, and hence to $(ab,ac)\in\srel_N$. However, $\srel_N$ may contain more
information than the set of anchored triples \anchor-displayed by $N$, because it also records
strict LCA-comparisons between arbitrary leaf pairs $ab$ and $xy$.

\renewcommand{\arraystretch}{1.5} 
\begin{table}[H]
    \centering
    {\footnotesize
    \begin{tabular}{p{0cm}p{1.7cm}p{1.4cm}p{5.9cm}lp{2.15cm}} \toprule 
        & Problem & Instance & Triple Set \& Relation & Phylogenetic DAG & Alg. / Thm. \\ \midrule 
        \multicolumn{5}{c}{\textit{Rooted Triple Problems}} \\ 
        \midrule
        & \PROBLEM{TC} 
        & $\RR$
        & $R_\RR \coloneqq \{(xy,xz),(xy,yz) \,:\, xy|z \in \RR\}$ \newline 
        $R' \coloneqq \{(xy,xz) \,:\, \underline{x}y|z \in \RR'\}$
        & $\GG_{R_\RR}$
        & Alg.~\ref{alg:ATC-F} \newline (input $(\RR',\emptyset)$) \\ 

        & \PROBLEM{TC-F} 
        & $(\RR,\FF)$
        & $R_\RR \coloneqq \{(xy,xz),(xy,yz) \,:\, xy|z \in \RR\}$ \newline 
          $Q_K \gets \textsc{Sat($R_{\RR},\FF$)}$ \newline 
        & $\FFRR$-extension of $\GG_{Q_K}$
        & Alg.~\ref{alg:TC-F-v2} \newline (input $(\RR,\FF)$)\\ 

        & $\mathscr W$-\PROBLEM{TC-F}
        & $(\RR,\FF)$, $\mathscr{W}$ 
        & $R\coloneqq \{(xy,xz),(xy,yz) \,:\, xy|z \in \RR\}$ \newline 
        $\textcolor{white}{R\coloneqq}\cup \{(ab,ab) \,:\, ab\in\mathscr W\}$ \newline 
        $F \coloneqq \{(xy,xy),(xz,xz),(yz,yz) \,:\, xy|z \in \FF\}$ \newline 
        $Q_K\gets$\textsc{Sat($R,\FF$)} 
        & $\FR$-extension of $\GG_{Q_K}$
        & Thm.~\ref{thm:char-W-TCF}\\ 

        & \PROBLEM{strong-TC-F}
        & $(\RR,\FF)$ 
        & $R\coloneqq \{(xy,xz),(xy,yz) \,:\, xy|z \in \RR\}$ \newline 
        $\textcolor{white}{R\coloneqq}\cup \{(ab,ab) \,:\, ab\in \tsupp_\FF\}$ \newline 
        $Q_K\gets$\textsc{Sat($R,\FF$)}  
        & $\GG_{Q_K}$
        & Cor.~\ref{cor:no-FR-extension-strong-TC} with \newline $\mathscr{W} = \tsupp_\FF$\\ 

        & \PROBLEM{LCA-TC-F} 
        & $(\RR,\FF)$ 
        & $R\coloneqq \{(xy,xz),(xy,yz) \,:\, xy|z \in \RR\}$ \newline 
        $\textcolor{white}{R\coloneqq}\cup \{(ab,ab) \,:\, ab\in \pairs(\XRF)\}$ \newline 
        $Q_K\gets$\textsc{Sat($R,\FF$)}  
        & $\GG_{Q_K}$
        & Cor.~\ref{cor:no-FR-extension-strong-TC} with \newline $\mathscr{W} = \pairs(\XRF)$ \\ 

        \midrule
        \multicolumn{5}{c}{\textit{Anchored Triple Problems}} \\ 
        \midrule
        & \PROBLEM{ATC}
        & $\RR$
        & $R \coloneqq \{(xy,xz) \, : \, \underline{x}y|z \in \RR\}$ 
        & $\GG_{R}$ 
        & Alg.~\ref{alg:ATC-F} \newline (input $(\RR,\emptyset)$)\\ 
        
        & \PROBLEM{ATC-F} 
        & $(\RR,\FF)$
        & $R \coloneqq \{(xy,xz) \, : \, \underline{x}y|z \in \RR\}$ \newline 
        $F \coloneqq \{(xy,xz) \, : \, \underline{x}y|z \in \FF\}$ 
        & $\FR$-extension of $\GG_{R,F}$ 
        & Alg.~\ref{alg:ATC-F} \newline (input $(\RR,\FF)$)\\   

        & $\mathscr W$-\PROBLEM{ATC-F} 
        & $(\RR,\FF)$, $\mathscr{W}$ 
        & $R \coloneqq \{(xy,xz) \,:\, \underline{x}y|z\in\RR\}$ \newline
        $\textcolor{white}{R \coloneqq} \cup \{(ab,ab) \,:\, ab\in\mathscr W\}$ \newline 
        $F \coloneqq \{(xy,xz) \,:\, \underline{x}y|z \in \FF\}$ 
        & $\FR$-extension of $\GG_{R,F}$
        & Thm.~\ref{thm:W-anchor-agree}\\ 

        & \PROBLEM{strong-ATC-F} 
        & $(\RR,\FF)$ 
        & $R \coloneqq \{(xy,xz) \,:\, \underline{x}y|z\in\RR\}$ \newline
        $\textcolor{white}{R \coloneqq} \cup \{(ab,ab) \,:\, ab\in \atsupp_\FF\}$
        \newline 
        $F \coloneqq \{(xy,xz) \,:\, \underline{x}y|z \in \FF\}$ 
        & $\GG_{R,F}$
        & Cor.~\ref{cor:no-FR-extension-strong-ATC}  with \newline $\mathscr{W} = \atsupp_\FF$ \\ 

        & \PROBLEM{LCA-ATC-F}
        & $(\RR,\FF)$ 
        & $R \coloneqq \{(xy,xz) \,:\, \underline{x}y|z\in\RR\}$ \newline 
        $\textcolor{white}{R \coloneqq}\cup \{(ab,ab) \,:\, ab\in \pairs(\XRF)\}$
        \newline 
        $F \coloneqq \{(xy,xz) \,:\, \underline{x}y|z \in \FF\}$ 
        & $\GG_{R,F}$
        & Cor.~\ref{cor:no-FR-extension-strong-ATC} with \newline $\mathscr{W} = \pairs(\XRF)$ \\ \bottomrule
    \end{tabular}}
    \caption{
		Summary of the consistency problems considered in this paper. The first block contains
        the rooted triple problems, where $\RR$ and $\FF$ are sets of rooted triples; the second
        block contains the anchored triple problems, where $\RR$ and $\FF$ are sets of anchored
        triples. The column ``Instance'' lists the input of the corresponding decision problem. The
        column ``Triple Set \& Relation'' specifies the auxiliary triple sets, required relations,
        forbidden relations, and saturated relations used to translate the problem into an
        LCA-constraint realization problem. The column ``Phylogenetic DAG'' gives the DAG
        constructed for yes-instances. By Lemma~\ref{lem:DAG2Network}, the corresponding
        phylogenetic network obtained from this DAG is also a valid solution. The final column gives
        the algorithm or theorem establishing the stated characterization and construction.
    }
     \label{tab:summary_results}
\end{table}

\section*{Data availability}

No datasets were generated or analysed during the current study.

\bibliographystyle{spbasic}
\bibliography{common}

\end{document}